\documentclass{amsart}
\usepackage[margin=1in]{geometry}

\usepackage{amssymb}
\usepackage{kpfonts}
\usepackage[T1]{fontenc}
\usepackage{multicol}
\usepackage{mathrsfs}
\usepackage[usenames,dvipsnames]{xcolor}
\usepackage{multicol}
\usepackage{amsthm}
\usepackage{float}
\usepackage{tikz}
\usepackage{tikz-cd}
\usetikzlibrary{arrows}
\usetikzlibrary{shapes.arrows}   
\usetikzlibrary{positioning}
\usetikzlibrary{decorations.pathreplacing,calligraphy,patterns}

\usepackage{subcaption}
\usepackage{chessfss}

\definecolor{sangria}{rgb}{0.57, 0.0, 0.04}
\definecolor{b-violet}{RGB}{215,0,113}
\definecolor{b-blue}{RGB}{0,53,170}
\definecolor{b-mauve}{RGB}{156,78,151}

\usepackage[backend=biber,hyperref,
style=alphabetic,
giveninits = true,
block=space]{biblatex}
\renewbibmacro{in:}{} 
\usepackage{hyperref}
\hypersetup{breaklinks=true,
	colorlinks=true,
	linkcolor=b-violet, 
	urlcolor=b-blue,
	citecolor=b-mauve,
}

\newtheorem{theorem}{Theorem}
\newtheorem{corollary}[theorem]{Corollary}
\newtheorem{lemma}[theorem]{Lemma}

\newtheorem{proposition}[theorem]{Proposition}

\theoremstyle{definition}
\newtheorem*{definition}{Definition}
\newtheorem*{remark}{Remark}

\newtheorem{mainres}{Main Result}

\newcommand{\psat}{\textsc{Planar 3Sat}}
\newcommand{\indSet}{\textsc{indSet}}
\newcommand{\indQueens}{\textsc{indQueens}}
\newcommand{\indRooks}{\textsc{indRooks}}

\usepackage[hyphens]{xurl}

\tikzset{var/.pic={
code={\tikzset{scale=\figscale,transform shape}
\begin{scope}[shift={(-1,-1)}]
\foreach \x/\y in {
	4/7, 3/5, 2/3, 4/3, 5/3, 3/2, 4/2, 4/1
} {
	\path [draw=brown!, fill=brown!30] (.5+\x-0.45, .5+\y-1.45)
	-- ++(0,.9)
	-- ++(.9,0)
	-- ++(0,-.9)
	--cycle;
}
\foreach \x/\y in {
	4/8, 3/6, 2/4, 1/2
} {
	\path [draw=gray!, fill=%
	gray!30
	] (.5+\x-0.45, .5+\y-1.45)
	-- ++(0,.9)
	-- ++(.9,0)
	-- ++(0,-.9)
	--cycle;
}
\foreach \x/\y in {
	5/8, 4/6, 3/4, 2/2
} {
	\path [draw=sangria!, fill=%
	sangria!30
	] (.5+\x-0.45, .5+\y-1.45)
	-- ++(0,.9)
	-- ++(.9,0)
	-- ++(0,-.9)
	--cycle;
}
\foreach \x/\y in {5/3, 4/1} {
	\node[anchor=center, scale=1.8] at (.5+\x, \y-.5) {\queen};
}
\end{scope}
}}}

\tikzset{
ext/.pic={
code={\tikzset{scale=\figscale,transform shape}

\foreach \x/\y in {
	0/1,0/2,1/2
} {
	\path [draw=brown!, fill=brown!30] (.5+\x-0.45, .5+\y-1.45)
	-- ++(0,.9)
	-- ++(.9,0)
	-- ++(0,-.9)
	--cycle;
}
}}}

\tikzset{
inverter/.pic={
code={\tikzset{scale=\figscale,transform shape}
		\begin{scope}[shift={(-2,-2)}]
		\foreach \x/\y in {
			4/13, 3/11, 2/9, 3/9, 4/9, 5/9, 6/9, 7/9, 4/7, 3/5, 2/3
		} {
			\path [draw=brown!, fill=brown!30] (.5+\x-0.45, .5+\y-1.45)
			-- ++(0,.9)
			-- ++(.9,0)
			-- ++(0,-.9)
			--cycle;
		}
		\foreach \x/\y in {
			5/14, 4/12, 3/10, 4/8, 3/6, 2/4
		} {
			\path [draw=brown,
			 fill=
			brown!30] (.5+\x-0.45, .5+\y-1.45)
			-- ++(0,.9)
			-- ++(.9,0)
			-- ++(0,-.9)
			--cycle;
		}
		\foreach \x/\y in {
			4/14, 3/12, 2/10, 5/8, 4/6, 3/4
		} {
			\path [draw=brown,
			 fill=
			brown!30] (.5+\x-0.45, .5+\y-1.45)
			-- ++(0,.9)
			-- ++(.9,0)
			-- ++(0,-.9)
			--cycle;
		}
	\end{scope}
}}}

\tikzset{
turner/.pic={
code={\tikzset{scale=\figscale,transform shape}
		\begin{scope}[shift={(-5,-2)}]
		\foreach \x/\y in {
			2/13, 3/11, 4/9, 5/9, 7/9, 7/7, 6/5, 5/3
		} {
			\path [draw=brown!, fill=brown!30] (.5+\x-0.45, .5+\y-1.45)
			-- ++(0,.9)
			-- ++(.9,0)
			-- ++(0,-.9)
			--cycle;
		}
		
		\foreach \x/\y in {
			2/14, 3/12, 4/10, 6/10, 5/8, 7/8, 6/6, 5/4
		} {
			\path [draw=brown,
			fill=
			brown!30] (.5+\x-0.45, .5+\y-1.45)
			-- ++(0,.9)
			-- ++(.9,0)
			-- ++(0,-.9)
			--cycle;
		}
		
		\foreach \x/\y in {
			1/14, 2/12, 3/10, 6/9, 4/8, 8/8, 7/6, 6/4
		} {
			\path [draw=brown,
			fill=
			brown!30] (.5+\x-0.45, .5+\y-1.45)
			-- ++(0,.9)
			-- ++(.9,0)
			-- ++(0,-.9)
			--cycle;
		}		
	\end{scope}
}}}

\tikzset{
splitter/.pic={
code={\tikzset{scale=\figscale,transform shape}
		\begin{scope}[shift={(-2,-2)}]
		\foreach \x/\y in {
			 2/3,  10/4, 11/4, 3/5, 9/5, 4/6, 7/6, 5/7, 6/8, 7/9, 8/11, 9/13
		} {
			\path [draw=brown!, fill=brown!30] (.5+\x-0.45, .5+\y-1.45)
			-- ++(0,.9)
			-- ++(.9,0)
			-- ++(0,-.9)
			--cycle;
		}
		
		\foreach \x/\y in {
			2/4, 3/6, 8/6, 6/7, 4/8, 7/10, 8/12, 9/14, 10/5, 12/4
		} {
			\path [draw=brown,
			fill=
				brown!30] (.5+\x-0.45, .5+\y-1.45)
			-- ++(0,.9)
			-- ++(.9,0)
			-- ++(0,-.9)
			--cycle;
		}
		
		\foreach \x/\y in {
			3/4, 6/6, 4/7, 7/8, 8/10, 9/12, 10/14, 10/4, 12/3, 8/5
		} {
			\path [draw=brown,
			fill=
			brown!30] (.5+\x-0.45, .5+\y-1.45)
			-- ++(0,.9)
			-- ++(.9,0)
			-- ++(0,-.9)
			--cycle;
		}
	\end{scope}
}}}

\tikzset{
	splittersmall/.pic={
		code={\tikzset{scale=\figscale,transform shape}
			\begin{scope}[shift={(-2,-2)}]
				\foreach \x/\y in {
					2/3,  10/4, 3/5, 9/5, 4/6, 7/6, 5/7, 6/8, 7/9, 8/11
				} {
					\path [draw=brown!, fill=brown!30] (.5+\x-0.45, .5+\y-1.45)
					-- ++(0,.9)
					-- ++(.9,0)
					-- ++(0,-.9)
					--cycle;
				}
				
				\foreach \x/\y in {
					2/4, 3/6, 8/6, 6/7, 4/8, 7/10, 8/12, 10/5
				} {
					\path [draw=brown,
					 fill=
					brown!30] (.5+\x-0.45, .5+\y-1.45)
					-- ++(0,.9)
					-- ++(.9,0)
					-- ++(0,-.9)
					--cycle;
				}
				
				\foreach \x/\y in {
					3/4, 6/6, 4/7, 7/8, 8/10, 9/12, 10/4, 8/5
				} {
					\path [draw=brown,
					fill=
					brown!30] (.5+\x-0.45, .5+\y-1.45)
					-- ++(0,.9)
					-- ++(.9,0)
					-- ++(0,-.9)
					--cycle;
				}
			\end{scope}
}}}
\tikzset{
	splitterverysmall/.pic={
		code={\tikzset{scale=\figscale,transform shape}
			\begin{scope}[shift={(-2,-2)}]
				\foreach \x/\y in {
					 10/4, 3/5, 9/5, 4/6, 7/6, 5/7, 6/8, 7/9, 8/11
				} {
					\path [draw=brown!, fill=brown!30] (.5+\x-0.45, .5+\y-1.45)
					-- ++(0,.9)
					-- ++(.9,0)
					-- ++(0,-.9)
					--cycle;
				}
				
				\foreach \x/\y in {
					 3/6, 8/6, 6/7, 4/8, 7/10, 8/12, 10/5
				} {
					\path [draw=brown,
					fill=
					brown!30] (.5+\x-0.45, .5+\y-1.45)
					-- ++(0,.9)
					-- ++(.9,0)
					-- ++(0,-.9)
					--cycle;
				}
				
				\foreach \x/\y in {
				 6/6, 4/7, 7/8, 8/10, 9/12, 10/4, 8/5
				} {
					\path [draw=brown,
					fill=
					brown!30] (.5+\x-0.45, .5+\y-1.45)
					-- ++(0,.9)
					-- ++(.9,0)
					-- ++(0,-.9)
					--cycle;
				}
			\end{scope}
}}}
\tikzset{
twoclause/.pic={
code={\tikzset{scale=\figscale,transform shape}
		\begin{scope}[shift={(-2,-2)}]
	\foreach \x/\y in {
		 2/11, 2/10, 3/10, 3/9, 3/8, 4/8, 4/7, 5/7, 6/7, 3/6, 4/6, 3/5, 2/4, 3/4, 2/3
		} {
			\path [draw=brown!, fill=brown!30] (.5+\x-0.45, .5+\y-1.45)
			-- ++(0,.9)
			-- ++(.9,0)
			-- ++(0,-.9)
			--cycle;
		}
		
		\foreach \x/\y in {6/7} {
			\node[anchor=center, scale=1.8] at (.5+\x, \y-.5) {\queen};
		}
		
	\end{scope}
}}}

\tikzset{
threeclause/.pic={
code={\tikzset{scale=\figscale,transform shape}
	\begin{scope}[shift={(-2,-3)}]
		\foreach \x/\y in {
			2/12,  12/12,
			2/11, 3/11, 11/11, 12/11,
			3/10, 11/10,
			3/9, 4/9, 6/9, 7/9, 8/9, 10/9, 11/9,
			4/8, 5/8, 6/8, 8/8, 9/8, 10/8,
			3/7, 4/7, 5/7, 9/7, 10/7,
			3/6, 10/6, 11/6, 12/6,
			2/5, 3/5, 11/5,
			2/4,  11/4, 12/4,
			9/3, 10/3, 12/3,
			10/2, 11/2, 12/2, 13/2,
			10/1
		} {
			\path [draw=brown!, fill=brown!30] (.5+\x-0.45, .5+\y-1.45)
			-- ++(0,.9)
			-- ++(.9,0)
			-- ++(0,-.9)
			--cycle;
		}
		
		\foreach \x/\y in { 12/6, 11/4, 9/3,  12/2, 10/1} {
			\node[anchor=center, scale=1.8] at (.5+\x, \y-.5) {\queen};
		}
		
		\end{scope}
}}}

\tikzset{
fourclause/.pic={
code={\tikzset{scale=\figscale,transform shape}
		\begin{scope}[shift={(-2,-2)}]
		\foreach \x/\y in {
			2/11, 12/11,
			2/10, 3/10, 11/10, 12/10,
			3/9, 11/9,
			3/8, 4/8, 6/8, 7/8, 8/8, 10/8, 11/8,
			4/7, 5/7, 6/7, 8/7, 9/7, 10/7,
			3/6, 4/6, 5/6, 9/6, 10/6, 11/6,
			3/5, 11/5,
			2/4, 3/4, 11/4, 12/4,
			2/3, 12/3
		} {
			\path [draw=brown!, fill=brown!30] (.5+\x-0.45, .5+\y-1.45)
			-- ++(0,.9)
			-- ++(.9,0)
			-- ++(0,-.9)
			--cycle;
		}
		\end{scope}
}}}

\tikzset{
	leftshift/.pic={
		code={\tikzset{scale=\figscale,transform shape}
			
			\foreach \x/\y in {
				0/0,
				0/1, 1/1, 3/1,
				1/2,2/2,3/2,4/2,
				0/3,1/3,2/3,4/3,
				1/4,
				1/5,2/5
			} {
				\path [draw=brown!, fill=brown!30] (.5+\x-0.45, .5+\y-1.45)
				-- ++(0,.9)
				-- ++(.9,0)
				-- ++(0,-.9)
				--cycle;
			}
}}}

\tikzset{
	forwardshift/.pic={
		code={\tikzset{scale=\figscale,transform shape}
			
			\foreach \x/\y in {
				1/0,
				1/1, 2/1,
				2/2,4/2,
				2/3,3/3,4/3,
				3/4,4/4,
				3/5,4/5,5/5,
				4/6,5/6,
				5/7,
				5/8,6/8,
				6/9,
				6/10,7/10,9/10,
				7/11,8/11,9/11,10/11,
				6/12,7/12,8/12,10/12,
				7/13,
				7/14,8/14
			} {
				\path [draw=brown!, fill=brown!30] (.5+\x-0.45, .5+\y-1.45)
				-- ++(0,.9)
				-- ++(.9,0)
				-- ++(0,-.9)
				--cycle;
			}
}}}

\tikzset{
	rightshift/.pic={
		code={\tikzset{scale=\figscale,transform shape}
			
			\foreach \x/\y in {
				0/0,1/0,
				1/1, 2/1, 3/1, 4/1,
				3/2,
				1/3,2/3,3/3,4/3,5/3,
				1/4,5/4,6/4
			} {
				\path [draw=brown!, fill=brown!30] (.5+\x-0.45, .5+\y-1.45)
				-- ++(0,.9)
				-- ++(.9,0)
				-- ++(0,-.9)
				--cycle;
			}
}}}

\begin{document}
	
	\title{Maximum independent 
		queen set on polyominoes is NP-complete}
	\author[A. Langlois-Rémillard]{Alexis Langlois-Rémillard}
	\address[
	A. Langlois-Rémillard]{Hausdorff Center for Mathematics, Endenicher Allee 60, 53115, Bonn, Germany}
	\email{langlois@uni-bonn.de; \href{https://alexisl-r.github.io/}{https://alexisl-r.github.io/};  \href{https://orcid.org/0000-0002-5919-8766}{ORCID:0000-0002-5919-8766}}
	\author[M. Müßig]{Mia M\"ußig}
	\address[M.  Müßig]{Ludwig Maximilian Universit\"at M\"unchen, Germany}
	\email{nienna@miamuessig.de; \href{https://orcid.org/0000-0003-0249-6831}{ORCID:0000-0003-0249-6831}}

	\begin{abstract}
		Finding a set of vertices in a graph with no edges between them, \indSet, is a well-known NP-complete problem. The queen graph of a chessboard is constructed by taking vertices as the tiles of the chessboard and drawing edges between two tiles if a queen can move from one to the other. We call \indQueens{} the independent set problem on a queen graph where the chessboard is a polyomino. We prove that \indQueens{} on polyominoes is NP-complete, proving a conjecture of Langlois-Rémillard--Mü\ss{}ig--Roldán. As our reduction is parsimonious, we can further prove that it is \#P-complete. We furthermore prove that \indRooks{} on polyominoes is \#P-complete, despite being solvable in polynomial time.
	\end{abstract}
		\let\MakeUppercase \relax 
	\begingroup 
	\maketitle
	\endgroup
	The independent set problem for graphs, \indSet{}, asks, given a graph $G$ and a number $k$, to find a set of $k$ vertices in $G$ with no edges between them. It is a famous NP-complete problem equivalent to finding a clique in a graph~\cite{Karp72}. Its inception can be traced back to the famous 8-queens problem, first posed by Max Bezzel in 1848~\cite{Bezzel48}: \emph{how many ways are there to place 8 queens on an $8\times 8$ chessboard so that no two queens attack each other?} For an account of the history, see~\cite{Campbell77}. We return to the chess origin of \indSet{} and consider a subset of graphs for which finding an independent set is equivalent to placing a number of non-attacking queens on generalised chessboards. Having this correspondence in mind, we will sometime call non-attacking queens ``independent'' queens.
	
	The first generalisation of the $8$-queens puzzle is to consider bigger boards, namely, the $n$-queens problem asking to place $n$ non-attacking queens on an $n\times n$ chessboard. It has been studied for a long time; with one of its first mathematical studies, Pauls' formulas for exhibiting solutions given any $n\times n$ chessboards, dating back to 1874~\cite{Pauls74}. Finding \emph{one} solution is thus a trivial endeavour at the level of complexity, since we have closed-form formul\ae{} for them. On the other hand, the completion problem, asking whether it is possible to place $n-k$ independent queens starting from a given position of $k$ queens on an $n\times n$ chessboard, has been shown to be NP-complete~\cite{GJN17}. Finally, counting the number of solutions is also much harder than finding one. It still provides interesting avenues of research in computer science and combinatorics~\cite{BS09}, and there have been exciting new developments on bounds in recent years~\cite{Luria17,LS22,BK21,GMS22,Simkin23,MITHardness25}.

	There have been other interesting generalisations of the $n$-queen problem. One can change the geometry of the board, for example, considering it on toroidal chessboards~\cite{Polya1918,BK21}, on other plane tessellations~\cite{TB2000}, by adding walls on the chessboard~\cite{Barnaby07}, or on higher-dimensional chessboards~\cite{Nudelman96,BS06,Kunt24,Ramani26}. 
	
	In this work, following the line of~\cite{AR21,LRMR25}, we instead focus on considering  polyomino chessboards, so finite connected subsets of the regular square tiling~\cite{Golomb54}. We are interested in the computational complexity of \indSet{} with those polyomino graphs.  
	
	In \cite{LRMR25}, the authors and Érika Roldán proved that the independent queen set problem and the independent rook set problem were NP-complete on polycubes~\cite[Theorems~1--2]{LRMR25}. The independent rook set problem on polyominoes was proven to be in P by Hannah Alpert and Érika Roldán~\cite[Theorem~12]{AR21}. Until now, it was open whether the independent queen set problem on polyominoes is NP-complete.
	
	\subsection*{Results}	
	We prove that the independent queen set problem on polyominoes is NP-complete, proving Conjecture~17 of~\cite{LRMR25}. Combined with~\cite[Thm.~1]{LRMR25}, we thus have that independent queen set on $d$-polycubes for $d\geq 2$ is NP-complete.
	
	\begin{mainres}[Theorem~\ref{mainres:NPQueenPolyo}]\label{mainres:NPQueenPoly}
		Independent queen set on polyominoes is NP-complete. 
	\end{mainres}
	
	This result is proven through a reduction to \psat. In fact, it is even a \emph{parsimonious reduction}, that is, a reduction that preserves the number of solutions. We will use this fact to prove two additional theorems of own interest on the \#P-completeness of independent queen and rook set problems.
	
	\begin{mainres}[Theorems~\ref{thm:queenCountP} and~\ref{thm:rookCountP}]\label{mainres:counting}
		Counting the number of maximum solutions of maximum independent rook and queen sets on polyominoes is \#P-complete.
	\end{mainres}

	It might be of interest to the reader how the specific gadgets were found. In contrast to~\cite{LRMR25}, they were not designed by hand but instead obtained from a computer search. Note that while a full description of what makes a good gadget is often very hard, we can still write down many concrete characteristics that can be checked programmatically. For example, we can test if a given polyomino with $m\leq 4$ tiles can be glued one to four times at a top right edge of another polyomino with $n\leq 8$ tiles, such that in all steps the constructed figure has exactly two maximum placements of non-attacking queens and those maximum queen placements differ in all parts of the figure. A human then only needs to choose from a small list of candidates for the base and extension gadgets.

	\subsection*{Structure of the paper}	
	In Section~\ref{sec:prelim}, we give the definitions of the problems we consider and the basic tools we will be using throughout. In Section~\ref{sec:reduction}, we construct the reduction to \psat{} through a set of lemmata constructing the gadgets. We then present results on the \#P-complexity of \indQueens{} and \indRooks{} in Section~\ref{sec:counting}. Finally, we present a complete example of the reduction in Appendix~\ref{sec:ex}.¸
	
	\subsection*{Disclosure of the use of AI} 
	The various brute-force implementations for generating the respective gadget candidates were written with the help of Gemini~3.5 Flash, and all verifications were done with human-written code calling CPLEX~22.1. In particular, the AI did not design the gadgets, only wrote the brute-force program to produce candidates. Gemini~3.5 Flash was additionally used to help visualise the final gadgets with TikZ. 
	Our web simulator~\cite{LRM26Software} implementing our solver was also produced with the assistance of Gemini~3.5 Flash and Claude Sonnet~4.6. No AI was used in the actual proof nor in the redaction.

	
	\section{Preliminaries}\label{sec:prelim}
	We first begin by defining the problem at hand. For this, we define the movement of a queen on a polyomino to be the same movement as a chess queen (horizontal, vertical and diagonal lines), not allowing jumping through holes. Figure~\ref{fig:mov_queen} gives an example.
	
	\begin{figure}[h]
		\begin{tikzpicture}[scale=.5]
		\foreach \x/\y in {-1/-1, 0/-1, 1/-1, 2/-1, 2/0, 0/1, -1/1, -2/1, 2/1, 0/2, 1/2, 2/2, -2/2,2/3, 0/3, -1/3, -2/3, 0/4, 1/4, -1/4, -2/4, 0/5, 1/5, 2/5} { 
			\path [draw=brown, thick, fill=brown!30] (.5+\x-0.45, .5+\y-0.45)
			-- ++(0,.9)
			-- ++(.9,0)
			-- ++(0,-.9)
			--cycle;
		}
		
		\foreach \x/\y in {0/1,  0/2,  0/4,  0/5,-1/3,-2/3,-1/4,1/4,2/5,1/2,2/1} 
		{
			\path [draw=teal, fill=teal!70, pattern=crosshatch, pattern color = teal] (.5+\x-0.45, .5+\y-0.45)
			-- ++(0,.9)
			-- ++(.9,0)
			-- ++(0,-.9)
			--cycle;
		}
		\draw[ultra thick] (-1,-1) -- (3,-1) -- (3,4) --(2,4)-- (2,3)-- (1,3) -- (1,4) - -(2,4) -- (2,5) -- (3,5) -- (3,6) -- (0,6) -- (0,5) -- (-2,5) -- (-2,1) -- (1,1) -- (1,2) -- (2,2) -- (2,0) -- (-1,0)--cycle ;
		\draw[ultra thick] (-1,2) rectangle (0,3);
		\draw[teal] (0.5, 3.5) node {\queen};
		\end{tikzpicture} \hspace{3cm}
		\begin{tikzpicture}[scale=.5]
		\foreach \x/\y in {-1/-1, 0/-1, 1/-1, 2/-1, 2/0, 0/1, -1/1, -2/1, 2/1, 0/2, 1/2, 2/2, -2/2,2/3, 0/3, -1/3, -2/3, 0/4, 1/4, -1/4, -2/4, 0/5, 1/5, 2/5} { 
			\path [draw=brown, thick, fill=brown!30] (.5+\x-0.45, .5+\y-0.45)
			-- ++(0,.9)
			-- ++(.9,0)
			-- ++(0,-.9)
			--cycle;
		}
		
		\foreach \x/\y in {0/1,  0/2,  0/4,  0/5,-1/3,-2/3} 
		{
			\path [draw=teal, fill=teal!70, pattern=crosshatch, pattern color = teal] (.5+\x-0.45, .5+\y-0.45)
			-- ++(0,.9)
			-- ++(.9,0)
			-- ++(0,-.9)
			--cycle;
		}
		\draw[ultra thick] (-1,-1) -- (3,-1) -- (3,4) --(2,4)-- (2,3)--  (1,3) -- (1,4) - -(2,4) -- (2,5) -- (3,5) -- (3,6) -- (0,6) -- (0,5) -- (-2,5) -- (-2,1) -- (1,1) -- (1,2) -- (2,2) -- (2,0) -- (-1,0)--cycle ;
		\draw[ultra thick] (-1,2) rectangle (0,3);
		\draw[teal] (0.5, 3.5) node {\rook};
		\end{tikzpicture}
		\caption{Left: A queen and the tiles (in teal and crosshatched) she can reach. Right: A rook and the tiles (in teal and crosshatched) it can reach.}\label{fig:mov_queen}
	\end{figure}
	
	\begin{definition}[\indQueens{} and \indRooks
		] 
		We say that an instance of the independent queen set problem for polyominoes is a pair $(P,m)^Q$, where $P$ is a polyomino and $m$ is a positive integer. The problem asks whether there exists a non-attacking configuration of $m$ queens placed on $P$. We denote it \indQueens. 
		
		Similarly, an instance of the independent rook set problem  is a pair $(P,m)^R$ asking whether there exists a non-attacking configuration of $m$ rooks on $P$, and we denote it \indRooks.
	\end{definition}
	
	The reduction we will use will be to a classical NP-complete problem.
	
	\begin{definition}[\psat~\cite{Lichtenstein82}]\label{def:psat}
		Given a set of Boolean variables $x_i$ that satisfy a set of clauses of the form $x_{i_1}\lor x_{i_2}$ or $x_{i_1}\lor x_{i_2} \lor x_{i_3}$, with each $x_{i_k}$ appearing being the literal $x_{i_k}$ or $\overline{x}_{i_k}$, we construct a bipartite graph with the two sets of vertices given by the variables and the clauses, and with the set of edges given by linking a clause $c$ and a variable $x$ if the clause $c$ contains either literals $x$ or $\overline{x}$. The problem of deciding whether there exists a truth assignment to the variables such that all clauses are satisfied is called $\textsc{3sat}$. If the bipartite graph constructed is planar, this problem is instead called \textsc{planar 3sat}. 
	\end{definition}
	
	An example of such an instance is given in Figure~\ref{fig:ex_psat}.

	\begin{figure}[h]
		\centering
		\begin{tikzcd}
		c_2=\overline{x}_1\lor \overline{x}_2 \arrow[r, dash] \arrow[dd,dash] & x_1 \arrow[d,dash] \arrow[r,dash] & c_4= x_1 \lor \overline{x}_3 \arrow[dd, dash]\\
		& c_1=x_1 \lor x_2 \lor x_3 \arrow[ld,dash] \arrow[dr,dash]& \\
		x_2 \arrow[r,dash] & c_3=\overline x_2\lor \overline x_3 & \arrow[l,dash] x_3
		\end{tikzcd}
		\caption{An instance $C$ of $\psat$ with three variables, $x_1$, $x_2$ and $x_3$, and four clauses, $c_1=x_1\lor x_2\lor x_3$, $c_2=\overline x_1\lor \overline x_2$, $c_3=\overline x_2\lor \overline x_3$ and $c_4=x_1\lor \overline x_3$. There are three solutions realising the instance: $(x_1,x_2,x_3) = (\mathsf{T},\mathsf F,\mathsf T),\ (\mathsf T,\mathsf F,\mathsf F),\ (\mathsf F,\mathsf T,\mathsf F)$.}\label{fig:ex_psat} 
	\end{figure}
	
	\begin{proposition}[\cite{Lichtenstein82}]\label{prop:psatNP}
		The problem \psat{} is NP-complete.
	\end{proposition}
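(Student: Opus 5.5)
The statement is Lichtenstein's theorem that \psat{} is NP-complete. Membership in NP is immediate: a truth assignment is a certificate of polynomial size, and checking that every clause contains a true literal takes linear time. Planarity of the variable--clause graph plays no role in verification.

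For hardness I would reduce from \textsc{3sat}, which is NP-complete by Cook--Levin together with the standard clause-splitting argument. Given an instance, draw its variable--clause bipartite graph in the plane with straight edges, placing vertices in general position so that only pairwise edge crossings occur; there are polynomially many of them. The plan is to eliminate each crossing with a \emph{crossover gadget}, which is a planar formula. Suppose the edge carrying variable $a$ crosses the edge carrying variable $b$. Introduce fresh copies $a'$ and $b'$ on the far side of the crossing, and build a small planar collection of clauses with auxiliary variables that forces $a'\leftrightarrow a$ and $b'\leftrightarrow b$. The fixed local embedding must keep the wire for $a$ and the wire for $b$ on opposite sides.

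Lichtenstein's gadget does this by computing $a\oplus b$ through a planar arrangement of equivalences and $\wedge$-gates, then recovering $a' = (a\oplus b)\oplus b$ and $b' = (a\oplus b)\oplus a$. Each of these is expressed by clauses of size at most $3$, using the usual equivalence trick $x\leftrightarrow y$ given by $(x\vee\bar y)\wedge(\bar x\vee y)$, with gates introduced by Tseitin-style variables.

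With the crossovers in place, I would replace each crossed edge segment by a chain of equivalent copies and then verify three things:
\begin{enumerate}
\item The resulting formula is satisfiable if and only if the original one is. Every auxiliary variable is functionally determined by the originals, so satisfying assignments extend uniquely; this also makes the reduction parsimonious.
\item The resulting bipartite graph is planar, by inspecting the gadget's embedding and gluing the gadgets along the original drawing.
\item The size is polynomial, since the blow-up is $O(\#\text{crossings})$, which is $O(m^2)$ for $m$ edges.
\end{enumerate}
Lichtenstein additionally adds a cycle through all variables and keeps the drawing planar, but that is not needed for the statement as defined here.

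The main obstacle is designing the crossover gadget and certifying that its variable--clause graph is genuinely planar while enforcing exactly the two equivalences. My first attempt would be the XOR-based construction above, with each XOR expanded into four 3-clauses. I would check planarity by drawing the gadget explicitly, and correctness by a finite truth-table check over the $4$ values of $(a,b)$. Since this is a classical cited result, in the paper it suffices to invoke~\cite{Lichtenstein82}.
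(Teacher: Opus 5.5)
Your outer framework (NP membership, a general-position drawing, crossover gadgets plus equivalence chains, parsimony because auxiliaries are functionally determined) is sound. Your last sentence is in fact all the paper does: the proposition is stated with a citation to \cite{Lichtenstein82} and no proof.

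As a self-contained argument, however, your sketch breaks at the one gadget you make concrete. A parity constraint on $x,y,z$ has no CNF on those three variables other than the four full clauses. Indeed, a clause omitting one of the variables forbids at least two assignments differing only in that variable, and one of these always satisfies the parity constraint. So each of $x,y,z$ lies in all four clauses, and the incidence graph of a single XOR is $K_{3,4}\supseteq K_{3,3}$, which is not planar; the drawing check you propose would fail.

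Tseitin-style gates do not rescue this automatically. Each XOR piece must have its three terminals on a common face, since outside the piece they are connected through the rest of the crossover. For instance, encoding $z=(x\vee y)\wedge\overline{(x\wedge y)}$ with gate variables $g_1=x\wedge y$ and $g_2=x\vee y$ joins each of $x,y,z$ to both $g_1$ and $g_2$ by $2$-clauses. Attaching an outside vertex $w$ to the terminals then gives a $K_{3,3}$ between $\{x,y,z\}$ and $\{g_1,g_2,w\}$.

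So the step you yourself flag as the main obstacle is still open. One way to close it inside your three-XOR layout is the four-NAND circuit $n_1=\overline{x\wedge y}$, $n_2=\overline{x\wedge n_1}$, $n_3=\overline{y\wedge n_1}$, $z=\overline{n_2\wedge n_3}$, with each gate $g=\overline{p\wedge q}$ written as $(g\vee p)\wedge(g\vee q)\wedge(\bar g\vee\bar p\vee\bar q)$. This has three properties you need:
\begin{itemize}
\item its incidence graph is planar with $x,y,z$ on the outer face;
\item the auxiliaries are determined by the inputs, so parsimony survives;
\item parity is symmetric in its three arguments, so terminals can be relabelled to match whatever cyclic order the layout needs.
\end{itemize}
Note also that the three-XOR construction is the usual planar-circuit crossover, not the box in Lichtenstein's paper. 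As far as I recall, his box is built from auxiliary variables equivalent to conjunctions of adjacent boundary literals. Either gadget works once its planarity and the forcing of $a'=a$, $b'=b$ are actually verified.
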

	
	A final key tool for the proof of the main theorem is the so-called visibility graph associated with a planar graph~\cite{DHLVM83}, which enables us to transfer a planar graph on a grid representation~\cite{Biedl13}. An example of a grid representation is shown in Figure~\ref{fig:path_graph-psat}. This is also a key idea of~\cite{AR21,LRMR25}.

	\begin{proposition}[{\cite[Theorem~4]{Biedl13}}]\label{prop:gridpath}
		It is possible to obtain a polynomial size grid path graph of a planar graph in polynomial time.
	\end{proposition}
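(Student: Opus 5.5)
The plan is to pass through a \emph{visibility representation} of the planar graph and then discretise it onto the integer grid. Recall that a visibility representation of a planar graph $G=(V,E)$ assigns to each vertex $v$ a horizontal segment (a ``bar'') $B_v$ and to each edge $uv$ a vertical segment joining $B_u$ and $B_v$ that meets no other bar, with all these vertical segments pairwise disjoint. By~\cite{DHLVM83}, and the constructive versions due to Rosenstiehl--Tarjan and Tamassia--Tollis, such a representation exists for every planar graph. It can be computed in linear time with all bar endpoints and edge abscissae at integer coordinates in an $O(|V|)\times O(|V|)$ box.

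I would carry out the following steps in order.

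\textbf{Step 1 (biconnectivity).} First I make $G$ biconnected by adding dummy edges while preserving planarity; this is done in linear time by fixing a planar embedding (Hopcroft--Tarjan) and connecting consecutive vertices around each cut vertex.

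\textbf{Step 2 (visibility representation).} Next I pick an edge $st$ and compute an $st$-numbering. I then orient the embedding bottom-to-top. The $y$-coordinate of $B_v$ is its $st$-number, and the $x$-extent of $B_v$ is given by the interval of faces of the dual $st$-graph between its leftmost and rightmost incident faces, the faces being ordered by longest paths in the dual. Each edge $uv$ is drawn at the abscissa of the face immediately to its left. This yields integer coordinates of polynomial size, and the dummy edges are then simply discarded.

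\textbf{Step 3 (discretisation).} I scale all coordinates by a constant factor, say $3$. Each bar then becomes a horizontal path of unit grid cells, and each vertical edge segment becomes a vertical path of cells attaching to the two bars it joins. The scaling guarantees that two grid paths corresponding to distinct, non-incident objects are at grid distance at least $2$. Hence the resulting subset of $\mathbb{Z}^2$ induces exactly the intended adjacencies: it is a subdivision-type grid path graph in which vertices are horizontal paths and edges are vertical paths. If needed, each vertex bar is shrunk to a single representative cell with the remainder of the bar treated as part of the incident edge paths.

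The total size is $O(|V|^2)$ cells, and every step runs in polynomial time, which gives the proposition.

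\textbf{Main obstacle.} The step I expect to be delicate is Step~3: ensuring that the induced grid graph has no spurious adjacencies. These can arise where a vertical edge path passes next to the end of an unrelated bar, or where two edges leave the same bar at neighbouring abscissae. Scaling by $3$ with edges placed at distinct integer abscissae, combined with the blank-visibility property that edge segments meet no other bar, should suffice. I would verify this by a short case analysis on pairs of objects: bar--bar, bar--edge and edge--edge.
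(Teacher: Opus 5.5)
Your proposal is correct and follows the route the paper has in mind: the paper gives no argument of its own, only citing Biedl's Theorem~4 with the visibility representation of~\cite{DHLVM83} as the underlying tool, and your Steps~1--3 unpack that construction. One small correction: in the Tamassia--Tollis drawing the edges do not all get distinct abscissae (the leftmost incoming and leftmost outgoing edges of a vertex both lie in the column at the left end of its bar, and all edges on the right boundary of a face share a column), so the vertical segments are interior-disjoint rather than pairwise disjoint; interior-disjointness, together with the fact that no edge meets a non-incident bar, is all your edge--edge and bar--edge cases need after scaling by~$3$.
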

	
	\begin{figure}[h]
		\centering
		\begin{tikzpicture}[scale=.52]
		\foreach \x/\y in {0/0,0/1,0/2,0/3,0/4,0/5,0/6,0/7,0/8,
			1/0,1/4,1/8,
			2/0,2/1,2/2,2/3,2/4,2/8,
			3/0,3/4,3/8,
			4/0,4/4,4/5,4/5,4/6,4/7,4/8,
			5/0,5/8,
			6/0,6/1,6/2,6/3,6/4,6/5,6/6,6/7,6/8
		} { 
			\path [draw=gray!, fill=brown!50] (.5+\x-0.45, .5+\y-1.45)
			-- ++(0,.9)
			-- ++(.9,0)
			-- ++(0,-.9)
			--cycle;
		}
		\foreach \x/\y in {0/0,4/4,0/8}
		{
			\path [draw=gray!, fill=teal!50] (.5+\x-0.45, .5+\y-1.45)
			-- ++(0,.9)
			-- ++(.9,0)
			-- ++(0,-.9)
			--cycle;
		}
		\foreach \x/\y in {0/4,2/2,4/6,6/4}
		{
			\path [draw=gray!, fill=purple!50] (.5+\x-0.45, .5+\y-1.45)
			-- ++(0,.9)
			-- ++(.9,0)
			-- ++(0,-.9)
			--cycle;
		}
		\node[anchor=west] at (.5+0-0.45, 0-.52) {$x_1$};
		\node[anchor=west] at (.5+4-0.45, 4-.52) {$x_2$};
		\node[anchor=west] at (.5+0-0.45, 8-.52) {$x_3$};
		\node[anchor=west] at (.5+0-0.45, 4-.52) {$c_1$};
		\node[anchor=west] at (.5+2-0.45, 2-.52) {$c_2$};
		\node[anchor=west] at (.5+4-0.45, 6-.52) {$c_3$};
		\node[anchor=west] at (.5+6-0.45, 4-.52) {$c_4$};
		\end{tikzpicture}
		\caption{The grid path graph of the instance $C$ of Figure~\ref{fig:ex_psat}. The clauses are in purple, the variables are in teal and the connection squares are in brown.}
		\label{fig:path_graph-psat}
	\end{figure}
	
	\section{Maximum independent queens is NP-complete on polyominoes}
	\label{sec:reduction}
	
	The goal of this section is to prove Main Result~\ref{mainres:NPQueenPoly}. 
	
	\begin{theorem}[{\cite[Conjecture~17]{LRMR25}}]\label{mainres:NPQueenPolyo}
		The problem \indQueens{} is NP-complete.
	\end{theorem}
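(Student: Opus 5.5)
Membership in NP is immediate: a certificate is a set of $m$ tiles of $P$. Checking that no two of them attack each other means tracing, for each pair, the horizontal, vertical or diagonal segment between them tile by tile, and confirming that the segment does not leave $P$. This takes time polynomial in $|P|$. For hardness we reduce from \psat{}, which is NP-complete by Proposition~\ref{prop:psatNP}.

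Given an instance $C$, we first use Proposition~\ref{prop:gridpath} to obtain in polynomial time a grid path graph of the variable--clause incidence graph, as in Figure~\ref{fig:path_graph-psat}. We then scale this drawing by a constant factor and replace each grid cell by a polyomino gadget. The gadgets are built around a \emph{base} variable gadget (the \texttt{var} picture) together with an \emph{extension} gadget that can be glued repeatedly to form wires. The key invariant for each gadget is the following.
\begin{itemize}
\item[(i)] The gadget has a known maximum number $q$ of independent queens.
\item[(ii)] It has exactly two maximum placements, which encode true and false.
\item[(iii)] In both placements, the queens occupy the tiles at the gluing interface in a prescribed pattern, so that the state propagates along the wire.
\end{itemize}
For routing we also need an inverter (to produce negated literals), a turner (for bends in the grid path), a splitter (a variable of degree $>1$ feeds several wires) and shift gadgets to align parities. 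Each must satisfy the same invariant: its maximum count is a fixed number, and its maximum configurations are in bijection with consistent input/output states.

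Next come the clause gadgets, for two- and three-literal clauses. For a clause gadget with input wires we require the following: if at least one incoming literal is true, the gadget admits exactly one maximum configuration with $q_c$ queens; if all incoming literals are false, every configuration has at most $q_c-1$ queens. Adding the constants over all gadgets, with corrections for the tiles shared at gluing points, gives a target $m$. This $m$ is attained if and only if every gadget is at its local maximum and all gadgets are consistent, which happens if and only if $C$ is satisfiable. Since each gadget has exactly one maximum configuration per consistent state, the reduction is even parsimonious. The construction has polynomial size because the grid is polynomial and each gadget has constant size.

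The main obstacle is the global \emph{non-interference} argument. Queens attack along unbounded lines through $P$, so a queen in one gadget could a priori see tiles in a distant gadget along a row, column or diagonal. I would first design the connectors, namely the thin staircase diagonals visible in the gadgets, so that every straight line leaving a gadget is cut by a hole within constant distance, except along the intended wire diagonal. Then attacks are confined to neighbouring gadgets and the interface tiles. After that, correctness of each gadget in isolation is a finite check, done by exhaustive computer verification of maximum placements. The remaining step is a local-to-global lemma: the maximum on the glued polyomino equals the sum of the local maxima minus the interface overlaps. I would prove it by partitioning $P$ into gadget regions and bounding the number of queens in each region by its local maximum under each boundary condition.
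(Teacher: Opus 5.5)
Your proposal is correct and follows essentially the paper's route. The ingredients match: polynomial-time verification; a reduction from \psat{} via the grid-path embedding of Proposition~\ref{prop:gridpath}; computer-verified base/extension, inverter, turner, splitter, shift and clause gadgets; and a target $n_C$ obtained by summing the gadget maxima, which also gives parsimony. Your explicit local-to-global step just spells out what Proposition~\ref{prop:InstanceSizePoly} asserts ``by the design of the gadgets'': restricting an independent set to each gadget region gives the upper bound, and attacks cannot cross holes, so only interface tiles interact. (One small correction: the slope-$2$ staircase wires contain no long queen line at all, not even along the wire.)
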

	
	The section will prove this theorem by a series of lemmata. First we show that verifying a solution to be done in polynomial time. Second we give a polynomial-size reduction of a known NP-complete problem, here \psat{}, to \indQueens.

	We will prove that the gadgets indeed provide a translation between the independent problem and SAT, and that the translation is of polynomial growth.
	
	\begin{lemma}\label{lem:Verif}
		The  independent queen set problem on polyominoes is verifiable in polynomial time.
	\end{lemma}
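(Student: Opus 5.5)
We will show that \indQueens{} lies in NP by exhibiting a polynomial-time verifier. The certificate for an instance $(P,m)^Q$ is a list of $m$ distinct tiles of $P$ on which the queens are placed. We first dispose of size considerations. A polyomino with $n$ tiles is given by the list of its $n$ tile coordinates. By translating, we may assume all coordinates lie in $\{0,\dots,n-1\}^2$, so the input has size $O(n\log n)$. If $m>n$ the answer is trivially no, so we may assume $m\le n$ and the certificate has polynomial size.

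The verifier proceeds in three steps. First, it checks that each of the $m$ certificate tiles belongs to $P$ and that they are pairwise distinct; sorting or a hash table makes this $O(n\log n)$. Second, for each pair of queens it decides whether they attack each other. This is the only step requiring care, because queens cannot jump across holes or leave $P$. Two queens at $(a,b)$ and $(c,d)$ can attack only if they share a row, a column, or a diagonal, that is, if $a=c$, $b=d$, or $|a-c|=|b-d|$. In that case the verifier walks tile by tile along the unique straight segment joining them and checks that every intermediate tile lies in $P$. By the movement rule of Figure~\ref{fig:mov_queen}, the queens attack each other exactly when all these tiles are present.

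Each walk visits at most $n$ tiles, and each tile costs at most $O(\log n)$ for a membership test. With $\binom{m}{2}\le n^2$ pairs, the total running time is $O(n^3\log n)$, which is polynomial. Finally, the verifier accepts if and only if no attacking pair is found. Other queens lying on the segment between two queens do not matter for correctness: if two queens had a third queen between them on the segment, the middle queen would itself attack each of them (all tiles along the segment are in $P$). Such a configuration is therefore rejected either way, so the pairwise test is exactly the non-attacking condition.

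The main obstacle is only to make the attack relation precise in the presence of holes, namely that a line of sight requires every intermediate tile to belong to $P$. Once this is fixed, everything else is routine bookkeeping. The same argument, restricted to rows and columns, verifies \indRooks{} as well.
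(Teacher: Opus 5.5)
Your proposal is correct and is essentially the paper's argument: the paper observes that a certificate must be a coclique (independent set) in the queen graph of $P$, and cocliques can be verified in polynomial time; you unpack this by testing the relevant adjacencies directly. Your version also spells out what the paper's one-line proof takes for granted: the polynomial bounds on the encoding and the certificate, and the fact that the attack relation with holes can be decided by walking along the connecting segment.
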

	\begin{proof}
		This is equivalent to the problem of finding a coclique on a graph, which is known to be verifiable polynomially; see, for example,~\cite{PapadimitriouSteiglitz98}.
	\end{proof}

	We now state five lemmas concerning gadgets. All the statements of Lemmata~\ref{lem:GadgetBase}--\ref{lem:Gadget3Clause} are verified using the solver~\cite{QandRsoftware}, and the readers can test for themselves with the user-friendly program~\cite{LRM26Software}.

	\begin{lemma}\label{lem:GadgetBase}
		The polyomino presented in Figure~\ref{fig:base-gadget} is a variable gadget, which encodes the two instances of the literal $x$.
	\end{lemma}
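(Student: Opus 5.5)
The claim that the polyomino of Figure~\ref{fig:base-gadget} is a variable gadget means three things. First, its maximum number of non-attacking queens is some fixed $k$. Second, there are exactly two maximum placements, which we read as $x=\mathsf T$ and $x=\mathsf F$. Third, in each of them the queens at the two ``exit'' tiles (the gray and sangria columns of the drawing, at the top right where extensions are glued) are in complementary states, so the gadget carries the literal $x$ on one side and $\overline{x}$ on the other.

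I would first fix coordinates and list the tiles. Reading the TikZ code, the base is the brown cells $(4,7),(3,5),(2,3),(4,3),(5,3),(3,2),(4,2),(4,1)$ together with two exit columns, gray and sangria. The example queen placement sits on $(5,3)$ and $(4,1)$. I would then build the queen attack graph explicitly. Two tiles are adjacent if they lie on a common row, column or diagonal and every intermediate tile on that line belongs to the polyomino, since jumping through holes is forbidden. Because the gadget is tiny (around sixteen cells), this graph can be written down by hand.

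The upper bound comes from a clique cover. I would partition the tiles into $k$ sets, each of which is pairwise mutually attacking. Natural candidates are short diagonal staircases such as $\{(2,3),(3,2)\}$ and $\{(3,5),(4,6)\}$, and full rows or columns such as $\{(4,3),(5,3),(2,3)\}$ if that row is contiguous. This gives $\alpha\le k$. For the lower bound I exhibit the two placements: the pictured one, containing $(5,3)$ and $(4,1)$, and its ``mirror'', which uses the gray column in place of the sangria one. I then check directly that each is independent.

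Uniqueness is the main obstacle. With a clique cover of size exactly $k$, every maximum placement must use exactly one tile from each clique. I would propagate constraints starting from the most constrained clique, typically the bottom cell $(4,1)$ versus its neighbours. Each choice in one clique forces choices in the next, and the chain should close into exactly two consistent branches, corresponding to the diagonal parity of the staircase. This is the step that is most error-prone by hand, and I would double-check it with the solver~\cite{QandRsoftware} as the paper indicates.

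Finally, I would verify robustness under gluing. When extension pieces (the \texttt{ext} pic) are attached at the top right exits, the cells of one exit column are occupied in the $\mathsf T$ placement and the cells of the other in the $\mathsf F$ placement. The maximum count must increase additively, with still exactly two maximum placements. I would argue this inductively on the number of extensions, since each extension adds a new clique to the cover and admits exactly one completion per branch.
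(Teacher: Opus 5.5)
Your route differs from the paper's. The paper's proof only records the solver-checked fact that the base gadget with $N$ extensions has exactly two maximum placements of $N+4$ queens, differing in whether $N+2$ of them lie on the gray \textsf{T} or the red \textsf{F} tiles. You instead propose a human-checkable certificate: a clique cover for the upper bound, two explicit placements, forced propagation for uniqueness, and induction on $N$. Your plan goes through, and it gives something the paper's argument does not: a proof for every $N$ at once, which is what ``any number of extensions'' actually requires. A solver run certifies only the finitely many instances it was given. The paper's approach, in turn, is mechanical and uniform across all gadgets, including the 3-clause gadget, where propagating by hand would be painful.

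To make your plan work, you need to correct how you read the figure (in the TikZ labels you use). The gray and red tiles are not two exit columns. There is one horizontally adjacent \textsf{T}/\textsf{F} pair per step of the staircase: $(1,2)/(2,2)$ in the bottom row, then $(2,4)/(3,4)$, $(3,6)/(4,6)$ and $(4,8)/(5,8)$. The brown cells $(3,5)$ and $(4,7)$ belong to the extensions, not the base. The two placements are the two truth values: the fixed queens $(4,1)$, $(5,3)$ plus all gray tiles, or plus all red tiles. The gadget does not output $x$ on one side and $\overline{x}$ on the other; negation is the inverter's job.

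Your candidate $\{(2,3),(4,3),(5,3)\}$ is not a clique, since $(3,3)$ is missing. A valid cover by $N+4$ cliques is $\{(4,2),(4,3),(5,3)\}$, $\{(4,1),(3,2),(2,3)\}$, $\{(1,2),(2,2)\}$, $\{(2,4),(3,4)\}$, plus each extension's L-tromino (three cells of a $2\times 2$ block, hence a clique). The propagation then runs as follows:
\begin{itemize}
\item $(4,2)$ attacks the pair $\{(1,2),(2,2)\}$ along its row.
\item $(4,3)$ attacks $(4,1)$ and $(3,2)$, which forces $(2,3)$, and $(2,3)$ attacks that pair.
\item So $(5,3)$ is forced. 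Then $(4,1)$ is forced too, because $(3,2)$ and $(2,3)$ both attack the pair.
\item A queen on the \textsf{T} (resp.\ \textsf{F}) tile of one step attacks the brown tile and the \textsf{F} (resp.\ \textsf{T}) tile of the next step. This leaves exactly the all-\textsf{T} and all-\textsf{F} completions.
\end{itemize}
In the induction, also check that new tiles never open a line between existing queens. For example, $(4,1)$ and $(4,8)$ share column $4$, but $(4,4)$ and $(4,5)$ are never added.
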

	\begin{proof}
		The base gadget of Figure~\ref{fig:base-gadget} with any number $N$ of extensions has two maximum placements of $N+4$ queens  differing by whether $N+2$ of them are on the gray `\textsf{T}' or the red `\textsf{F}' tiles.
	\end{proof}

	\begin{figure}[htbp]
		\begin{tabular}{cc}
			\begin{tabular}[b]{c}
				\begin{subfigure}[t]{.4\textwidth}
					\centering
					\begin{tikzpicture}[scale=.5,transform shape]
					\foreach \x/\y in {
						4/7, 3/5, 2/3, 4/3, 5/3, 3/2, 4/2, 4/1
					} {
						\path [draw=brown!, fill=brown!30] (.5+\x-0.45, .5+\y-1.45)
						-- ++(0,.9)
						-- ++(.9,0)
						-- ++(0,-.9)
						--cycle;
					}
					
					\foreach \x/\y in {
						4/8, 3/6, 2/4, 1/2
					} {
						\path [draw=gray!, fill=gray!30] (.5+\x-0.45, .5+\y-1.45)
						-- ++(0,.9)
						-- ++(.9,0)
						-- ++(0,-.9)
						--cycle;
						\node[anchor=center, scale=1.5] at (.5+\x, \y-.5) {\textbf{\textsf{T}}};
					}
					
					\foreach \x/\y in {
						5/8, 4/6, 3/4, 2/2
					} {
						\path [draw=sangria!, fill=sangria!30] (.5+\x-0.45, .5+\y-1.45)
						-- ++(0,.9)
						-- ++(.9,0)
						-- ++(0,-.9)
						--cycle;
						\node[anchor=center, scale=1.5] at (.5+\x, \y-.5) {\textbf{\textsf{F}}};
					}
					
					\foreach \x/\y in {5/3, 4/1} {
						\node[anchor=center, scale=1.8] at (.5+\x, \y-.5) {\queen};
					}
					
					\draw[line width=1.6pt, teal] (1,1) -- (4,1) -- (4,0) -- (5,0) -- (5,2) -- (6,2) -- (6,3) -- (4,3) -- (4,2) -- (1,2) -- cycle; 
					\draw[line width=1.6pt, teal] (2,2) -- (3,2) -- (3,3) -- (4,3) -- (4,4) -- (2,4) -- cycle; 
					\draw[line width=1.6pt, black] (3,4) -- (4,4) -- (4,5) -- (5,5) -- (5,6) -- (3,6) -- cycle; 
					\draw[line width=1.6pt, black] (4,6) -- (5,6) -- (5,7) -- (6,7) -- (6,8) -- (4,8) -- cycle; 
					\end{tikzpicture}
					\caption[Base gadget (highlighted in teal) with 2 extensions. The extensions L can go on.]{Base gadget with 2 extensions. Any number of extensions (Figure~\ref{fig:extension}) can be added.}
					\label{fig:base-gadget}
				\end{subfigure}\\
				\\
				\begin{subfigure}[b]{.4\textwidth}
					\centering
					\begin{tikzpicture}[scale=.5]
					\draw[ultra thick] (0,0) -- (1,0) -- (1,1) -- (2,1)--(2,2)--(0,2)--cycle;
					\foreach \x/\y in {
						0/1,0/2,1/2
					} {
						\path [draw=brown!, fill=brown!30] (.5+\x-0.45, .5+\y-1.45)
						-- ++(0,.9)
						-- ++(.9,0)
						-- ++(0,-.9)
						--cycle;
					}
					\end{tikzpicture}
					\caption{An extension.}\label{fig:extension}
				\end{subfigure}
			\end{tabular}
			& 
			\begin{subfigure}[b]{.45\textwidth}
				\centering
				\begin{tikzpicture}[scale=.5,transform shape]
				\foreach \x/\y in {
					4/13, 3/11, 2/9, 3/9, 4/9, 5/9, 6/9, 7/9, 4/7, 3/5, 2/3, 4/3, 5/3, 3/2, 4/2, 4/1
				} {
					\path [draw=brown!, fill=brown!30] (.5+\x-0.45, .5+\y-1.45)
					-- ++(0,.9)
					-- ++(.9,0)
					-- ++(0,-.9)
					--cycle;
				}
				
				\foreach \x/\y in {
					5/14, 4/12, 3/10, 4/8, 3/6, 2/4, 1/2
				} {
					\path [draw=gray!, fill=gray!30] (.5+\x-0.45, .5+\y-1.45)
					-- ++(0,.9)
					-- ++(.9,0)
					-- ++(0,-.9)
					--cycle;
					\node[anchor=center, scale=1.5] at (.5+\x, \y-.5) {\textbf{\textsf{T}}};
				}
				
				\foreach \x/\y in {
					4/14, 3/12, 2/10, 5/8, 4/6, 3/4, 2/2
				} {
					\path [draw=sangria!, fill=sangria!30] (.5+\x-0.45, .5+\y-1.45)
					-- ++(0,.9)
					-- ++(.9,0)
					-- ++(0,-.9)
					--cycle;
					\node[anchor=center, scale=1.5] at (.5+\x, \y-.5) {\textbf{\textsf{F}}};
				}
				
				\foreach \x/\y in {7/9, 5/3, 4/1} {
					\node[anchor=center, scale=1.8] at (.5+\x, \y-.5) {\queen};
				}
				
				\draw[line width=1.6pt, black] (1,1) -- (4,1) -- (4,0) -- (5,0) -- (5,2) -- (6,2) -- (6,3) -- (4,3) -- (4,2) -- (1,2) -- cycle; 
				\draw[line width=1.6pt, black] (2,2) -- (3,2) -- (3,3) -- (4,3) -- (4,4) -- (2,4) -- cycle; 
				\draw[line width=1.6pt, black] (3,4) -- (4,4) -- (4,5) -- (5,5) -- (5,6) -- (3,6) -- cycle; 
				\draw[line width=1.6pt, teal] (4,6) -- (5,6) -- (5,7) -- (6,7) -- (6,8) -- (4,8) -- cycle; 
				\draw[line width=1.6pt, teal] (3,8) -- (8,8) -- (8,9) -- (3,9) -- cycle; 
				\draw[line width=1.6pt, teal] (2,8) -- (3,8) -- (3,9) -- (4,9) -- (4,10) -- (2,10) -- cycle; 
				\draw[line width=1.6pt, black] (3,10) -- (4,10) -- (4,11) -- (5,11) -- (5,12) -- (3,12) -- cycle; 
				\draw[line width=1.6pt, black] (4,12) -- (5,12) -- (5,13) -- (6,13) -- (6,14) -- (4,14) -- cycle; 
				\end{tikzpicture}
				\caption{Inverter gadget in teal placed on a base gadget.}
				\label{fig:inverter-gadget}
			\end{subfigure}
		\end{tabular}
		\caption{Two gadgets for variable, encoding $x$ or $\overline{x}$, and an extension piece used to enlarge all gadgets. The inverter gadget of Figure~\ref{fig:inverter-gadget} can also be placed across any extension to change a variable to its inverse.}
	\end{figure}
	
	\begin{lemma}\label{lem:GadgetInverter}
		The gadget of Figure~\ref{fig:inverter-gadget} encodes the literal $\overline{x}$.
	\end{lemma}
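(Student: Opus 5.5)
The plan follows the pattern of the proof of Lemma~\ref{lem:GadgetBase}: I will describe the maximum placements of the augmented polyomino explicitly, and confirm them with the solver~\cite{QandRsoftware}. The inverter (teal in Figure~\ref{fig:inverter-gadget}) is inserted in place of one extension. It consists of three pieces: an extension-shaped piece, a long horizontal bar of six tiles ending in a forced-queen tile, and a second extension-shaped piece. The bar is offset so that the alternation of \textsf{T}/\textsf{F} tiles along the diagonal chain shifts by one column. Concretely, above the bar the \textsf{T}-labelled tiles occupy the positions that the \textsf{F} tiles would occupy in an uninverted chain, and vice versa.

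First I will count the queens. I claim that a base gadget with $N$ extensions plus one inverter admits exactly two maximum placements with $N+5$ queens. Every extension-shaped piece carries exactly one queen, the bar carries exactly one queen (on its end tile), and the base carries its two fixed queens. For the upper bound, I will partition the polyomino into row segments and diagonal segments, each of which holds at most one queen. Note that the bar is a single row segment, so it holds at most one queen. For the lower bound, I will exhibit the two placements.

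Second, and this is the main step, I will show that these are the only two maximum placements and that they have the swapped labelling. In the chain below the bar the argument of Lemma~\ref{lem:GadgetBase} applies verbatim: each extension's queen is on a \textsf{T} tile or on an \textsf{F} tile, uniformly along the chain. The bar's queen must then sit on its rightmost tile, since every other bar tile is attacked by the diagonal of whichever state the chain below is in. Propagating upward through the two inverter L-pieces, the queen placed in the piece just above the bar is forced onto the tile attacked in neither state. Because of the column offset, the forced tile carries the opposite label, so the chain above the inverter is in state $\overline{x}$ whenever the chain below is in state $x$.

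I expect the obstacle to be the uniqueness claim: ruling out placements that put the bar's queen elsewhere, or leave a piece empty while doubling elsewhere. By hand I would handle this with the segment-partition bound. Any deviation loses a queen in some segment that cannot be compensated, because all segments are already saturated. The exhaustive check with the solver~\cite{QandRsoftware} (reproducible in~\cite{LRM26Software}) certifies the two placements for small $N$. Since an extension only interacts with its immediate neighbours, induction on the number of extensions then extends the statement to all $N$.
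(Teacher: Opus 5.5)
Your overall plan (exhibit the two placements and certify them with the solver) matches the paper's proof. The by-hand certificate you add, however, fails at two points.

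Label the tiles of Figure~\ref{fig:inverter-gadget} by (column, row) as follows:
\begin{itemize}
\item the fixed base queens are $(4,1),(5,3)$;
\item the base row is $(1,2),\dots,(4,2)$, with \textsf{T}/\textsf{F} on $(1,2)/(2,2)$;
\item the L-pieces up to the bar are $\{(2,3),(2,4),(3,4)\}$, $\{(3,5),(3,6),(4,6)\}$ and $\{(4,7),(4,8),(5,8)\}$, each listed as (bottom, \textsf{T}, \textsf{F});
\item the teal bar is $(3,9),\dots,(7,9)$, and its row continues into $(2,9)$;
\item the L above the bar is $\{(2,9),(2,10),(3,10)\}$, listed as (bottom, \textsf{F}, \textsf{T}).
\end{itemize}

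\textbf{The upper bound.} No partition into row, column or diagonal segments achieves the bound $10$. In a cover with as many parts as queens, each part holds exactly one queen of each maximum placement. So the segment through $(3,5)$ must contain $\{(3,4),(3,6)\}$ or $\{(2,4),(4,6)\}$, and the one through $(4,7)$ must contain $\{(4,6),(4,8)\}$ or $\{(3,6),(5,8)\}$. Meanwhile $(1,2)$ can share a segment only with $(2,2)$ or $(3,4)$, and $(2,2)$ only with $(1,2)$ or $(2,4)$. Now check the two cases.
\begin{itemize}
\item If $(1,2)$ goes with $(3,4)$, then $(2,2)$ goes with $(2,4)$, and nothing is left for $(3,5)$.
\item If $(1,2)$ goes with $(2,2)$, then pairing $(2,4)$ with $(3,4)$ strands $(3,5)$. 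So $(2,4)$ must go with $(4,6)$ and $(3,4)$ with $(3,6)$, and then nothing is left for $(4,7)$.
\end{itemize}
The bound needs non-collinear cliques. Each L-tromino is one, since its tiles attack pairwise along a column, a row and a diagonal. The L's, the row $(3,9),\dots,(7,9)$, and the base parts $\{(4,1),(4,2),(4,3)\}$, $\{(1,2),(2,2),(3,2)\}$, $\{(5,3)\}$ form a tight cover by $10$ cliques. Separately, your itemised count omits the base queen on $(1,2)$ or $(2,2)$.

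\textbf{Forcing the bar queen.} The bar queen is not forced by propagation from below.
\begin{itemize}
\item In the \textsf{F} state (queen on $(5,8)$), the bar tile $(3,9)$ is attacked by nothing below. Column~3 and the down-left diagonal end at holes, and the down-right diagonal meets only $(4,8)$, which is empty in that state.
\item The tile $(2,9)$ is never attacked from below.
\end{itemize}
These tiles are excluded only by the piece above. A queen on $(3,9)$ attacks all of $(2,9),(2,10),(3,10)$. A queen on $(2,9)$ empties the clique $(3,9),\dots,(7,9)$. With the tight cover, either choice loses a queen, so the argument must be a two-sided saturation argument.

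Relatedly, ``an extension only interacts with its immediate neighbours'' is false next to the inverter. The tile $(3,6)$ sees $(6,9)$ along the diagonal through $(4,7),(5,8)$; this is what excludes $(6,9)$ in the \textsf{T} state. The \textsf{F} tile $(3,12)$ of the next extension sees $(3,9)$ along column~3. So the solver check serving as the base case of your induction must include at least two L-pieces on each side of the bar, as the figure does. Away from the inverter, a plain chain does have only nearest-neighbour interactions, so the induction then goes through.
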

	\begin{proof}
		The inverter gadget can be added to a base gadget and it will also have two maximum queen placements are on the  gray `\textsf{T}' or the red `\textsf{F}' tiles, but it inverts the role of gray `\textsf{T}' or red `\textsf{F}' tiles after it passes. Thus a base gadget with maximum queen placement related to value \texttt{T} will be read as an extension with maximum queen placement related to value \texttt{F}, and vice versa, as can be seen in Figure~\ref{fig:inverter-gadget}.
	\end{proof}
	\begin{lemma}\label{lem:GadgetTurning}
		The gadget of Figure~\ref{fig:diagonal-left-turner} enables us to reflect the signal. It has two maximum queen placements of 11 queens, differing by whether 9 queens occupy the gray `\textsf{T}' or the red `\textsf{F}' tiles.
	\end{lemma}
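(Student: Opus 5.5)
The turner gadget is a concrete polyomino with 24 tiles, so the statement reduces to a finite check. I plan to organise that check by hand and to match the style of Lemmata~\ref{lem:GadgetBase}--\ref{lem:GadgetInverter}. I read the tiles of the turner in three groups. The first is a list of plain connector tiles. The second is the chain of gray \textsf{T} tiles $(2,14),(3,12),(4,10),(6,10),(5,8),(7,8),(6,6),(5,4)$, together with one further \textsf{T} tile. The third is the chain of red \textsf{F} tiles $(1,14),(2,12),(3,10),(6,9),(4,8),(8,8),(7,6),(6,4)$. The gadget is a concatenation of L-shaped pieces like the extension of Figure~\ref{fig:extension}. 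In each piece the \textsf{T} cell and the \textsf{F} cell are adjacent in a row, so they attack each other. The pieces along the upper-left diagonal arm run in one diagonal direction, and those along the lower-right arm run in the reflected direction. The connecting region near rows $8$--$10$ performs the reflection.

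\emph{Upper bound of 11.} I will partition the 24 tiles into 11 cliques of the queen graph. Each clique lies in a single row, column or diagonal segment that is not interrupted by holes. Each \textsf{T}/\textsf{F} pair in a common row forms one clique, which accounts for 8 of them. The remaining connector tiles should group into 3 further cliques. Two of these are the $L$-pieces at the ends, which carry the queen not on a coloured tile. The third is at the turning junction, coming from the row-$9$ tiles $(4,9),(5,9),(6,9),(7,9)$ and their neighbours. Since at most one queen fits in a clique, any independent set has at most 11 queens.

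\emph{Exactly two maximum placements.} Every maximum placement has exactly one queen in each of the 11 cliques, so I will propagate constraints along the chain. I choose the queen in the first pair, on either the \textsf{T} or the \textsf{F} tile. That choice attacks one tile of the next pair along the arm's diagonal direction, which forces the next queen. This is exactly the mechanism of the base gadget and the extensions. Continuing along the chain, all 9 coloured-tile queens are forced to have the same colour. The two remaining queens sit on connector tiles, and these are forced as well. I will exhibit both placements explicitly and check that all pairs are non-attacking, taking care that attacks are blocked at holes.

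\emph{Main obstacle.} The junction is the step I expect to be delicate. There the diagonal direction reverses, and one must verify two things. First, the \textsf{T}-choice on one arm must propagate to the \textsf{T}-choice on the other arm rather than to \textsf{F}; this is what makes the gadget a reflection and not an inverter. Second, no stray long-range diagonal attack across the two arms may create or kill a placement. Because the instance is finite, I would first try a short case analysis on the junction tiles, since only a few of the 11 queens are not locally forced. I would then confirm the count with the solver~\cite{QandRsoftware}, as is done for the other gadget lemmata.
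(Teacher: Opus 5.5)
Your strategy, an explicit 11-clique partition for the upper bound followed by forcing clique by clique, is a genuinely different and more informative route than the paper's. The paper only reads the two placements off Figure~\ref{fig:diagonal-left-turner} and relies on the solver. However, the concrete cover you describe does not exist.

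The gadget of the lemma has 31 tiles, not 24. Besides the turner piece it contains the base tiles $(7,1),(4,2),(5,2),(6,2),(7,2),(7,3),(8,3)$. Consequently:
\begin{itemize}
\item both uncoloured queens, $(8,3)$ and $(7,1)$, sit in the base rather than one at each end;
\item there are nine \textsf{T}/\textsf{F} pairs, the ninth being $(4,2)/(5,2)$;
\item the pair $(6,10)/(6,9)$ is vertical, not in a row.
\end{itemize}
Your junction clique $\{(4,9),(5,9),(6,9),(7,9)\}$ reuses the \textsf{F} tile $(6,9)$. In fact no 11-clique cover can contain it. The \textsf{T} placement $(8,3),(7,1),(4,2),(5,4),(6,6),(7,8),(5,8),(6,10),(4,10),(3,12),(2,14)$ has no queen in row~9, so the remaining tiles already need 11 cliques.

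A cover that works consists of $\{(8,3)\}$, $\{(7,1),(7,2),(7,3)\}$ and $\{(4,2),(5,2),(6,2)\}$, together with eight three-cell pieces, each inside a $2\times 2$ block: $\{(5,3),(5,4),(6,4)\}$, $\{(6,5),(6,6),(7,6)\}$, $\{(7,7),(7,8),(8,8)\}$, $\{(4,8),(5,8),(5,9)\}$, $\{(6,9),(7,9),(6,10)\}$, $\{(4,9),(3,10),(4,10)\}$, $\{(3,11),(2,12),(3,12)\}$ and $\{(2,13),(1,14),(2,14)\}$. Note that the six-tile junction piece splits into two of them.

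Your forcing step also needs repair. Each three-cell clique has an uncoloured cell that must be excluded as well. Moreover, the two uncoloured queens are where propagation \emph{starts}, not where it ends:
\begin{itemize}
\item the singleton clique forces $(8,3)$, which attacks $(7,2)$ and $(7,3)$, forcing $(7,1)$;
\item $(7,1)$ attacks $(6,2)$ and, through it, $(5,3)$;
\item only then is $(4,2)$ versus $(5,2)$ a free choice, and each queen kills the other two cells of the next clique, up to $(7,8)$ resp.\ $(8,8)$.
\end{itemize}

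At the junction the linear order breaks, exactly where you expected trouble. Neither $(7,8)$ nor $(8,8)$ attacks any cell of $\{(4,8),(5,8),(5,9)\}$, since row~8 is cut at the missing $(6,8)$. You must resolve $\{(6,9),(7,9),(6,10)\}$ first:
\begin{itemize}
\item on the \textsf{T} side, $(7,8)$ kills $(7,9)$ and $(6,9)$, forcing $(6,10)$, which kills $(5,9)$ and $(4,8)$, forcing $(5,8)$;
\item on the \textsf{F} side, $(8,8)$ kills $(7,9)$ and $(6,10)$, forcing $(6,9)$, which kills $(5,9)$ and $(5,8)$, forcing $(4,8)$.
\end{itemize}
The chain then runs linearly to $(2,14)$ resp.\ $(1,14)$, so \textsf{T} exits as \textsf{T}. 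Finally, you must verify that both 11-sets are independent. This relies on missing cells such as $(6,8)$, $(5,10)$, $(6,7)$ and $(7,4)$ blocking long lines.

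With these corrections your plan becomes a complete hand proof of the lemma. The same clique-chain structure is also what one would use to justify the paper's remark that extensions can be appended.
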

	\begin{proof}
		The two maximum queen placements can be read of the tiles of Figure~\ref{fig:diagonal-left-turner}. Any number of extensions can be added without changing the fact that there will only be two solutions.
	\end{proof}

	\begin{figure}[htbp]
		
		\centering
		\begin{subfigure}{.45\textwidth}
			\centering 
			\begin{tikzpicture}[scale=.5,transform shape]
			\foreach \x/\y in {
				2/13, 3/11, 4/9, 5/9, 7/9, 7/7, 6/5, 5/3, 7/3, 6/2, 7/2
			} {
				\path [draw=brown!, fill=brown!30] (.5+\x-0.45, .5+\y-1.45)
				-- ++(0,.9)
				-- ++(.9,0)
				-- ++(0,-.9)
				--cycle;
			}
			
			\foreach \x/\y in {
				2/14, 3/12, 4/10, 6/10, 5/8, 7/8, 6/6, 5/4, 4/2
			} {
				\path [draw=gray!, fill=gray!30] (.5+\x-0.45, .5+\y-1.45)
				-- ++(0,.9)
				-- ++(.9,0)
				-- ++(0,-.9)
				--cycle;
				\node[anchor=center, scale=1.5] at (.5+\x, \y-.5) {\textbf{\textsf{T}}};
			}
			
			\foreach \x/\y in {
				1/14, 2/12, 3/10, 6/9, 4/8, 8/8, 7/6, 6/4, 5/2
			} {
				\path [draw=sangria!, fill=sangria!30] (.5+\x-0.45, .5+\y-1.45)
				-- ++(0,.9)
				-- ++(.9,0)
				-- ++(0,-.9)
				--cycle;
				\node[anchor=center, scale=1.5] at (.5+\x, \y-.5) {\textbf{\textsf{F}}};
			}
			
			\foreach \x/\y in {8/3, 7/1} {
				\path [draw=brown!, fill=brown!30] (.5+\x-0.45, .5+\y-1.45)
				-- ++(0,.9)
				-- ++(.9,0)
				-- ++(0,-.9)
				--cycle;
				\node[anchor=center, scale=1.8] at (.5+\x, \y-.5) {\queen};
			}
			
			\draw[line width=1.6pt, black] (4,1) -- (7,1) -- (7,0) -- (8,0) -- (8,2) -- (9,2) -- (9,3) -- (7,3) -- (7,2) -- (4,2) -- cycle; 
			\draw[line width=1.6pt, black] (5,2) -- (6,2) -- (6,3) -- (7,3) -- (7,4) -- (5,4) -- cycle; 
			\draw[line width=1.6pt, black] (6,4) -- (7,4) -- (7,5) -- (8,5) -- (8,6) -- (6,6) -- cycle; 
			\draw[line width=1.6pt, black] (2,11) -- (2,12) -- (4,12) -- (4,10) -- (3,10) -- (3,11) -- cycle; 
			\draw[line width=1.6pt, black] (1,13) -- (1,14) -- (3,14) -- (3,12) -- (2,12) -- (2,13) -- cycle; 
			\draw[line width=1.6pt, teal] (7,6) -- (8,6) -- (8,7) -- (9,7) -- (9,8) -- (7,8) -- cycle; 
			\draw[line width=1.6pt, teal] (4,7) -- (6,7) -- (6,8) -- (8,8) -- (8,9) -- (7,9) -- (7,10) -- (6,10) -- (6,9) -- (5,9) -- (5,8) -- (4,8) -- cycle;
			\draw[line width=1.6pt, teal] (3,9) -- (3,10) -- (5,10) -- (5,8) -- (4,8) -- (4,9) -- cycle; 
			\end{tikzpicture}
			\caption{Turning gadget on top of a base gadget.}
			\label{fig:diagonal-left-turner}
		\end{subfigure}
		\begin{subfigure}{.5\textwidth}
			\centering
			\begin{tikzpicture}[scale=.5,transform shape]
			\foreach \x/\y in {
				4/1, 5/3, 3/2, 4/2, 2/3, 4/3, 10/4, 11/4, 3/5, 9/5, 4/6, 7/6, 5/7, 6/8, 7/9, 8/11, 9/13
			} {
				\path [draw=brown!, fill=brown!30] (.5+\x-0.45, .5+\y-1.45)
				-- ++(0,.9)
				-- ++(.9,0)
				-- ++(0,-.9)
				--cycle;
			}
			
			\foreach \x/\y in {
				1/2, 2/4, 3/6, 8/6, 6/7, 4/8, 7/10, 8/12, 9/14, 10/5, 12/4
			} {
				\path [draw=gray!, fill=gray!30] (.5+\x-0.45, .5+\y-1.45)
				-- ++(0,.9)
				-- ++(.9,0)
				-- ++(0,-.9)
				--cycle;
				\node[anchor=center, scale=1.5] at (.5+\x, \y-.5) {\textbf{\textsf{T}}};
			}
			
			\foreach \x/\y in {
				2/2, 3/4, 6/6, 4/7, 7/8, 8/10, 9/12, 10/14, 10/4, 12/3, 8/5
			} {
				\path [draw=sangria!, fill=sangria!30] (.5+\x-0.45, .5+\y-1.45)
				-- ++(0,.9)
				-- ++(.9,0)
				-- ++(0,-.9)
				--cycle;
				\node[anchor=center, scale=1.5] at (.5+\x, \y-.5) {\textbf{\textsf{F}}};
			}
			
			\foreach \x/\y in {4/1, 5/3} {
				\node[anchor=center, scale=1.8] at (.5+\x, \y-.5) {\queen};
			}
			
			\draw[line width=1.6pt, black] (1,1) -- (4,1) -- (4,0) -- (5,0) -- (5,2) -- (6,2) -- (6,3) -- (4,3) -- (4,2) -- (1,2) -- cycle; 
			\draw[line width=1.6pt, black] (2,2) -- (3,2) -- (3,3) -- (4,3) -- (4,4) -- (2,4) -- cycle; 
			\draw[line width=1.6pt, black] (7,8) -- (8,8) -- (8,9) -- (9,9) -- (9,10) -- (7,10) -- cycle; 
			\draw[line width=1.6pt, black] (8,10) -- (9,10) -- (9,11) -- (10,11) -- (10,12) -- (8,12) -- cycle; 
			\draw[line width=1.6pt, black] (9,12) -- (10,12) -- (10,13) -- (11,13) -- (11,14) -- (9,14) -- cycle; 
			\draw[line width=1.6pt, black] (8,4) -- (9,4) -- (9,6) -- (7,6) -- (7,5) -- (8,5) -- cycle; 
			\draw[line width=1.6pt, black] (10,3) -- (11,3) -- (11,5) -- (9,5) -- (9,4) -- (10,4) -- cycle; 
			\draw[line width=1.6pt, black] (12,2) -- (13,2) -- (13,4) -- (11,4) -- (11,3) -- (12,3) -- cycle; 
			\draw[line width=1.6pt, teal] (3,4) -- (4,4) -- (4,5) -- (5,5) -- (5,6) -- (3,6) -- cycle; 
			\draw[line width=1.6pt, teal] (6,5) -- (7,5) -- (7,6) -- (7,7) -- (8,7) -- (8,8) -- (6,8) -- (6,7) -- (5,7) -- (5,8) -- (4,8) -- (4,6) -- (6,6) -- cycle; 
			\end{tikzpicture}
			\caption{Splitter gadget on top of a base gadget. It also allows right turn, and we achieve left turn by simply right turning 3 times.}
			\label{fig:splitter-gadget}
		\end{subfigure}
		\caption{The two gadgets are there to move a variable along. They are also used to turn and reflect in order to have all potential orientations. In both of them, any number of extensions can be added.}
	\end{figure}

	\begin{lemma}\label{lem:GadgetSplitter}
		The gadget of Figure~\ref{fig:splitter-gadget} has two maximum queen placements of 13 queens different in the placement of 11 queens on the red `\textsf{F}' or gray `\textsf{T}' tiles. It doubles the variable $x$.
	\end{lemma}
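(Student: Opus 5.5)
The statement concerns a finite, explicit polyomino, so I would prove it by a certified exact optimisation, complemented by a structural argument that explains \emph{why} only two optima exist. First, I would decompose the polyomino of Figure~\ref{fig:splitter-gadget} into its marked pieces. These are the base gadget of Lemma~\ref{lem:GadgetBase} (with its two fixed queens at the bottom), the central teal junction piece, and the L-shaped extension pieces of Figure~\ref{fig:extension} running along the two outgoing branches (up-right and right). Each L-shaped extension consists of three tiles in an L, one gray `\textsf{T}' and one red `\textsf{F}' among them. The first step is an upper bound: for each piece, I would check that every queen line within it meets at most a bounded number of tiles, so that a piece can host at most one or two independent queens. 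Summing these bounds over the pieces gives an upper bound of $13$ on any independent set. Concretely, I would cover the polyomino by $13$ cliques of the queen graph (rows, columns or diagonal segments contained in the polyomino). Each extension gives one clique consisting of its \textsf{T}, \textsf{F} and corner tiles, which are pairwise on a common row, column or diagonal. The base contributes its cliques as in Lemma~\ref{lem:GadgetBase}, and the teal junction contributes the rest.

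Second, I would exhibit the two placements of $13$ queens: the two fixed queens, together with all $11$ gray tiles in one placement and all $11$ red tiles in the other. To verify independence I would check, for the gray set and for the red set separately, that no two tiles share a row, column or diagonal segment lying entirely inside the polyomino. Holes block attacks, and this blocking is exactly what the staircase shape of the extensions exploits.

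Third, and this is the main obstacle, I need uniqueness: every maximum placement is one of these two. With a clique cover of size $13$, a maximum placement must take exactly one tile from each clique. I would then propagate constraints. The base clique cover forces a dichotomy at the first extension between \textsf{T} and \textsf{F}, as in Lemma~\ref{lem:GadgetBase}. At the junction, the choice made on the incoming branch attacks the opposite-colour tiles of both outgoing branches, forcing the same colour on both. Along each branch, a \textsf{T} queen in one extension attacks the \textsf{F} and corner tiles of the next extension diagonally, so the colour propagates. The delicate part is the teal junction piece, where corner tiles not coloured \textsf{T}/\textsf{F} might yield a third optimum. I would handle it by a finite case analysis of the few junction tiles, and in practice confirm the count by running the integer program solver~\cite{QandRsoftware} on the gadget with $0$ and $1$ extensions per branch. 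Finally, I would argue that adding further extensions preserves the structure inductively, as in Lemmas~\ref{lem:GadgetBase} and~\ref{lem:GadgetTurning}: each extension adds one clique and one forced queen of the propagated colour. This shows that the gadget outputs two copies of the same literal $x$.
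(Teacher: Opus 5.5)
\textbf{Gap: the upper bound as you set it up gives 14, not 13.} Your plan is a viable route and differs from the paper's: a 13-clique cover as certificate, the two explicit placements, then forced propagation. The paper's proof only reads the two placements off Figure~\ref{fig:splitter-gadget} and relies on the solver verification announced before Lemma~\ref{lem:GadgetBase}. The failure comes from a misreading of the gadget.

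Label tiles by the pairs $(x,y)$ of the TikZ source of the figure, so the fixed queens are $(4,1)$ and $(5,3)$.
\begin{itemize}
\item The teal L-piece $\{(3,5),(3,6),(4,6)\}$ contains the \textsf{T} tile $(3,6)$ but \emph{no} \textsf{F} tile; $(4,6)$ is plain.
\item The seven-tile teal junction contains three pairwise non-attacking \textsf{F} tiles $(6,6)$, $(4,7)$, $(7,8)$.
\item In the \textsf{F} placement this L-piece is empty and the junction carries three queens; in the \textsf{T} placement they carry one and two.
\end{itemize}
So however you assign the teal L-piece, the per-piece bounds add up to $14$: $3$ for the base core, $1$ for each of the eight L-pieces, $3$ for the junction. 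If you merge the teal L into the junction, that region has the four independent tiles $(3,5),(6,6),(4,7),(7,8)$, again giving $14$. Hence no clique cover whose cliques each stay inside one piece has only $13$ cliques. In particular, the cover you describe (one clique per L-piece, ``the rest'' from the junction) does not exist.

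\textbf{Repair: let the cliques straddle piece boundaries.} The following $13$ cliques cover all $38$ tiles: $\{(4,1),(3,2),(4,2),(4,3)\}$, $\{(5,3)\}$, $\{(1,2),(2,2),(2,3)\}$, $\{(2,4),(3,4),(3,5)\}$, $\{(3,6),(4,6),(4,7)\}$, $\{(4,8),(5,7),(6,6)\}$, $\{(6,7),(6,8),(7,8)\}$, and the six L-trominoes of the two branches. With this cover your propagation becomes a complete hand proof.
\begin{itemize}
\item The singleton $\{(5,3)\}$ forces $(5,3)$, which attacks $(4,2)$ and $(4,3)$. This forces $(4,1)$, since $(3,2)$ would attack all of $\{(1,2),(2,2),(2,3)\}$. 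That leaves $(1,2)$ or $(2,2)$.
\item The only delicate alternatives are the plain tiles $(3,5)$ and $(4,6)$. What excludes them is the seven-tile diagonal $(2,4),(3,5),\dots,(8,10)$, which your sketch never uses.
\item Choosing $(3,5)$ or $(4,6)$ forces $(6,6)$ and $(7,8)$. Then $(7,9)$ and $(8,10)$ are attacked along that diagonal and $(7,10)$ is attacked by $(7,8)$, so the clique $\{(7,9),(7,10),(8,10)\}$ has no free tile.
\item Every remaining clique is then forced in turn, giving exactly the two placements.
\end{itemize}
Note also that in the \textsf{T} case it is the queen on $(2,4)$, through this same diagonal, that fixes the colour of the up-right branch. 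It is not a junction queen, so your description of the junction propagation should be corrected accordingly.
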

	\begin{proof}
		The two maximum queen placements can be read of the tiles of Figure~\ref{fig:splitter-gadget}. Any number of extensions can be added without changing the fact that there will only be two solutions.
	\end{proof}

	Lemmata~\ref{lem:GadgetBase}--\ref{lem:GadgetSplitter} are meant to connect gadgets representing literal to clauses of an instance of $\psat$. Lemmata~\ref{lem:Gadget2Clause}--\ref{lem:Gadget3Clause} explain the clause gadgets. We show the correspondence of maximum queen placement with the realisation of a 2-clause for Lemma~\ref{lem:Gadget2Clause}, and leave out the 3-clause to the Appendix. (We recall that the applet~\cite{LRM26Software} contains all the gadgets as presets for testing.)
	
Before stating the lemma, we wish to mention that, since \psat{} is still NP-complete even when restricted to exactly 3-clauses~\cite[Lemma~1]{Mansfield83}, we could dispose of the 2-clause gadgets. They are, however, quite simple and elegant, and we prefer to reduce to the full \psat{}.
	
	\begin{figure}[htbp]
		\centering
		\begin{tikzpicture}[scale=.5,transform shape, baseline={(0,0)}]
		\foreach \x/\y in {
			3/9, 3/8, 4/8, 4/7, 5/7, 6/7, 3/6, 4/6, 3/5
		} {
			\path [draw=brown!, fill=brown!30] (.5+\x-0.45, .5+\y-1.45)
			-- ++(0,.9)
			-- ++(.9,0)
			-- ++(0,-.9)
			--cycle;
		}
		
		\foreach \x/\y in {6/7} {
			\node[anchor=center, scale=1.8] at (.5+\x, \y-.5) {\queen};
		}
		
		\draw[line width=1.6pt, black] (3,4) -- (4,4) -- (4,5) -- (5,5) -- (5,6) -- (3,6) -- cycle; 
		\draw[line width=1.6pt, black] (4,6) -- (7,6) -- (7,7) -- (4,7) -- cycle; 
		\draw[line width=1.6pt, black] (3,7) -- (5,7) -- (5,8) -- (4,8) -- (4,9) -- (3,9) -- cycle; 
		\end{tikzpicture}
		\qquad 
		\begin{tikzpicture}[scale=.5,transform shape]
		\foreach \x/\y in {
			4/13, 1/12, 2/12, 3/12, 4/12, 2/11, 4/11, 5/11, 2/10, 3/10, 3/9, 3/8, 4/8, 4/7, 5/7, 6/7, 3/6, 4/6, 3/5, 2/4, 3/4, 2/3, 4/3, 5/3, 1/2, 2/2, 3/2, 4/2, 4/1
		} {
			\path [draw=brown!, fill=brown!30] (.5+\x-0.45, .5+\y-1.45)
			-- ++(0,.9)
			-- ++(.9,0)
			-- ++(0,-.9)
			--cycle;
		}
		\foreach \x/\y in {
			1/2,1/12
		} {
			\path [draw=gray!, fill=gray!30] (.5+\x-0.45, .5+\y-1.45)
			-- ++(0,.9)
			-- ++(.9,0)
			-- ++(0,-.9)
			--cycle;
			\node[anchor=center, scale=1.5] at (.5+\x, \y-.5) {\textbf{\textsf{T}}};
		}
		
		\foreach \x/\y in {
			2/2,2/12
		} {
			\path [draw=sangria!, fill=sangria!30] (.5+\x-0.45, .5+\y-1.45)
			-- ++(0,.9)
			-- ++(.9,0)
			-- ++(0,-.9)
			--cycle;
			\node[anchor=center, scale=1.5] at (.5+\x, \y-.5) {\textbf{\textsf{F}}};
		}
		
		\foreach \x/\y in {4/13, 5/11, 6/7, 5/3, 4/1} {
			\node[anchor=center, scale=1.8] at (.5+\x, \y-.5) {\queen};
		}
		
		\draw[line width=1.6pt, black] (1,1) -- (4,1) -- (4,0) -- (5,0) -- (5,2) -- (6,2) -- (6,3) -- (4,3) -- (4,2) -- (1,2) -- cycle; 
		\draw[line width=1.6pt, black] (2,2) -- (3,2) -- (3,3) -- (4,3) -- (4,4) -- (2,4) -- cycle; 
		\draw[line width=1.6pt, black] (3,4) -- (4,4) -- (4,5) -- (5,5) -- (5,6) -- (3,6) -- cycle; 
		\draw[line width=1.6pt, black] (4,6) -- (7,6) -- (7,7) -- (4,7) -- cycle; 
		\draw[line width=1.6pt, black] (3,7) -- (5,7) -- (5,8) -- (4,8) -- (4,9) -- (3,9) -- cycle; 
		\draw[line width=1.6pt, black] (2,9) -- (4,9) -- (4,10) -- (3,10) -- (3,11) -- (2,11) -- cycle; 
		\draw[line width=1.6pt, black] (1,11) -- (4,11) -- (4,10) -- (6,10) -- (6,11) -- (5,11) -- (5,13) -- (4,13) -- (4,12) -- (1,12) -- cycle; 
		
		\node (lblA) at (-1.5, 2) {\textbf{\textsf{Base A}}};
		\draw[->, >=stealth, thick, line width=1.2pt] (lblA) to[out=0, in=180] (0.5, 1.5);
		
		\node (lblB) at (-1.5, 11) {\textbf{\textsf{Base B}}};
		\draw[->, >=stealth, thick, line width=1.2pt] (lblB) to[out=0, in=180] (0.5, 11.5);
		\end{tikzpicture}
		\caption{Left: 2-clause gadget. Right: a 2-clause gadget connected to two bases; there 11 queens can be placed unless $x_a = x_b = \texttt F$, so it encodes $x_a \lor x_b$.}
		\label{fig:clause-2-gadget}
	\end{figure}
	
	\begin{lemma}\label{lem:Gadget2Clause}
		The gadget of Figure~\ref{fig:clause-2-gadget} represents a 2-clause $x_a\lor x_b$. It has 3 maximum queen placements corresponding to the three assignments of $(x_a,x_b)$ satisfying the clause $x_a\lor x_b$.
	\end{lemma}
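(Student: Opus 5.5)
The plan is to analyse the combined polyomino on the right of Figure~\ref{fig:clause-2-gadget}: Base A at the bottom, Base B (a reflected base gadget) at the top, and the 2-clause piece in between. By Lemma~\ref{lem:GadgetBase}, each base with its extensions has exactly two maximum placements, one with its queens on the gray `\textsf{T}' tiles and one on the red `\textsf{F}' tiles. The clause piece itself is small: the three L-shaped pieces around the central row of three tiles plus the queen tile $(6,7)$. The goal is to show that the whole figure admits $11$ non-attacking queens exactly when $(x_a,x_b)\neq(\mathsf F,\mathsf F)$, and that each satisfying assignment extends to exactly one placement of $11$ queens.

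\textbf{Upper bound.} I would cover the tiles by cliques of the queen graph, i.e.\ sets of tiles that pairwise attack. The outlined pieces in the figure already suggest this: each L-shaped extension and the central row are lines or near-lines. I would pick the partition so that at most $11$ queens fit, and check pairwise attacks inside each part using the no-jumping-over-holes rule. This gives that no placement has more than $11$ queens.

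\textbf{Case analysis.} Let the last tile of each base chain adjacent to the clause piece carry the signal, so the \textsf{T}/\textsf{F} choice of the base determines which tile of the connecting extension is occupied. I would go through the four assignments:
\begin{itemize}
\item For $(\mathsf T,\mathsf T)$, $(\mathsf T,\mathsf F)$ and $(\mathsf F,\mathsf T)$, exhibit one explicit placement of $11$ queens and argue it is forced. Once the base queens are fixed, the remaining free tiles of the central piece form a set where only one completion reaches the bound.
\item For $(\mathsf F,\mathsf F)$, show that both incoming \textsf{F} queens attack, along diagonals, the tiles of the central clique whose occupation is needed to reach the bound. Then the central row together with the queen tile $(6,7)$ contributes one queen fewer, giving at most $10$.
\end{itemize}
Combined with the uniqueness of the base placements, this yields exactly three maximum placements, in bijection with the satisfying assignments of $x_a\lor x_b$.

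\textbf{Main obstacle.} The hard step is proving uniqueness and the loss of one queen in the $(\mathsf F,\mathsf F)$ case. That requires checking every diagonal and orthogonal line through the junction tiles, and confirming that long lines through the holes are correctly blocked. I would first try to handle it with the clique-cover bound, noting which cliques must be saturated in an $11$-queen placement. If that becomes too delicate, I would fall back on the exhaustive verification with the solver~\cite{QandRsoftware}, as the paper does for Lemmata~\ref{lem:GadgetBase}--\ref{lem:GadgetSplitter}. Finally, I would note that adding extensions to either base only adds one forced queen per extension, which preserves the count by Lemma~\ref{lem:GadgetBase}.
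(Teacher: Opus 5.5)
Your overall framework (an $11$-clique cover, so every clique must be saturated, followed by forcing) works. Carried out correctly, it gives a self-contained hand proof. The paper, by contrast, only exhibits the three placements of Figure~\ref{table:queenplacement} and relies on the solver to exclude $(\mathsf F,\mathsf F)$.

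The cover exists, using the figure's tile labels $(x,y)$:
\begin{itemize}
\item for Base~A, $\{(4,1),(4,2),(4,3)\}$, $\{(5,3)\}$, $\{(1,2),(2,2),(3,2)\}$ and $\{(2,3),(2,4),(3,4)\}$;
\item for Base~B, the mirror images of these;
\item for the clause piece, the two L-trominoes $\{(3,5),(3,6),(4,6)\}$ and $\{(3,8),(4,8),(3,9)\}$ together with the row $\{(4,7),(5,7),(6,7)\}$.
\end{itemize}
Note that the clause piece consists of exactly these two L's and that row; $(6,7)$ is the end of the row, not an extra tile, and there is no third L.

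Saturation forces $(5,3)$, hence $(4,1)$, which kills $(3,2)$ and $(2,3)$. So Base~A is $\{(1,2),(2,4)\}$ for \textsf{T} or $\{(2,2),(3,4)\}$ for \textsf{F}. The lower L then holds $(3,6)$ or $(4,6)$ respectively, and symmetrically $(3,8)$ or $(4,8)$ above.

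The step you planned for $(\mathsf F,\mathsf F)$ fails as stated. The central row never loses its queen. The tile $(6,7)$ attacks only $(4,7)$ and $(5,7)$, so it is free in every case. Conversely, $(4,7)$ and $(5,7)$ are dead in every case, not just $(\mathsf F,\mathsf F)$:
\begin{itemize}
\item in state \textsf{T}, $(4,7)$ is hit by $(3,6)$, and $(5,7)$ is hit by the long diagonal $(2,4),(3,5),(4,6),(5,7)$;
\item in state \textsf{F}, both are hit by $(4,6)$.
\end{itemize}
The same holds from above. So the middle queen always sits on $(6,7)$; this is what gives uniqueness, not a loss in the \textsf{F} case.

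The actual obstruction is different. The two \textsf{F} signal queens $(4,6)$ and $(4,8)$ attack each other along column $4$ through the tile $(4,7)$, so one of the two connecting L-cliques must stay empty, giving at most $10$. In the satisfying cases there is no conflict. The two \textsf{T} signals $(3,6)$ and $(3,8)$ are separated by the missing tile $(3,7)$, and each mixed pair is a knight's move apart. With this correction your argument proves the lemma outright, and no solver fallback is needed.
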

	\begin{proof}
		We give the three maximum queen placements in Figure~\ref{table:queenplacement} below. There is no way to have both bases include a \texttt{F} placement (on red `\textsf{F}' tiles), and so the gadget can be seen as encoding the clause $x_a\lor x_b$.
		\begin{figure}[h]
			\centering
			\begin{tabular}{ccc}
				\begin{tikzpicture}[scale=.35,transform shape]
				\foreach \x/\y in {
					4/13, 1/12, 2/12, 3/12, 4/12, 2/11, 4/11, 5/11, 2/10, 3/10, 3/9, 3/8, 4/8, 4/7, 5/7, 6/7, 3/6, 4/6, 3/5, 2/4, 3/4, 2/3, 4/3, 5/3, 1/2, 2/2, 3/2, 4/2, 4/1
				} {
					\path [draw=brown!, fill=brown!30] (.5+\x-0.45, .5+\y-1.45)
					-- ++(0,.9)
					-- ++(.9,0)
					-- ++(0,-.9)
					--cycle;
				}
				
				\foreach \x/\y in {4/13, 5/11, 6/7, 5/3, 4/1} {
					\node[anchor=center, scale=1.8] at (.5+\x, \y-.5) {\queen};
				}
				\foreach \x/\y in {3/8,2/10,1/12, 1/2,2/4,3/6}
				{
					\node[anchor=center, scale=1.8, color=gray] at (.5+\x, \y-.5) {\queen};
				}
				\draw[line width=1.6pt, black] (1,1) -- (4,1) -- (4,0) -- (5,0) -- (5,2) -- (6,2) -- (6,3) -- (4,3) -- (4,2) -- (1,2) -- cycle; 
				\draw[line width=1.6pt, black] (2,2) -- (3,2) -- (3,3) -- (4,3) -- (4,4) -- (2,4) -- cycle; 
				\draw[line width=1.6pt, black] (3,4) -- (4,4) -- (4,5) -- (5,5) -- (5,6) -- (3,6) -- cycle; 
				\draw[line width=1.6pt, black] (4,6) -- (7,6) -- (7,7) -- (4,7) -- cycle; 
				\draw[line width=1.6pt, black] (3,7) -- (5,7) -- (5,8) -- (4,8) -- (4,9) -- (3,9) -- cycle; 
				\draw[line width=1.6pt, black] (2,9) -- (4,9) -- (4,10) -- (3,10) -- (3,11) -- (2,11) -- cycle; 
				\draw[line width=1.6pt, black] (1,11) -- (4,11) -- (4,10) -- (6,10) -- (6,11) -- (5,11) -- (5,13) -- (4,13) -- (4,12) -- (1,12) -- cycle; 
				\end{tikzpicture}
				&
				\begin{tikzpicture}[scale=.35,transform shape]
				\foreach \x/\y in {
					4/13, 1/12, 2/12, 3/12, 4/12, 2/11, 4/11, 5/11, 2/10, 3/10, 3/9, 3/8, 4/8, 4/7, 5/7, 6/7, 3/6, 4/6, 3/5, 2/4, 3/4, 2/3, 4/3, 5/3, 1/2, 2/2, 3/2, 4/2, 4/1
				} {
					\path [draw=brown!, fill=brown!30] (.5+\x-0.45, .5+\y-1.45)
					-- ++(0,.9)
					-- ++(.9,0)
					-- ++(0,-.9)
					--cycle;
				}
				
				\foreach \x/\y in {4/13, 5/11, 6/7, 5/3, 4/1} {
					\node[anchor=center, scale=1.8] at (.5+\x, \y-.5) {\queen};
				}
				\foreach \x/\y in {1/2,2/4,3/6}
				{
					\node[anchor=center, scale=1.8, color= gray] at (.5+\x, \y-.5) {\queen};
				}
				\foreach \x/\y in {4/8,3/10,2/12}
				{
					\node[anchor=center, scale=1.8, color=sangria] at (.5+\x, \y-.5) {\queen};
				}
				\draw[line width=1.6pt, black] (1,1) -- (4,1) -- (4,0) -- (5,0) -- (5,2) -- (6,2) -- (6,3) -- (4,3) -- (4,2) -- (1,2) -- cycle; 
				\draw[line width=1.6pt, black] (2,2) -- (3,2) -- (3,3) -- (4,3) -- (4,4) -- (2,4) -- cycle; 
				\draw[line width=1.6pt, black] (3,4) -- (4,4) -- (4,5) -- (5,5) -- (5,6) -- (3,6) -- cycle; 
				\draw[line width=1.6pt, black] (4,6) -- (7,6) -- (7,7) -- (4,7) -- cycle; 
				\draw[line width=1.6pt, black] (3,7) -- (5,7) -- (5,8) -- (4,8) -- (4,9) -- (3,9) -- cycle; 
				\draw[line width=1.6pt, black] (2,9) -- (4,9) -- (4,10) -- (3,10) -- (3,11) -- (2,11) -- cycle; 
				\draw[line width=1.6pt, black] (1,11) -- (4,11) -- (4,10) -- (6,10) -- (6,11) -- (5,11) -- (5,13) -- (4,13) -- (4,12) -- (1,12) -- cycle; 
				\end{tikzpicture}
				&
				\begin{tikzpicture}[scale=.35,transform shape]
				\foreach \x/\y in {
					4/13, 1/12, 2/12, 3/12, 4/12, 2/11, 4/11, 5/11, 2/10, 3/10, 3/9, 3/8, 4/8, 4/7, 5/7, 6/7, 3/6, 4/6, 3/5, 2/4, 3/4, 2/3, 4/3, 5/3, 1/2, 2/2, 3/2, 4/2, 4/1
				} {
					\path [draw=brown!, fill=brown!30] (.5+\x-0.45, .5+\y-1.45)
					-- ++(0,.9)
					-- ++(.9,0)
					-- ++(0,-.9)
					--cycle;
				}
				
				\foreach \x/\y in {4/13, 5/11, 6/7, 5/3, 4/1} {
					\node[anchor=center, scale=1.8] at (.5+\x, \y-.5) {\queen};
				}
				\foreach \x/\y in {2/2,3/4,4/6}
				{
					\node[anchor=center, scale=1.8, color= sangria] at (.5+\x, \y-.5) {\queen};
				}
				\foreach \x/\y in {3/8,2/10,1/12}
				{
					\node[anchor=center, scale=1.8, color=gray] at (.5+\x, \y-.5) {\queen};
				}
				
				\draw[line width=1.6pt, black] (1,1) -- (4,1) -- (4,0) -- (5,0) -- (5,2) -- (6,2) -- (6,3) -- (4,3) -- (4,2) -- (1,2) -- cycle; 
				\draw[line width=1.6pt, black] (2,2) -- (3,2) -- (3,3) -- (4,3) -- (4,4) -- (2,4) -- cycle; 
				\draw[line width=1.6pt, black] (3,4) -- (4,4) -- (4,5) -- (5,5) -- (5,6) -- (3,6) -- cycle; 
				\draw[line width=1.6pt, black] (4,6) -- (7,6) -- (7,7) -- (4,7) -- cycle; 
				\draw[line width=1.6pt, black] (3,7) -- (5,7) -- (5,8) -- (4,8) -- (4,9) -- (3,9) -- cycle; 
				\draw[line width=1.6pt, black] (2,9) -- (4,9) -- (4,10) -- (3,10) -- (3,11) -- (2,11) -- cycle; 
				\draw[line width=1.6pt, black] (1,11) -- (4,11) -- (4,10) -- (6,10) -- (6,11) -- (5,11) -- (5,13) -- (4,13) -- (4,12) -- (1,12) -- cycle; 
				\end{tikzpicture}\\
				&&\\
				(\textsf{T,T}) & 	(\textsf{T,F}) & 	(\textsf{F,T})
			\end{tabular}
			\caption{Three maximum queen placements for the 2-clause $x_a\lor x_b$ and their corresponding literals below.}\label{table:queenplacement}
		\end{figure}		
		Note that the connection to the variable gadgets can be made longer by adding extension gadgets, or any of the turner, shifting or splitting gadgets.
	\end{proof}

	\begin{lemma}
		\label{lem:Gadget3Clause}
		The gadget of Figure~\ref{fig:clause-3-gadget} represents a 3-clause $x_a\lor x_b\lor x_c$. It has 7 maximum queen placements corresponding 7 assignments of $(x_a,x_b,x_c)$ satisfying $x_a\lor x_b\lor x_c$
	\end{lemma}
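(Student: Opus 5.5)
We will follow the same strategy as for Lemma~\ref{lem:Gadget2Clause}: attach three base gadgets (with their extensions) to the three arms of the 3-clause gadget of Figure~\ref{fig:clause-3-gadget}, and show that the maximum number of independent queens is attained exactly for the seven satisfying assignments and never for $(x_a,x_b,x_c)=(\mathsf F,\mathsf F,\mathsf F)$. By Lemma~\ref{lem:GadgetBase}, each arm taken alone has exactly two maximum placements, its \textsf{T} and its \textsf{F} placement. Since the gadget is finite and fixed, the argument amounts to a finite case analysis. It can be certified by exhaustive search with the solver~\cite{QandRsoftware}, but we also want a readable argument.

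\textbf{Step 1: decompose the board.} We split the polyomino into the three arms (each a base plus extensions) and the central junction, which contains the five forced queens drawn in the figure together with the cells adjacent to the ends of the arms. We bound the number of queens in each part separately: each arm contributes at most its maximum (by Lemma~\ref{lem:GadgetBase}), and the junction contributes at most a constant $k$. We obtain $k$ by covering the junction cells with $k$ lines (rows, columns, diagonals) that are each unobstructed inside the polyomino; a queen-free line cover gives the upper bound directly. This shows that the total is at most $(\text{arm maxima})+k$.

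\textbf{Step 2: realise the bound for satisfying assignments.} For each of the seven assignments with at least one \textsf{T}, we exhibit a placement with exactly that total, analogous to Figure~\ref{table:queenplacement}. The junction is built so that a \textsf{T} arm leaves one particular junction cell unattacked, and that cell can host the "extra" queen.

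\textbf{Step 3: the obstacle, excluding $(\mathsf F,\mathsf F,\mathsf F)$ and proving uniqueness.} We must show two things.
\begin{enumerate}
\item If all three arms are in their \textsf{F} state, the \textsf{F} queens at the arm ends attack every candidate cell for the extra junction queen, so the junction holds at most $k-1$ queens.
\item Any placement in which some arm is not in one of its two maximal states loses at least one queen there, and the junction cannot compensate for the loss.
\end{enumerate}
The second point is the delicate one, because a deficient arm might free junction cells. We handle it by strengthening the line cover of Step 1 so that it spans the arm ends and the junction together: we cover the combined region near each arm end with lines whose count equals the joint maximum. Then any deviation forces a net deficit. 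For the final check we defer to exhaustive verification.

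Finally, we check that within each satisfying assignment the junction placement is unique, so the maximum placements are in bijection with the seven satisfying assignments. This uniqueness is also what makes the reduction parsimonious.
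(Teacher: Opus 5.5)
The part of your proposal that actually proves the lemma is the appeal to exhaustive search together with exhibiting the seven placements. That is exactly the paper's proof, which consists of the solver verification plus the placements shown in Figure~\ref{table:3clausequeenplacement}, so at that level you are on the same route and correct.

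The readable layer you put on top (Steps~1--3) does not work for this gadget as stated. Use coordinates (column,\,row) in the right part of Figure~\ref{fig:clause-3-gadget}, with the \textsf{T} tile of Base~A at $(1,3)$ and the lowest cell at $(10,1)$. In the placement $\mathsf{FFT}$ of Figure~\ref{table:3clausequeenplacement}, Base~A is in its \textsf{F} state with queens at $(2,3),(3,5)$. However, the next L-tromino $\{(3,6),(3,7),(4,7)\}$ is empty, and the central piece compensates by carrying three queens at $(5,7),(6,9),(8,8)$.

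This leaves you two options, and both fail:
\begin{itemize}
\item If that tromino is an extension of arm~A, then a maximum placement has a non-maximal arm and the junction compensates. This contradicts Step~3(2).
\item If the tromino belongs to the junction (for instance, arms are the bare bases with $4$ queens each and the junction is the left gadget, which holds exactly $11$ queens in all seven placements), look at the cell $(3,6)$. The only junction queen of $\mathsf{TTT}$ attacking it is $(3,7)$, and the only one of $\mathsf{FFT}$ is $(6,9)$. These two do not attack each other. So no line, indeed no clique, through $(3,6)$ contains a junction queen of both placements.
\end{itemize}
In the second case, a cover of the junction attaining the junction count would need every member to meet the junction part of every maximum placement. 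Hence every cover has $k\ge 12$ members, and your bound ``arm maxima $+\,k$'' is at least $24>23$. That bound is never attained, so Step~2 fails as written. Likewise, ``at most $k-1$ in the junction'' in Step~3(1) no longer excludes $\mathsf{FFF}$.

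A human-readable certificate would need lines that cross the arm/junction boundary. For instance, the column $\{(3,5),(3,6),(3,7)\}$ contains exactly one queen in each of the seven placements. This amounts essentially to a cover of the whole figure by $23$ lines. Even then, a cover only gives the upper bound. Excluding $\mathsf{FFF}$ and showing there are no further maximum placements, including uniqueness within each assignment, would still require a separate case analysis.

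As it stands, all of these conclusions rest entirely on the exhaustive check. That is legitimate and is what the paper relies on, but Steps~1--3 should then be presented as heuristic explanation, not as steps of the proof.
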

	\begin{proof}
		There are 7 maximum placements of the clause gadgets and they correspond to the 7 solutions of the clause $x_a\lor x_b \lor x_d$; see Figure~\ref{table:3clausequeenplacement}.
	\end{proof}

	\begin{figure}[htbp]
		\centering
		\begin{tikzpicture}[scale=.45,transform shape,baseline={(0,0)}]
		\foreach \x/\y in {
			3/10, 11/10,
			3/9, 4/9, 6/9, 7/9, 8/9, 10/9, 11/9,
			4/8, 5/8, 6/8, 8/8, 9/8, 10/8,
			3/7, 4/7, 5/7, 9/7, 10/7,
			3/6, 10/6, 11/6, 12/6,
			11/5,
			11/4, 12/4,
			9/3, 10/3, 12/3,
			10/2, 11/2, 12/2, 13/2,
			10/1
		} {
			\path [draw=brown!, fill=brown!30] (.5+\x-0.45, .5+\y-1.45)
			-- ++(0,.9)
			-- ++(.9,0)
			-- ++(0,-.9)
			--cycle;
		}
		
		\foreach \x/\y in { 12/6,  11/4, 9/3, 12/2, 10/1} {
			\node[anchor=center, scale=1.8] at (.5+\x, \y-.5) {\queen};
		}
		
		\draw[line width=1.6pt, black] (10,0) -- (11,0) -- (11,1) -- (14,1) -- (14,2) -- (13,2) -- (13,4) -- (12,4) -- (12,5) -- (13,5) -- (13,6) -- (11,6) -- (11,7) -- (10,7) -- (10,5) -- (11,5) -- (11,3) -- (12,3) -- (12,2) -- (11,2) -- (11,3) -- (9,3) -- (9,2) -- (10,2) -- (10,0) -- cycle;
		\draw[line width=1.6pt, black] (3,5) -- (4,5) -- (4,6) -- (5,6) -- (5,7) -- (3,7) -- cycle; 
		\draw[line width=1.6pt, black] (3,8) -- (5,8) -- (5,9) -- (4,9) -- (4,10) -- (3,10) -- cycle; 
		\draw[line width=1.6pt, black] (12,8) -- (10,8) -- (10,9) -- (11,9) -- (11,10) -- (12,10) -- cycle; 
		\draw[line width=1.6pt, black] (4,7) -- (5,7) -- (5,6) -- (6,6) -- (6,7) -- (7,7) -- (7,8) -- (8,8) -- (8,7) -- (9,7) -- (9,6) -- (10,6) -- (10,7) -- (11,7) -- (11,8) -- (10,8) -- (9,8) -- (9,9) -- (6,9) -- (6,8) -- (4,8) -- cycle;
		\end{tikzpicture}\qquad
		\begin{tikzpicture}[scale=.45,transform shape]
		\foreach \x/\y in {
			4/14, 10/14,
			3/13, 4/13, 10/13, 11/13, 
			2/12, 4/12, 5/12, 9/12, 10/12, 12/12,
			2/11, 3/11, 11/11, 12/11,
			3/10, 11/10,
			3/9, 4/9, 6/9, 7/9, 8/9, 10/9, 11/9,
			4/8, 5/8, 6/8, 8/8, 9/8, 10/8,
			3/7, 4/7, 5/7, 9/7, 10/7,
			3/6, 10/6, 11/6, 12/6,
			2/5, 3/5, 11/5,
			2/4, 4/4, 5/4, 11/4, 12/4,
			3/3, 4/3, 9/3, 10/3, 12/3,
			4/2, 10/2, 11/2, 12/2, 13/2,
			10/1
		} {
			\path [draw=brown!, fill=brown!30] (.5+\x-0.45, .5+\y-1.45)
			-- ++(0,.9)
			-- ++(.9,0)
			-- ++(0,-.9)
			--cycle;
		}
		
		\foreach \x/\y in {
			1/3,1/13,13/13
		} {
			\path [draw=gray!, fill=gray!30] (.5+\x-0.45, .5+\y-1.45)
			-- ++(0,.9)
			-- ++(.9,0)
			-- ++(0,-.9)
			--cycle;
			\node[anchor=center, scale=1.5] at (.5+\x, \y-.5) {\textbf{\textsf{T}}};
		}
		
		\foreach \x/\y in {
			2/3,2/13,12/13
		} {
			\path [draw=sangria!, fill=sangria!30] (.5+\x-0.45, .5+\y-1.45)
			-- ++(0,.9)
			-- ++(.9,0)
			-- ++(0,-.9)
			--cycle;
			\node[anchor=center, scale=1.5] at (.5+\x, \y-.5) {\textbf{\textsf{F}}};
		}
		
		\foreach \x/\y in {4/14, 10/14, 5/12, 9/12, 12/6, 5/4, 11/4, 9/3, 4/2, 12/2, 10/1} {
			\node[anchor=center, scale=1.8] at (.5+\x, \y-.5) {\queen};
		}
		
		\draw[line width=1.6pt, black] (1,2) -- (4,2) -- (4,1) -- (5,1) -- (5,3) -- (6,3) -- (6,4) -- (4,4) -- (4,3) -- (1,3) -- cycle; 
		\draw[line width=1.6pt, black] (1,12) -- (4,12) -- (4,11) -- (6,11) -- (6,12) -- (5,12) -- (5,14) -- (4,14) -- (4,13) -- (1,13) -- cycle; 
		
		\draw[line width=1.6pt, black] (10,0) -- (11,0) -- (11,1) -- (14,1) -- (14,2) -- (13,2) -- (13,4) -- (12,4) -- (12,5) -- (13,5) -- (13,6) -- (11,6) -- (11,7) -- (10,7) -- (10,5) -- (11,5) -- (11,3) -- (12,3) -- (12,2) -- (11,2) -- (11,3) -- (9,3) -- (9,2) -- (10,2) -- (10,0) -- cycle;
		
		\draw[line width=1.6pt, black] (14,12) -- (11,12) -- (11,11) -- (9,11) -- (9,12) -- (10,12) -- (10,14) -- (11,14) -- (11,13) -- (14,13) -- cycle; 
		\draw[line width=1.6pt, black] (2,3) -- (3,3) -- (3,4) -- (4,4) -- (4,5) -- (2,5) -- cycle; 
		\draw[line width=1.6pt, black] (3,5) -- (4,5) -- (4,6) -- (5,6) -- (5,7) -- (3,7) -- cycle; 
		\draw[line width=1.6pt, black] (2,10) -- (4,10) -- (4,11) -- (3,11) -- (3,12) -- (2,12) -- cycle; 
		\draw[line width=1.6pt, black] (3,8) -- (5,8) -- (5,9) -- (4,9) -- (4,10) -- (3,10) -- cycle; 
		\draw[line width=1.6pt, black] (13,10) -- (11,10) -- (11,11) -- (12,11) -- (12,12) -- (13,12) -- cycle; 
		\draw[line width=1.6pt, black] (12,8) -- (10,8) -- (10,9) -- (11,9) -- (11,10) -- (12,10) -- cycle; 
		
		\draw[line width=1.6pt, black] (4,7) -- (5,7) -- (5,6) -- (6,6) -- (6,7) -- (7,7) -- (7,8) -- (8,8) -- (8,7) -- (9,7) -- (9,6) -- (10,6) -- (10,7) -- (11,7) -- (11,8) -- (10,8) -- (9,8) -- (9,9) -- (6,9) -- (6,8) -- (4,8) -- cycle;
		
		\node (lblA) at (-1, 2.5) {\textbf{\textsf{Base A}}};
		\draw[->, >=stealth, thick, line width=1.2pt] (lblA) to[out=0, in=180] (1, 2.5);
		
		\node (lblB) at (-1, 12.5) {\textbf{\textsf{Base B}}};
		\draw[->, >=stealth, thick, line width=1.2pt] (lblB) to[out=0, in=180] (1, 12.5);
		
		\node (lblD) at (15.5, 12.5) {\textbf{\textsf{Base C}}};
		\draw[->, >=stealth, thick, line width=1.2pt] (lblD) to[out=180, in=0] (14, 12.5);
		\end{tikzpicture}
		\caption{Left: the 3-clause gadget. Right: the 3-clause connected to three variable gadgets; 23 queens can be placed unless $x_a = x_b = x_c = \texttt F$, so we encode $x_a \lor x_b \lor x_c$)}
		\label{fig:clause-3-gadget}
	\end{figure}

	We now introduce a final set of gadgets: the shifting gadgets. They allow us to adjust the position of the extension and help connect the gadgets together.
	
	\begin{lemma}\label{lem:shiftinggadget}
		The three gadgets of Figures~\ref{fig:shiftingleftright}--\ref{fig:shiftingforward} admit only two maximum placements of queens, and thus allow to shift the signal by one in the left, right and forward directions, respectively.
	\end{lemma}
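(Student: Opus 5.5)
We would prove the statement the same way as Lemmata~\ref{lem:GadgetTurning} and~\ref{lem:GadgetSplitter}: exhibit two explicit maximum placements and then certify that there are no others. For each of the three shifting gadgets (left, right, forward), attached to a base gadget as in Figure~\ref{fig:base-gadget}, we would first write down two placements. The \textsf{T} placement continues the gray diagonal of the incoming extension, and the \textsf{F} placement continues the red diagonal. In each placement, the tiles of the shifting gadget that are not on the signal diagonal carry the same fixed queens in both solutions, exactly as the two fixed queens of the base gadget do. We would check by hand that each placement is non-attacking. The count is the base count plus a fixed number depending only on the gadget, which is the same for both placements.

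The substantive part is to show that these two placements are maximum and unique. We would do this with a covering argument. Partition the tiles of the gadget into pieces such that each piece is a clique in the queen graph, that is, a set of tiles lying on a common row, column or diagonal segment not interrupted by a hole. Each piece can then hold at most one queen. If the number of pieces equals the number of queens in our placements, both placements are maximum. Uniqueness follows by propagation: a maximum placement must put exactly one queen in every piece. Starting from the pieces of size one, or from the piece adjacent to the incoming extension, each choice forces the next one. Since the incoming extension is already in one of its two states by Lemma~\ref{lem:GadgetBase}, the forcing leaves exactly the \textsf{T} or the \textsf{F} placement, and the signal is shifted by one column to the left or right, or one row forward.

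We would also verify that appending further extensions (Figure~\ref{fig:extension}) after the shifting gadget keeps exactly two maximum placements. By the same covering argument, each extension is a clique cover of two pieces that adds exactly one queen, namely the continuation of the forced diagonal. This is the same induction already used implicitly in the proof of Lemma~\ref{lem:GadgetBase}.

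The main obstacle is finding a clique partition of the right size. These gadgets were found by computer search, so they need not admit a clean cover. If no such partition exists, we would fall back on what the paper does for Lemmata~\ref{lem:GadgetBase}--\ref{lem:Gadget3Clause}: an exhaustive check with the integer-programming solver~\cite{QandRsoftware}. For each gadget, we would maximise the number of queens, then add a constraint excluding the two known optima and confirm that the new optimum is strictly smaller. The gadgets have at most about thirty tiles, so this check is a finite and easily reproducible computation, which the reader can also replicate in~\cite{LRM26Software}.
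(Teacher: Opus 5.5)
Your proposal is sound, and your fallback is exactly the paper's route. The paper gives no written argument for this lemma beyond the two \textsf{T}/\textsf{F} placements displayed in the figures. Maximality and uniqueness are certified by the CPLEX-based solver, as for Lemmata~\ref{lem:GadgetBase}--\ref{lem:Gadget3Clause}. Your primary route (a clique partition whose size equals the queen count, followed by forcing) is genuinely different and gives more. It yields a certificate a reader can check by hand. It also gives an induction covering arbitrarily many appended extensions, which no finite solver run on the drawn figures can provide.

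Two corrections are needed for the primary route.

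First, an extension is one piece, not two. Its three tiles pairwise attack (vertically, horizontally and diagonally). With two pieces per extension your upper bound would exceed the queen count.

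Second, the pieces cannot always be chosen gadget by gadget. In the right-shifting piece, three tiles are pairwise non-attacking: the fixed-queen tile, the red \textsf{F} tile, and the brown tile directly below the gray \textsf{T} tile. So this eight-tile piece needs three cliques while contributing only two queens. The first teal piece of the forward gadget has the same defect: it contains four pairwise non-attacking tiles but carries three queens. For the right shift, a tight partition of the whole assembled figure still exists, but it re-pairs \textsf{T} and \textsf{F} tiles across gadget boundaries. It uses:
\begin{enumerate}
\item the column through the incoming \textsf{F} tile and the piece's \textsf{T} tile;
\item the long diagonal from the incoming \textsf{T} tile to the \textsf{F} tile of the next extension;
\item the column through the piece's \textsf{F} tile and the next \textsf{T} tile;
\item the queen tile as a singleton;
\item the earlier \textsf{T}/\textsf{F} pairs, each absorbing the brown tile of the following piece;
\item the later extensions as single L-trominoes.
\end{enumerate}
Where two of these lines overlap, each shared tile is assigned to only one of them. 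So the covering and the forcing must be carried out on the assembled figure, not piece by piece; the left shift, by contrast, does admit a local partition. Wherever you cannot find such a global partition, possibly for the forward gadget, what remains is your solver fallback, just as in the paper.
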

	
	\begin{figure}[h]
		\centering
		\hfill
		
		\begin{tikzpicture}[scale=.5,transform shape]
		\foreach \x/\y in {
			4/7, 3/5, 2/3, 4/3, 5/3, 3/2, 4/2, 4/1,
			7/8,
			5/9,6/9,7/9,8/9,
			5/10,6/10,8/10,
			5/11,6/13,7/15
		} {
			\path [draw=brown!, fill=brown!30] (.5+\x-0.45, .5+\y-1.45)
			-- ++(0,.9)
			-- ++(.9,0)
			-- ++(0,-.9)
			--cycle;
		}
		
		\foreach \x/\y in {
			4/8, 3/6, 2/4, 1/2,
			4/10,5/12,6/14,7/16
		} {
			\path [draw=gray!, fill=gray!30] (.5+\x-0.45, .5+\y-1.45)
			-- ++(0,.9)
			-- ++(.9,0)
			-- ++(0,-.9)
			--cycle;
			\node[anchor=center, scale=1.5] at (.5+\x, \y-.5) {\textbf{\textsf{T}}};
		}
		
		\foreach \x/\y in {
			5/8, 4/6, 3/4, 2/2,
			6/10,6/12,7/14,8/16
		} {
			\path [draw=sangria!, fill=sangria!30] (.5+\x-0.45, .5+\y-1.45)
			-- ++(0,.9)
			-- ++(.9,0)
			-- ++(0,-.9)
			--cycle;
			\node[anchor=center, scale=1.5] at (.5+\x, \y-.5) {\textbf{\textsf{F}}};
		}
		
		\foreach \x/\y in {5/3, 4/1,7/8,8/10} {
			\node[anchor=center, scale=1.8] at (.5+\x, \y-.5) {\queen};
		}
		
		\draw[line width=1.6pt, black] (1,1) -- (4,1) -- (4,0) -- (5,0) -- (5,2) -- (6,2) -- (6,3) -- (4,3) -- (4,2) -- (1,2) -- cycle; 
		\draw[line width=1.6pt, black] (2,2) -- (3,2) -- (3,3) -- (4,3) -- (4,4) -- (2,4) -- cycle; 
		\draw[line width=1.6pt, black] (3,4) -- (4,4) -- (4,5) -- (5,5) -- (5,6) -- (3,6) -- cycle; 
		\draw[line width=1.6pt, black] (4,6) -- (5,6) -- (5,7) -- (6,7) -- (6,8) -- (4,8) -- cycle; 
		\draw[line width=1.6pt, teal] (5,8) -- (7,8)-- (7,7) -- (8,7) --(8,8) -- (9,8) -- (9,10) --(8,10) -- (8,9) -- (7,9)--(7,10) -- (4,10)--(4,9) -- (5,9) --cycle; 
		\draw[line width=1.6pt, black] (2+3,2+8) -- (3+3,2+8) -- (3+3,3+8) -- (4+3,3+8) -- (4+3,4+8) -- (2+3,4+8) -- cycle; 
		\draw[line width=1.6pt, black] (3+3,4+8) -- (4+3,4+8) -- (4+3,5+8) -- (5+3,5+8) -- (5+3,6+8) -- (3+3,6+8) -- cycle; 
		\draw[line width=1.6pt, black] (4+3,6+8) -- (5+3,6+8) -- (5+3,7+8) -- (6+3,7+8) -- (6+3,8+8) -- (4+3,8+8) -- cycle; 
		\draw[->, ultra thick,teal] (10,15.5)--(9.1,15.5);
		\end{tikzpicture}
		\hfill
		\begin{tikzpicture}[scale=.5,transform shape]
		\foreach \x/\y in {
			4/7, 3/5, 2/3, 4/3, 5/3, 3/2, 4/2, 4/1,
			5/9, 6/11,7/13,8/15
		} {
			\path [draw=brown!, fill=brown!30] (.5+\x-0.45, .5+\y-1.45)
			-- ++(0,.9)
			-- ++(.9,0)
			-- ++(0,-.9)
			--cycle;
		}
		
		\foreach \x/\y in {
			4/8, 3/6, 2/4, 1/2,
			5/10,6/12,7/14,8/16
		} {
			\path [draw=gray!, fill=gray!30] (.5+\x-0.45, .5+\y-1.45)
			-- ++(0,.9)
			-- ++(.9,0)
			-- ++(0,-.9)
			--cycle;
			\node[anchor=center, scale=1.5] at (.5+\x, \y-.5) {\textbf{\textsf{T}}};
		}
		
		\foreach \x/\y in {
			5/8, 4/6, 3/4, 2/2,
			6/10,7/12,8/14,9/16
		} {
			\path [draw=sangria!, fill=sangria!30] (.5+\x-0.45, .5+\y-1.45)
			-- ++(0,.9)
			-- ++(.9,0)
			-- ++(0,-.9)
			--cycle;
			\node[anchor=center, scale=1.5] at (.5+\x, \y-.5) {\textbf{\textsf{F}}};
		}
		
		\foreach \x/\y in {5/3, 4/1} {
			\node[anchor=center, scale=1.8] at (.5+\x, \y-.5) {\queen};
		}
		
		\draw[line width=1.6pt, black] (1,1) -- (4,1) -- (4,0) -- (5,0) -- (5,2) -- (6,2) -- (6,3) -- (4,3) -- (4,2) -- (1,2) -- cycle; 
		\draw[line width=1.6pt, black] (2,2) -- (3,2) -- (3,3) -- (4,3) -- (4,4) -- (2,4) -- cycle; 
		\draw[line width=1.6pt, black] (3,4) -- (4,4) -- (4,5) -- (5,5) -- (5,6) -- (3,6) -- cycle; 
		\draw[line width=1.6pt, black] (4,6) -- (5,6) -- (5,7) -- (6,7) -- (6,8) -- (4,8) -- cycle; 
		\draw[line width=1.6pt, black] (2+3,2+6) -- (3+3,2+6) -- (3+3,3+6) -- (4+3,3+6) -- (4+3,4+6) -- (2+3,4+6) -- cycle; 
		\draw[line width=1.6pt, black] (2+4,2+8) -- (3+4,2+8) -- (3+4,3+8) -- (4+4,3+8) -- (4+4,4+8) -- (2+4,4+8) -- cycle; 
		\draw[line width=1.6pt, black] (3+4,4+8) -- (4+4,4+8) -- (4+4,5+8) -- (5+4,5+8) -- (5+4,6+8) -- (3+4,6+8) -- cycle; 
		\draw[line width=1.6pt, black] (4+4,6+8) -- (5+4,6+8) -- (5+4,7+8) -- (6+4,7+8) -- (6+4,8+8) -- (4+4,8+8) -- cycle; 
		\end{tikzpicture}
		\hfill
		\begin{tikzpicture}[scale=.5,transform shape]
		\foreach \x/\y in {
			4/7, 3/5, 2/3, 4/3, 5/3, 3/2, 4/2, 4/1,
			7/8,8/8,
			5/9,7/9,
			5/10,6/10,7/10,
			5/11,7/11,
			8/13,
			9/15
		} {
			\path [draw=brown!, fill=brown!30] (.5+\x-0.45, .5+\y-1.45)
			-- ++(0,.9)
			-- ++(.9,0)
			-- ++(0,-.9)
			--cycle;
		}
		
		\foreach \x/\y in {
			4/8, 3/6, 2/4, 1/2,
			5/10,
			7/12,
			8/14,
			9/16
		} {
			\path [draw=gray!, fill=gray!30] (.5+\x-0.45, .5+\y-1.45)
			-- ++(0,.9)
			-- ++(.9,0)
			-- ++(0,-.9)
			--cycle;
			\node[anchor=center, scale=1.5] at (.5+\x, \y-.5) {\textbf{\textsf{T}}};
		}
		
		\foreach \x/\y in {
			5/8, 4/6, 3/4, 2/2,
			7/10,
			8/12,
			9/14,
			10/16
		} {
			\path [draw=sangria!, fill=sangria!30] (.5+\x-0.45, .5+\y-1.45)
			-- ++(0,.9)
			-- ++(.9,0)
			-- ++(0,-.9)
			--cycle;
			\node[anchor=center, scale=1.5] at (.5+\x, \y-.5) {\textbf{\textsf{F}}};
		}
		
		\foreach \x/\y in {5/3, 4/1,8/8} {
			\node[anchor=center, scale=1.8] at (.5+\x, \y-.5) {\queen};
		}
		
		\draw[line width=1.6pt, black] (1,1) -- (4,1) -- (4,0) -- (5,0) -- (5,2) -- (6,2) -- (6,3) -- (4,3) -- (4,2) -- (1,2) -- cycle; 
		\draw[line width=1.6pt, black] (2,2) -- (3,2) -- (3,3) -- (4,3) -- (4,4) -- (2,4) -- cycle; 
		\draw[line width=1.6pt, black] (3,4) -- (4,4) -- (4,5) -- (5,5) -- (5,6) -- (3,6) -- cycle; 
		\draw[line width=1.6pt, black] (4,6) -- (5,6) -- (5,7) -- (6,7) -- (6,8) -- (4,8) -- cycle; 
		\draw[line width=1.6pt, teal] (6,8) -- (6,9) -- (7,9) -- (7,7) --(9,7) -- (9,8) -- (8,8) --(8,10) -- (6,10) -- (6,11)--(5,11) -- (5,8) --cycle; 
		\draw[line width=1.6pt, black] (2+5,2+8) -- (3+5,2+8) -- (3+5,3+8) -- (4+5,3+8) -- (4+5,4+8) -- (2+5,4+8) -- cycle; 
		\draw[line width=1.6pt, black] (3+5,4+8) -- (4+5,4+8) -- (4+5,5+8) -- (5+5,5+8) -- (5+5,6+8) -- (3+5,6+8) -- cycle; 
		\draw[line width=1.6pt, black] (4+5,6+8) -- (5+5,6+8) -- (5+5,7+8) -- (6+5,7+8) -- (6+5,8+8) -- (4+5,8+8) -- cycle; 
		\draw[->, ultra thick,teal] (8,15.5)--(9-.1,15.5);
		\end{tikzpicture}
		\hfill
		\caption{The left and right shifting gadgets, respectively left and right to the normal base gadget with extensions. They allow adjusting the position of the last extension by 1 tile right or left.}\label{fig:shiftingleftright}
	\end{figure}
	
	\begin{figure}
		\begin{tikzpicture}[scale=.5,
		rotate=270,every node/.style={scale=.5} 
		] 
		\foreach \x/\y in {
			4/7, 3/5, 2/3, 4/3, 5/3, 3/2, 4/2, 4/1,
			5/9, 7/9,
			6/10,7/10,
			6/11,7/11,
			6/12,7/12,8/12,
			7/13,8/13,
			8/14,
			9/16,
			12/17,
			10/18,11/18,12/18,13/18,
			10/19,11/19,13/19,
			10/20,11/22,12/24
		} {
			\path [draw=brown!, fill=brown!30] (.5+\x-0.45, .5+\y-1.45)
			-- ++(0,.9)
			-- ++(.9,0)
			-- ++(0,-.9)
			--cycle;
		}
		
		\foreach \x/\y in {
			4/8, 3/6, 2/4, 1/2,
			5/10,
			6/12,
			8/15,
			9/17,
			9/19,
			10/21,11/23,12/25
		} {
			\path [draw=gray!, fill=gray!30] (.5+\x-0.45, .5+\y-1.45)
			-- ++(0,.9)
			-- ++(.9,0)
			-- ++(0,-.9)
			--cycle;
			\node[anchor=center, scale=1.5] at (.5+\x, \y-.5) {\textbf{\textsf{T}}};
		}
		
		\foreach \x/\y in {
			5/8, 4/6, 3/4, 2/2,
			6/11,
			8/12,
			9/15,
			10/17,
			11/19,
			11/21,12/23,13/25
		} {
			\path [draw=sangria!, fill=sangria!30] (.5+\x-0.45, .5+\y-1.45)
			-- ++(0,.9)
			-- ++(.9,0)
			-- ++(0,-.9)
			--cycle;
			\node[anchor=center, scale=1.5] at (.5+\x, \y-.5) {\textbf{\textsf{F}}};
		}
		
		\foreach \x/\y in {5/3, 4/1,7/9,12/17,13/19} {
			\node[anchor=center, scale=1.8] at (.5+\x, \y-.5) {\queen};
		}
		
		\draw[line width=1.6pt, black] (1,1) -- (4,1) -- (4,0) -- (5,0) -- (5,2) -- (6,2) -- (6,3) -- (4,3) -- (4,2) -- (1,2) -- cycle; 
		\draw[line width=1.6pt, black] (2,2) -- (3,2) -- (3,3) -- (4,3) -- (4,4) -- (2,4) -- cycle; 
		\draw[line width=1.6pt, black] (3,4) -- (4,4) -- (4,5) -- (5,5) -- (5,6) -- (3,6) -- cycle; 
		\draw[line width=1.6pt, black] (4,6) -- (5,6) -- (5,7) -- (6,7) -- (6,8) -- (4,8) -- cycle; 
		\draw[line width=1.6pt, teal] (5,8) -- (6,8)-- (6,9) -- (7,9) --(7,8) -- (8,8) -- (8,11)--(9,11) -- (9,13) -- (7,13)--(7,12) -- (6,12)--(6,10) -- (5,10) --cycle; 
		\draw[line width=1.6pt, black] (2+6,2+11) -- (3+6,2+11) -- (3+6,3+11) -- (4+6,3+11) -- (4+6,4+11) -- (2+6,4+11) -- cycle; 
		\draw[line width=1.6pt, black] (2+7,2+13) -- (3+7,2+13) -- (3+7,3+13) -- (4+7,3+13) -- (4+7,4+13) -- (2+7,4+13) -- cycle; 
		\draw[line width=1.6pt, teal] (5+5,8+9) -- (7+5,8+9)-- (7+5,7+9) -- (8+5,7+9) --(8+5,8+9) -- (9+5,8+9) -- (9+5,10+9) --(8+5,10+9) -- (8+5,9+9) -- (7+5,9+9)--(7+5,10+9) -- (4+5,10+9)--(4+5,9+9) -- (5+5,9+9) --cycle; 
		\draw[line width=1.6pt, black] (2+8,2+17) -- (3+8,2+17) -- (3+8,3+17) -- (4+8,3+17) -- (4+8,4+17) -- (2+8,4+17) -- cycle; 
		\draw[line width=1.6pt, black] (2+9,2+19) -- (3+9,2+19) -- (3+9,3+19) -- (4+9,3+19) -- (4+9,4+19) -- (2+9,4+19) -- cycle; 
		\draw[line width=1.6pt, black] (2+10,2+21) -- (3+10,2+21) -- (3+10,3+21) -- (4+10,3+21) -- (4+10,4+21) -- (2+10,4+21) -- cycle; 
		\draw[->, ultra thick,teal] (11,24)--(11,25);
		\begin{scope}[shift = {(8,0)}]
		\foreach \x/\y in {
			4/7, 3/5, 2/3, 4/3, 5/3, 3/2, 4/2, 4/1,
			5/9, 6/11,7/13,8/15,
			9/17,10/19,11/21,12/23
		} {
			\path [draw=brown!, fill=brown!30] (.5+\x-0.45, .5+\y-1.45)
			-- ++(0,.9)
			-- ++(.9,0)
			-- ++(0,-.9)
			--cycle;
		}
		
		\foreach \x/\y in {
			4/8, 3/6, 2/4, 1/2,
			5/10,6/12,7/14,8/16,
			9/18,10/20,11/22,12/24
		} {
			\path [draw=gray!, fill=gray!30] (.5+\x-0.45, .5+\y-1.45)
			-- ++(0,.9)
			-- ++(.9,0)
			-- ++(0,-.9)
			--cycle;
			\node[anchor=center, scale=1.5] at (.5+\x, \y-.5) {\textbf{\textsf{T}}};
		}
		
		\foreach \x/\y in {
			5/8, 4/6, 3/4, 2/2,
			6/10,7/12,8/14,9/16,
			10/18,11/20,12/22,13/24
		} {
			\path [draw=sangria!, fill=sangria!30] (.5+\x-0.45, .5+\y-1.45)
			-- ++(0,.9)
			-- ++(.9,0)
			-- ++(0,-.9)
			--cycle;
			\node[anchor=center, scale=1.5] at (.5+\x, \y-.5) {\textbf{\textsf{F}}};
		}
		
		\foreach \x/\y in {5/3, 4/1} {
			\node[anchor=center, scale=1.8] at (.5+\x, \y-.5) {\queen};
		}
		
		\draw[line width=1.6pt, black] (1,1) -- (4,1) -- (4,0) -- (5,0) -- (5,2) -- (6,2) -- (6,3) -- (4,3) -- (4,2) -- (1,2) -- cycle; 
		\draw[line width=1.6pt, black] (2,2) -- (3,2) -- (3,3) -- (4,3) -- (4,4) -- (2,4) -- cycle; 
		\draw[line width=1.6pt, black] (3,4) -- (4,4) -- (4,5) -- (5,5) -- (5,6) -- (3,6) -- cycle; 
		\draw[line width=1.6pt, black] (4,6) -- (5,6) -- (5,7) -- (6,7) -- (6,8) -- (4,8) -- cycle; 
		\draw[line width=1.6pt, black] (2+3,2+6) -- (3+3,2+6) -- (3+3,3+6) -- (4+3,3+6) -- (4+3,4+6) -- (2+3,4+6) -- cycle; 
		\draw[line width=1.6pt, black] (2+4,2+8) -- (3+4,2+8) -- (3+4,3+8) -- (4+4,3+8) -- (4+4,4+8) -- (2+4,4+8) -- cycle; 
		\draw[line width=1.6pt, black] (3+4,4+8) -- (4+4,4+8) -- (4+4,5+8) -- (5+4,5+8) -- (5+4,6+8) -- (3+4,6+8) -- cycle; 
		\draw[line width=1.6pt, black] (4+4,6+8) -- (5+4,6+8) -- (5+4,7+8) -- (6+4,7+8) -- (6+4,8+8) -- (4+4,8+8) -- cycle; 
		\draw[line width=1.6pt, black] (4+5,6+10) -- (5+5,6+10) -- (5+5,7+10) -- (6+5,7+10) -- (6+5,8+10) -- (4+5,8+10) -- cycle; 
		\draw[line width=1.6pt, black] (4+6,6+12) -- (5+6,6+12) -- (5+6,7+12) -- (6+6,7+12) -- (6+6,8+12) -- (4+6,8+12) -- cycle; 
		\draw[line width=1.6pt, black] (4+7,6+14) -- (5+7,6+14) -- (5+7,7+14) -- (6+7,7+14) -- (6+7,8+14) -- (4+7,8+14) -- cycle; 
		\draw[line width=1.6pt, black] (4+8,6+16) -- (5+8,6+16) -- (5+8,7+16) -- (6+8,7+16) -- (6+8,8+16) -- (4+8,8+16) -- cycle; 
		\end{scope}
		\end{tikzpicture}
		\caption{Forward gadget with comparison below. It adjusts the position forward by 1 tile. }\label{fig:shiftingforward}
	\end{figure}

	The following proposition is a crucial step showing the translation between a \psat{} instance and an \indQueens{} instance, and showing that the polyomino thus constructed is done in polynomial time and that its size is bounded polyominially.

	\begin{proposition}\label{prop:InstanceSizePoly}
		Given an instance $C$ of \psat, it is possible to construct a polyomino $P(C)$ and a corresponding number of queens $n_{C}$ in polynomial time such that no more than $n_{C}$ independent queens can be placed on $P(C)$, and that $n_{C}$ independent queens can only be placed if $C$ is satisfiable.
	\end{proposition}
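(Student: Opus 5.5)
We will build $P(C)$ by laying the gadgets of Lemmata~\ref{lem:GadgetBase}--\ref{lem:shiftinggadget} along a grid representation of the incidence graph of $C$. Since that graph is planar by Definition~\ref{def:psat}, Proposition~\ref{prop:gridpath} gives in polynomial time a grid path graph of polynomial size, as in Figure~\ref{fig:path_graph-psat}. Each grid cell will be blown up by a constant factor $K$, large enough to hold any single gadget together with its connecting extensions; the blown-up board still has polynomial area.

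\textbf{Step 1: placing the gadgets.} At each variable vertex $x_i$ we place a base gadget (Lemma~\ref{lem:GadgetBase}). Where $x_i$ has several incident edges, we duplicate the signal with splitters (Lemma~\ref{lem:GadgetSplitter}). Each grid path is then routed by chains of extensions, with turners (Lemma~\ref{lem:GadgetTurning}) at corners and shifting gadgets (Lemma~\ref{lem:shiftinggadget}) to fix parity and offset misalignments. On an edge to a clause in which $x_i$ occurs negated, we insert one inverter (Lemma~\ref{lem:GadgetInverter}). Each wire ends at the correct input of a 2-clause or 3-clause gadget (Lemmata~\ref{lem:Gadget2Clause}, \ref{lem:Gadget3Clause}).

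Because the grid embedding is planar and the cells are separated by distance $K$, distinct wires can be kept apart by at least one empty row and one empty column wherever they run in parallel. We must check that queens in different gadgets never see each other. Queens move only through tiles of $P$, so two gadgets joined only at their designated connecting squares interact only there. This locality is what lets the per-gadget counts add up. The number of gadgets is polynomial, so the construction runs in polynomial time. We also make $P(C)$ connected: the incidence graph may be assumed connected (components can be treated separately or joined by a harmless dangling path), and the clause gadgets connect the wires.

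\textbf{Step 2: counting.} Each gadget $g$ has a known maximum $m_g$ of queens, assuming its boundary tiles are shared consistently with its neighbours. We set $n_C=\sum_g m_g$, adjusted for the shared boundary tiles, which is computable as we build. For the upper bound, we partition $P(C)$ into the gadget pieces; any independent set restricted to a piece is independent there, so it has at most $m_g$ queens on that piece, giving at most $n_C$ queens in total. For the equality case, reaching $n_C$ forces every piece to carry a maximum placement. By the lemmata, every wire piece is then in its \textsf{T} or \textsf{F} state, and adjacent pieces agree; inverters flip the state and splitters copy it. Hence each variable receives one consistent truth value at all its occurrences. Each clause gadget attains its maximum exactly when at least one of its literals is true. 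Therefore $n_C$ queens exist if and only if $C$ is satisfiable. Conversely, a satisfying assignment gives an explicit placement of $n_C$ queens.

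\textbf{Main obstacle.} The hardest step is the locality claim in Step 1. Long straight rows, columns and diagonals of the assembled polyomino could let a queen in one gadget attack a queen in a distant gadget. This would break the additivity used in Step 2. We would handle it by proving that every gadget's queen tiles see only tiles within that gadget or its immediate neighbour. Concretely, we would check that every line leaving a gadget, other than at its connection squares, immediately hits a hole, and that our padding ensures no line from one gadget passes straight into another.
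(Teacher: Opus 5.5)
Your proposal is correct and follows essentially the same route as the paper. Both use a grid path representation via Proposition~\ref{prop:gridpath}, blow up all cells by a common factor, and place gadgets with inverters on negated occurrences and extension/turning/shifting gadgets for routing. Both take $n_C$ as the sum of the gadget queen counts and argue that attaining $n_C$ forces every gadget into a \textsf{T}/\textsf{F} state that the clause gadgets accept only for satisfying assignments.

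Where the paper simply appeals to ``the design of the gadgets'', you are more explicit in four places: the partition argument for the upper bound, the use of splitters for repeated variables, attention to connectivity, and the non-interference check you flag as the main obstacle. The paper also leaves that check to the gadget design and its computer verification.
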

	\begin{proof}
		Let $C$ be an instance of \psat. As $C$ is planar, we use Proposition~\ref{prop:gridpath} to encode it into a grid path; this translation is made in polynomial time and is of polynomial size. We then enlarge each square of the grid path representation to be sufficiently large to welcome the gadgets presented in Lemmata~\ref{lem:GadgetBase}--\ref{lem:Gadget3Clause}.  
		Each of these can be done by enlarging all boxes of the grid path by a common factor, thus preserving the polynomial size. To make sure that the boxes connect correctly, we use the extension, the turning and the shifting gadgets to reach each side of a single box in the same way.
		
		If a variable is negated in a clause, we add an inverter gadget before it. Once all variable gadgets are connected to the related clauses, the polyomino $P(C)$ is constructed.
		
		We obtain $n_{C}$ by simply adding the number of queens for each gadget we used. Then, by the design of the gadgets, no more than $n_{C}$ non-attacking queens can be placed on $P(C)$. And if one wants to place $n_{C}$ non-attacking queens, the only possibilities that remain are those from the gadget options, which enforces that each clause gadget is connected to at least one variable gadget set to true.
	\end{proof}
	
	We are now ready to prove Theorem~\ref{mainres:NPQueenPolyo}.
	
	\begin{proof}[Proof of Theorem~\ref{mainres:NPQueenPolyo}]
		We have shown in Lemma~\ref{lem:Verif} that the verification is done in polynomial time. Then, using Proposition~\ref{prop:InstanceSizePoly}, we can create a polyomino $P(C)$ of polynomial size along with a number of queens $n_{C}$ from an instance $C$ of \psat.
		
		If there were a polynomial time algorithm to decide whether $n_{C}$ independent queens can be placed on $P(C)$, we could decide whether $C$ is satisfiable. This proves the NP-hardness of the problem, and thus we have finished the proof of its NP-completeness.
	\end{proof}
	
	Before closing the section, we state a quick corollary of Proposition~\ref{prop:InstanceSizePoly}.
	\begin{corollary}\label{coro:parsimonious}
		Let $C$ be an instance of \psat. The number of maximum queen placements in polyomino $P(C)$ is the same as the number of solutions to the instance $C$. The reduction of \indQueens{} to \psat{} is thus parsimonious.
	\end{corollary}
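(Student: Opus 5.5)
We will construct an explicit bijection between satisfying assignments of $C$ and maximum queen placements on $P(C)$, using the gadget lemmata as local building blocks.

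\emph{Forward map (assignment $\to$ placement).} Given a satisfying assignment $\sigma$ of $C$, we place queens gadget by gadget. For every variable gadget, and every extension, inverter, turning, splitting and shifting gadget on the wire from $x_i$, we use the placement from Lemmata~\ref{lem:GadgetBase}--\ref{lem:GadgetSplitter} and Lemma~\ref{lem:shiftinggadget} that corresponds to $\sigma(x_i)$, with the T/F roles swapped after each inverter. At each clause gadget, the incoming literal values form a satisfying assignment of that clause. By Lemma~\ref{lem:Gadget2Clause} or Lemma~\ref{lem:Gadget3Clause}, there is exactly one maximum completion of the clause interior for those values.

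We must check that queens in adjacent gadgets do not attack each other. The gadgets are glued only along the narrow extension junctions, and those lemmata were verified on the glued configurations. The enlarged grid-path boxes keep distinct wires far apart, so no line of sight crosses between non-adjacent gadgets. Counting queens gadget by gadget then gives exactly $n_C$.

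\emph{Backward map (placement $\to$ assignment).} Take a placement of $n_C$ queens. By Proposition~\ref{prop:InstanceSizePoly}, each gadget holds at most its allotted number of queens, and the total is exactly $n_C$. Hence every gadget is saturated, so its restriction is one of the maximum placements listed in the lemmata.

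Each wire gadget has only two maximum placements, T and F, and gluing forces consistency along the wire. So each wire carries a single truth value, and reading it off at the base gadget defines an assignment $\sigma$. At each clause, the restriction is a maximum clause placement. By Lemmata~\ref{lem:Gadget2Clause}--\ref{lem:Gadget3Clause}, this forces the incoming literals to satisfy the clause and determines the clause interior uniquely. Therefore $\sigma$ satisfies $C$, and the placement equals the forward image of $\sigma$.

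The two maps are mutually inverse, which gives a bijection, and the reduction is parsimonious.

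\emph{Main obstacle.} The delicate step is the \emph{locality} claim: that the maximum placements of the whole polyomino are exactly the products of local maximum placements. This requires, first, that the per-gadget upper bounds add up, which needs a partition of $P(C)$ into gadget regions each capped at its local maximum. Second, it requires that no extra long-range attacks or freedoms arise from the global geometry. I would handle this by making the enlargement factor large enough that every diagonal, row and column leaving a gadget ends in empty space before reaching another gadget. Then all interactions occur only at the computer-verified junctions, and the local counts of $2$, $3$ or $7$ placements multiply correctly.
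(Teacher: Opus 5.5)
Your proposal is correct and follows essentially the same route as the paper: a bijection in which truth values are read off the \textsf{T}/\textsf{F} tiles of the base gadgets in one direction, and in the other direction queens are placed on the base gadgets according to a satisfying assignment and propagated through the wires and clause gadgets. The per-gadget saturation/locality step that you single out as the main obstacle is exactly what the paper takes for granted from Proposition~\ref{prop:InstanceSizePoly} (``the only possibilities that remain are those from the gadget options''), so your write-up is, if anything, more explicit than the paper about what is being assumed.
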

	\begin{proof}
		Each variable gadget has 2 maximum queen placements. However, once they are connected to all the clause gadgets to form a connected polyomino $P(C)$ by the process of Proposition~\ref{prop:InstanceSizePoly}, only the placements related to solutions of $C$, if there are any, are allowable. Indeed, we can read off the value of the variables from $P(C)$ by looking at the base gadgets to see if queens are at positions \textsf T or \textsf F, and similarly, we can construct a maximum queen placement from a solution to the instance $C$ by placing the first queens on the appropriate tiles of all the base variable gadgets and propagating their values.
	\end{proof}

	\section{The counting problem and rooks}\label{sec:counting}
	
	Now we will discuss the problem of counting all solutions to the \indQueens{} problem.
	
	\begin{theorem}[{\cite[Theorem~3.2]{p3sat-count-p}}]\label{thm:psatCountP}
		Counting the number of solutions of \psat{} is \#P-complete.
	\end{theorem}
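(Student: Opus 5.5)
Membership in \#P is immediate: a truth assignment is a polynomial-size certificate, and checking that it satisfies every clause takes polynomial time. So the work is \#P-hardness. I plan to obtain it through a chain of \emph{parsimonious} reductions starting from \#\textsc{Circuit-Sat}, which is \#P-complete by the parsimonious form of the Cook--Levin theorem. Since Definition~\ref{def:psat} allows clauses of size two or three and only asks that the variable--clause incidence graph be planar, I can work directly with that notion and avoid Lichtenstein's extra cycle through the variables.

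The first step is a parsimonious planar \emph{circuit} reduction. I draw the Boolean circuit in the plane with gates as vertices and wires as edges, using polynomially many crossings. I then remove each crossing of wires $a$ and $b$ with the XOR-swap gadget. It introduces $c=a\oplus b$, $a'=c\oplus b$ and $b'=c\oplus a'$, so that $a'=a$ and $b'=b$, and it can be drawn so that the outputs lie on the swapped sides. Each XOR is replaced by its standard planar NAND realisation. The key property is that every new wire is a \emph{function} of the original inputs. Hence the satisfying input assignments of the new planar circuit are in bijection with those of the old one.

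The second step is the Tseitin encoding, applied gate by gate. A NAND gate $z=\neg(x\wedge y)$ becomes the clauses $(x\lor z)$, $(y\lor z)$ and $(\overline{x}\lor\overline{y}\lor\overline{z})$, and the output wire is forced true by a short clause. Every auxiliary variable is determined by the inputs, so the map is parsimonious. For planarity of the incidence graph, I place each gate's clauses in a small disc around the gate vertex and each wire variable on its edge. Since the circuit drawing is planar and each gate's local clause structure is a planar star-like configuration, the bipartite incidence graph embeds in the plane. A wire used as input by several gates (fan-out) becomes one variable vertex incident to clauses in several discs, and this stays planar because the fan-out branches already form a planar tree in the drawing.

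The main obstacle I expect is making these two properties hold simultaneously and rigorously. Planarity has to be checked locally at each gate and at each fan-out point, which requires a careful choice of local embedding, for instance a fixed rotation system of the clauses around $z$. Uniqueness of the auxiliary values in the crossover has to be verified by hand, through a finite check of the gadget's truth table. If the local planarity of the Tseitin clauses of some gate fails, I would fall back on splitting variables with equivalence 2-clauses, $(\overline{u}\lor v)$ and $(u\lor\overline{v})$. These remain parsimonious and let me reroute incidences around a gate without creating crossings.
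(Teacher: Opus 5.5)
The paper does not prove this statement. It imports it by citation and uses it only as the source problem for the parsimonious reduction of Corollary~\ref{coro:parsimonious}. Your argument is a self-contained alternative, and it is sound.

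Your route removes crossings at the level of \emph{circuits}, where parsimony is automatic. Every wire, including those created by the XOR-swap and its NAND expansion, is a function of the inputs. So correctness of the crossover is just $a\oplus b\oplus b=a$, and no truth-table check of auxiliary values is needed. Formula-level crossover gadgets of Lichtenstein type, by contrast, require showing that their auxiliary variables are uniquely determined by the clauses. The price is the planarity bookkeeping for the Tseitin clauses, which the citation avoids entirely.

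Three details should be tightened.

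First, the paper's definition of \psat{} admits only clauses with two or three literals, so a unit clause $(z)$ on the output is not allowed. Use a fresh variable $w$ with the clauses $(z\lor w)$, $(z\lor\overline{w})$, $(\overline{z}\lor w)$. These force $z$ and $w$ true with a unique assignment, and they attach a planar $K_{2,3}$ to the incidence graph only at $z$.

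Second, state explicitly that each gate has a single output wire and that all branching happens at separate fan-out nodes. Then the ports of $x$, $y$, $z$ occupy three sectors around the gate, and the gate's clauses embed for either cyclic order:
\begin{itemize}
\item $(\overline{x}\lor\overline{y}\lor\overline{z})$ in the centre, joined to all three ports;
\item $(x\lor z)$ in the $x$--$z$ sector;
\item $(y\lor z)$ in the $y$--$z$ sector.
\end{itemize}
Without this normalisation the local picture can fail. If two branches of $z$ left the gate on opposite sides of the $y$-wire, the edges from $z$ to them would cut the disc around the gate and separate the $x$-port from the $y$-port. That leaves no place for the 3-clause. With the normalisation, contracting each fan-out tree to its variable vertex, with parallel edge bundles along the branches, gives the planar incidence graph.

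Third, say that the original circuit is first rewritten over fan-in-two NAND gates. This preserves the computed function on the same inputs and is therefore parsimonious.
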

	
	This implies the first part of Main Result~\ref{mainres:counting} by the properties of the reduction we exhibited.
	\begin{theorem}\label{thm:queenCountP}
		Counting the number of maximum solutions of problem \indQueens{} is \#P-complete.
	\end{theorem}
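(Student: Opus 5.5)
The proof splits into membership in \#P and \#P-hardness. For hardness we will transfer Theorem~\ref{thm:psatCountP} through the parsimonious reduction of Corollary~\ref{coro:parsimonious}.

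\emph{Membership.} A certificate is a set of tiles of $P$. Checking that it is an independent queen set takes polynomial time, as in Lemma~\ref{lem:Verif}. Checking that it is \emph{maximum} is harder, since computing the maximum size $\alpha(P)$ is itself hard. We therefore state the counting problem for an instance $(P,m)^Q$: count the non-attacking configurations of exactly $m$ queens. This is the standard \#P witness formulation, and each witness is verified in polynomial time. In the reduction we use $m=n_C$, which by Proposition~\ref{prop:InstanceSizePoly} is an upper bound on the size of any independent set. So whenever the count is nonzero, the configurations counted are exactly the maximum ones.

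\emph{Hardness.} Take an instance $C$ of \psat{} and build $(P(C),n_C)$ in polynomial time using Proposition~\ref{prop:InstanceSizePoly}. By Corollary~\ref{coro:parsimonious}, the number of placements of $n_C$ independent queens on $P(C)$ equals the number of satisfying assignments of $C$. This gives a parsimonious polynomial-time reduction from \#\psat{}, and with Theorem~\ref{thm:psatCountP} we conclude that the counting problem is \#P-complete.

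\emph{Main obstacle.} The delicate point is that Corollary~\ref{coro:parsimonious} really gives a bijection. Two things must hold. First, every $n_C$-placement must restrict, on each gadget, to one of the enumerated maximum placements. This holds because $n_C$ is the sum of the gadget maxima, so each gadget must attain its own maximum. Second, the placement inside each gadget must be determined uniquely by the truth values at its boundary. For example, the clause gadget of Lemma~\ref{lem:Gadget3Clause} must have exactly one completion for each satisfying triple, not several. This is exactly the content of the counts stated in Lemmata~\ref{lem:GadgetBase}--\ref{lem:shiftinggadget}: two placements for the wire gadgets, and 3 or 7 for the clause gadgets, matching the number of satisfying assignments. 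We therefore cite those lemmata, together with the computer verification they rest on, to conclude that propagating the variable values fixes a unique global placement.
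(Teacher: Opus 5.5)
Your hardness argument is the paper's proof: you compose the parsimonious reduction of Corollary~\ref{coro:parsimonious} with Theorem~\ref{thm:psatCountP}. The paper never addresses membership in \#P, and your switch to counting exactly-$m$ configurations of $(P,m)^Q$ with $m=n_C$ supplies it. It also handles unsatisfiable $C$ correctly, since there the number of \emph{maximum} placements is positive rather than $0$. The price is that you prove completeness for this witness version rather than for the literal count of maximum sets, whose membership in \#P is not evident.
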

	
	\begin{proof}
		The reduction described in Section~\ref{sec:reduction} is a \emph{parsimonious reduction} by Corollary~\ref{coro:parsimonious}, which means that the number of solutions is preserved. This is also noticeable in Figure~\ref{fig:extranspoly}, where the three solutions to the \psat{} instance map to three solutions for the queens. Since by Theorem~\ref{thm:psatCountP}, \psat{} is \#P-complete, this means that \indQueens{} is also \#P-complete.
	\end{proof}
	
	And now we also have a look at the related problem of independent rook placement on a polyomino. Dropping diagonal movements for rooks makes it a much easier problem than \indQueens, in 2D at least.
	
	\begin{theorem}[{\cite[Theorem 12]{AR21}}]\label{thm:rooksP}
		The problem \indRooks{} on polyominoes can be solved in polynomial time.
	\end{theorem}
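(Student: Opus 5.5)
The statement is Theorem~\ref{thm:rooksP}, due to Alpert and Roldán. I would prove it by reducing \indRooks{} on a polyomino to maximum bipartite matching.

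The basic objects are maximal segments. Given a polyomino $P$ with $n$ tiles, I decompose it into maximal horizontal segments: maximal runs of consecutive tiles in a row, which are separated by holes or gaps. I do the same with maximal vertical segments. Each tile lies in exactly one horizontal segment and exactly one vertical segment. Since a rook cannot jump over holes, two rooks attack each other exactly when they share a horizontal segment or a vertical segment.

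From this I build a bipartite graph $B(P)$. Its left vertex class is the set of horizontal segments and its right vertex class is the set of vertical segments. For every tile $t$, I add one edge joining the horizontal and vertical segments containing $t$. Two distinct tiles never share both segments, because a horizontal and a vertical segment meet in at most one tile. So this map is a bijection between the tiles of $P$ and the edges of $B(P)$.

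Under this bijection, a set of rooks is non-attacking exactly when no two of the corresponding edges share an endpoint. Non-attacking rook configurations on $P$ are therefore precisely the matchings of $B(P)$. It follows that $(P,m)^R$ is a yes-instance if and only if the maximum matching of $B(P)$ has size at least $m$.

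The construction is polynomial. $B(P)$ has at most $2n$ vertices and exactly $n$ edges, and the segments are found by one scan of rows and columns. A maximum matching can then be computed in polynomial time, for example by Hopcroft--Karp in $O(n^{3/2})$, or by any augmenting-path algorithm. Membership in NP is immediate anyway, by the same reasoning as in Lemma~\ref{lem:Verif}.

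The only delicate step is making precise that attacks are exactly "same maximal segment". This uses the convention of Figure~\ref{fig:mov_queen} that movement stops at any non-tile cell, so two tiles in the same row but separated by a gap do not attack. Once that is checked, the rest is routine. The same bijection also shows that counting maximum rook placements amounts to counting maximum matchings in bipartite graphs, which is how the paper's later \#P-completeness result for \indRooks{} connects to this statement.
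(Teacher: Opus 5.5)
Your proof is correct and follows the argument the paper relies on. The paper does not reprove this statement itself; it cites \cite[Theorem~12]{AR21}. In Section~\ref{sec:counting} it sketches the same reduction of \indRooks{} to maximum bipartite matching: maximal horizontal and vertical segments form the two vertex classes, and each tile gives one edge. Your write-up makes that sketch precise, including injectivity, the identification of non-attacking sets with matchings, and a polynomial-time matching algorithm.
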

	
	We now turn into the complexity class of counting the maximum solution of \indRooks.	We will use a useful graph-theoretical result similar to the visibility representation we used in the previous section. We say that a bipartite graph $G = (X, Y; E)$ has a \textit{grid representation} if $X$ and $Y$ correspond to sets of horizontal and vertical segments in the plane such that $(x_i, y_i)\in E$ if and only if 	the segments $x_i$ and $y_i$, intersect.
	
	\begin{lemma}[{\cite[Theorem~2]{HNZ91}}]\label{lem:planar_gridintersection}
		Any planar bipartite graph admits a grid representation.
	\end{lemma}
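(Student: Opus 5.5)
The plan is to obtain a \emph{contact} representation first and then read it as an intersection representation, since touching segments do intersect. Let $G=(X,Y;E)$ be planar bipartite. First I would embed $G$ in the plane and augment it to a maximal planar bipartite graph $Q\supseteq G$, that is, a quadrangulation, adding vertices and edges as needed. Each added vertex is placed in $X$ or $Y$ so as to respect the bipartition. This is a standard step: one can triangulate faces with dummy vertices of the appropriate colour, and I would handle the degenerate small cases ($|V|\le 3$, disconnected $G$) separately by hand. Since a grid representation of $Q$ restricts to one of $G$ once we delete segments and destroy contacts, it suffices to represent $Q$ and then remove what we added.

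For the quadrangulation $Q$, I would fix an outer face $x_0y_0x_1y_1$ and compute a separating decomposition (equivalently a $2$-orientation, in which every interior vertex has out-degree $2$). From it I would derive two linear orders, one on $X$ and one on $Y$. Each $x\in X$ becomes a horizontal segment at height given by its rank. Each $y\in Y$ becomes a vertical segment at abscissa given by its rank. Each segment is stretched between its two out-neighbours, so that its two endpoints lie on the two segments it points to and its interior is crossed by (touched by) exactly its in-neighbours. This is the de Fraysseix--Ossona de Mendez--Pach style construction.

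The main obstacle is the verification that this placement yields exactly the edges of $Q$. Every edge of $Q$ must appear as an endpoint of one segment lying on the other, and no two non-adjacent segments may meet. For this I would argue inductively by contracting an interior quadrangle face (or removing a degree-$2$ vertex in the decomposition order), extending the representation, and checking planarity of the separating decomposition, which guarantees the new segment fits in the rectangular face region without hitting others.

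Finally, I would delete the segments of dummy vertices. For each edge $xy\in E(Q)\setminus E(G)$, the contact is realised at an endpoint of exactly one of the two segments, and each endpoint carries exactly one contact. I would therefore shorten that segment by a sufficiently small $\varepsilon$ at that end, which removes that single intersection and no others. Choosing $\varepsilon$ smaller than all gaps in the (finite) configuration, the resulting family of horizontal segments for $X$ and vertical segments for $Y$ intersects precisely along $E$, giving the required grid representation.
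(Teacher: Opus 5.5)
The paper gives no argument of its own here: it imports \cite[Theorem~2]{HNZ91} as a black box and only uses the existence of a grid representation. Your route is genuinely different and more constructive. You prove the stronger fact that a quadrangulation $Q\supseteq G$ is a \emph{contact} graph of horizontal and vertical segments (the de Fraysseix--Ossona de Mendez--Pach theorem, via a $2$-orientation). You then descend to $G$ by deleting dummy segments and trimming endpoints.

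That descent is the real new ingredient, and it is sound. With pairwise distinct heights and abscissas, every contact point is an endpoint of at least one of the two segments, and an endpoint lies on at most one perpendicular segment. Moreover, no in-neighbour of $u$ ends exactly at an endpoint of $u$, since that would force an arc in both directions. So an $\varepsilon$-trim destroys exactly one contact. This is why you need contacts rather than crossings: grid intersection graphs are not closed under edge deletion, since every bipartite graph is a subgraph of some $K_{n,n}$.

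Two harmless corrections. First, the two outer vertices of out-degree $0$ have no out-neighbours, so they must be drawn as the bottom and top of the frame. The four outer contacts are then corners lying at endpoints of \emph{both} segments, so ``exactly one'' should read ``at least one''. Second, you mean quadrangulate, not triangulate.

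What your approach buys over the citation is an explicit representation on a polynomial-size integer grid, computable in polynomial time. That is what the reduction in Theorem~\ref{thm:rookCountP} needs in order to build a polyomino from a planar bipartite graph, and the paper leaves it implicit. What it costs is that all the difficulty sits in the step you only sketch, and the sketch does not match your construction. Coordinates defined globally from two linear orders need not behave well under face contraction. Also, undoing a contraction splits an existing segment into two parallel ones and redistributes its contacts, rather than dropping a new segment into an empty rectangle. Either cite the contact-representation theorem, or prove directly from the separating decomposition two things: for every arc $u\to v$, the rank of $u$ lies strictly between the ranks of the two out-neighbours of $v$; and no two segments cross.
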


	The idea is to convert \indRooks{} into a maximum bipartite graph matching problem, meaning we look at the vertical and horizontal lines formed by tiles of the polyomino and create an edge for each intersection. This is exactly the definition of grid intersection graphs. And since the polyomino is connected, the graphs corresponding to the polyominoes are exactly the connected grid intersection graphs, answering the first part of~\cite[Open Question 2]{AR21}.

	\begin{proposition}[{See also~\cite[Theorem~4.1]{MK21}}]\label{prop:countingmaxmatch}
		Counting the maximum matchings in planar bipartite graphs is \#P-complete.
	\end{proposition}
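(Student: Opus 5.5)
Membership in \#P is immediate: a maximum matching is a polynomial-size certificate, checkable in polynomial time. For hardness, the plan is to reduce from the \#P-complete problem of counting perfect matchings in bipartite graphs of maximum degree $3$, or more precisely from a planar variant. Counting perfect matchings in planar graphs is polynomial by Kasteleyn's method. The hardness therefore has to come from maximum matchings that are not perfect. This is exactly the situation treated in~\cite{MK21}, and the proof should follow their strategy: start from a \#P-complete counting problem known to remain hard under planarity, and encode it into maximum matchings.

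The concrete route is as follows. Counting all matchings of every size (the matching polynomial or monomer--dimer count) in planar bipartite graphs is \#P-complete; this is Jerrum's result for planar graphs, which also applies to bipartite ones of low degree. Next, for a planar bipartite $G$ and each $k$, build the graph $G_k$ obtained by attaching to every vertex $v$ of $G$ a gadget of $k$ pendant paths of length two, $v - a_i - b_i$. This keeps the graph planar and bipartite. In $G_k$, every maximum matching saturates all $b_i$, or can be arranged to, and one checks that the number of maximum matchings of $G_k$ is a polynomial in $k$ whose coefficients are the numbers $m_j(G)$ of $j$-matchings of $G$.

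The key computation is the following. Each vertex $v$ left unmatched by a matching $M$ of $G$ contributes a factor $(k+1)$ or $k$, depending on the choice of which pendant edge it uses. Each vertex matched in $M$ contributes $1$, and the pendant paths $a_i b_i$ are forced. This yields
\[
\#\mathrm{MaxMatch}(G_k)=\sum_j m_j(G)\, f(k)^{\,n-2j}
\]
for an explicit $f$. Evaluating at $n+1$ values of $k$ and solving a Vandermonde system recovers all $m_j(G)$ in polynomial time. This makes counting maximum matchings in planar bipartite graphs \#P-hard under polynomial-time Turing reductions.

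The main obstacle is getting the gadget exactly right. We need that every maximum matching of $G_k$ restricts to some matching of $G$, and that the local count is uniform, independent of the rest of $M$. The Vandermonde interpolation also requires $f$ to take distinct values. Pendant paths of length two are the first choice because the $b_i$ force the $a_i$ unless $v$ takes $a_i$. If that makes maximum matchings not uniformly sized, I would instead attach pendant stars and count via the K\"onig deficiency. Failing both, I would simply cite~\cite[Theorem~4.1]{MK21} directly, as the statement suggests.
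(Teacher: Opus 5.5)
Starting from counting all matchings in planar bipartite graphs is the right choice. The paper does the same, citing Vadhan; Jerrum's result covers general planar graphs. The gap is the gadget, and as stated it fails. With $k$ pendant paths $v a_i b_i$ at every vertex, each path contributes at most one edge to any matching, because both of its edges use $a_i$. It contributes exactly one edge whether or not $v$ is matched inside $G$. So a matching of $G_k$ whose restriction to $E(G)$ is $M$ has size at most $|M|+nk$. The maximum matchings of $G_k$ are therefore exactly those whose restriction $M$ is a \emph{maximum} matching of $G$, and a direct count gives $\#\mathrm{MaxMatch}(G_k)=\mathrm{mm}(G)\,(k+1)^{n-2\nu(G)}$, where $\mathrm{mm}(G)$ is the number of maximum matchings and $\nu(G)$ the matching number. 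Your identity $\sum_j m_j(G)f(k)^{n-2j}$ is false, and interpolation only returns $\mathrm{mm}(G)$, the quantity you are trying to prove hard. Your fallback of pendant stars at every vertex fails the other way: each $M$-unmatched vertex gains an edge, so the size becomes $n-|M|$ and the unique maximizer is $M=\emptyset$. Citing \cite{MK21} is not a proof.

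The missing idea is to use the bipartition $G=(U,V)$: attach one pendant edge to each vertex of $U$ only. Every edge of $G$ covers exactly one vertex of $U$. So any matching $M$ of $G$ extends uniquely, by the pendant edges at the $U$-vertices it misses, to a matching of size exactly $|U|$, which is the largest possible. Conversely, deleting the pendant edges from a maximum matching gives back a matching of $G$. This is a bijection between all matchings of $G$ and maximum matchings of $G'$, and $G'$ is still planar and bipartite. The reduction is therefore parsimonious and needs no interpolation; this is the paper's argument. Stars with $k$ leaves on $U$ only would also work, giving $\sum_j m_j(G)\,k^{|U|-j}$, but $k=1$ already suffices.
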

	\begin{proof}		
		Let $G=(U,V)$ be a planar bipartite graph. Counting all its matchings is \#P-complete \cite[Theorem 4.1]{Vadhan01}. Without loss of generality, we assume that $|U| \leq |V|$ and construct a new graph $G'$ by adding one dummy edge to each vertex in $U$, and thus also creating a new vertex in $V$ each time.
		
		Now we transform a matching $M$ in $G$ into a maximum matching $M'$ in $G'$ by including the dummy edge for all vertices in $U$ that are unmatched. For any maximum matching in $M'$, we can also invert this process by simply removing all the dummy edges and thus obtaining a matching in $G$. Since this mapping is left- and right-invertible, we have a bijection between matchings in $G$ and maximum matchings in $G'$, and the number of matchings in $G$ is the same as the number of matchings in $G'$, which has at most twice as many vertices, so its size is polynomially bounded. 
		
		We have exhibited a polynomial reduction to the problem of counting all matchings of a planar bipartite graph and thus counting the maximum matchings in a planar bipartite graph is \#P-complete as well.
	\end{proof}
	
	\begin{remark}
		This proof closely follows the one of Miklós--Krész~\cite[Theorem~4.1]{MK21}. We include it there, as the authors withdrew their preprint. Indeed, their main result, proving that counting all matchings in planar bipartite graphs is \#P-complete, was already discovered earlier by Salil P. Vadhan~\cite{Vadhan01}. However, this does not affect the validity of their proof that counting maximum matchings in planar bipartite graphs is \#P-complete. We are not aware of another reference for this specific fact, so we decided to cite it despite its withdrawal.
	\end{remark}

	We are now ready to finish the proof of Main Result~\ref{mainres:counting}. One of the most famous results of counting complexity is the fact that, while finding one perfect matching in a bipartite graph can be done in polynomial time, counting them all is \#P-complete, and so as hard as counting the solutions of a SAT formula. The same is true for \indRooks{}.
	\begin{theorem}\label{thm:rookCountP}
		Counting the number of maximum solutions of problem \indRooks{} is \#P-complete.
	\end{theorem}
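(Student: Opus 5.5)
Membership in \#P is immediate: given a polyomino $P$ and a candidate set of tiles, checking that it is a non-attacking rook placement of size $m$ takes polynomial time (Lemma~\ref{lem:Verif}, with rook moves instead of queen moves). The maximum size itself is computable in polynomial time by Theorem~\ref{thm:rooksP}. For hardness, I would reduce from counting maximum matchings in planar bipartite graphs, which is \#P-complete by Proposition~\ref{prop:countingmaxmatch}, and I would make this reduction parsimonious.

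The dictionary between rooks and matchings is the one sketched before Proposition~\ref{prop:countingmaxmatch}. To a polyomino $P$, associate the bipartite graph $G_P$ whose left vertices are the maximal horizontal runs of tiles of $P$ and whose right vertices are the maximal vertical runs. Each tile lies in exactly one horizontal and one vertical run, so each tile gives an edge of $G_P$. Two rooks attack each other exactly when they share a horizontal or vertical run, since holes block movement and runs are maximal. Hence non-attacking rook placements on $P$ are in bijection with matchings of $G_P$, and maximum placements correspond to maximum matchings. It therefore suffices, given a planar bipartite graph $G=(U,V;E)$, to build in polynomial time a polyomino $P$ with $G_P$ related to $G$ in a way that preserves the number of maximum matchings.

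The construction starts from Lemma~\ref{lem:planar_gridintersection}: $G$ has a grid representation by horizontal segments (for $U$) and vertical segments (for $V$), and I would take coordinates polynomially bounded. I would first reduce to connected $G$ (or join components, see below) and place the segments on an integer grid after scaling by a factor such as $3$. Then I would let $P$ be the union of the unit tiles along all segments. Since segments of the same orientation can be assumed to lie on distinct lines and not to overlap, each segment becomes exactly one maximal run. Each intersection becomes exactly one shared tile, which gives one edge. The difficulty is that every tile on a horizontal segment that is not an intersection point is also a vertical run of length one. Symmetrically, every non-crossing tile of a vertical segment is a singleton horizontal run. So $G_P$ is $G$ with pendant vertices attached.

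This spurious-tile issue is the main obstacle, because pendant vertices change the count of maximum matchings. My first attempt is to trim every segment so that it ends exactly at its extreme intersection points, and to space crossings so that consecutive crossings along a segment are adjacent tiles. After rescaling, that is impossible in general. Instead I would compensate: maximum matchings of $G_P$ saturate all pendant leaves where possible, which roughly corresponds to counting all matchings of $G$ weighted by the choices of leaves. I would try to arrange a uniform structure in which each segment vertex receives exactly one private pendant leaf, for instance by keeping exactly one free tile per segment, say an endpoint extension, and packing the crossings tightly otherwise. Then maximum matchings of $G_P$ biject with all matchings of $G$, mirroring the dummy-edge trick in the proof of Proposition~\ref{prop:countingmaxmatch}, and counting all matchings of planar bipartite graphs is \#P-complete by Vadhan. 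If tight packing is not achievable, the fallback is polynomial interpolation over the number of free tiles per segment. Connectivity of $P$ follows when $G$ is connected. Otherwise, adding a joining path adds a fixed multiplicative factor to the count, which can be divided out.
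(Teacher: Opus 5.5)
Your translation of rook placements into matchings of the run graph $G_P$ is correct. You are also right that tiling a scaled grid representation of $G$ yields not $G$ but $G$ with pendant vertices on \emph{both} sides: a free tile of a horizontal segment $u$ is a singleton vertical run hanging off $u$, and a free tile of a vertical segment $v$ is a singleton horizontal run hanging off $v$. The paper's proof runs through the same chain (Proposition~\ref{prop:countingmaxmatch}, Lemma~\ref{lem:planar_gridintersection}, product over connected components). However, it simply asserts that run graphs of polyominoes are exactly the connected grid intersection graphs, so it never confronts this issue. That assertion is not literally true: for instance, $C_6$ is planar bipartite, but its six tiles would have to form a closed lattice walk with three horizontal unit steps, which is impossible.

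The gap is in your repair. Suppose each horizontal segment $u$ carries $a_u\ge 1$ free tiles and each vertical segment $v$ carries $b_v\ge 1$. Matching every vertex of $G$ to one of its own leaves gives a matching of size $|U|+|V|$, and $U\cup V$ is a vertex cover of $G_P$ of the same size. So a maximum matching $M$ has $|M|=|U\cup V|$. Every edge of $M$ meets the cover, and each cover vertex lies on at most one edge of $M$, so every edge of $M$ has exactly one endpoint in $U\cup V$. Edges of $G$ have two such endpoints, so none is ever used. Hence the number of maximum rook placements is $\prod_u a_u\prod_v b_v$, independent of $G$.

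With one private leaf per segment, as you propose, there is exactly one maximum placement. The claimed bijection with all matchings of $G$ therefore fails, and interpolating over the numbers of free tiles cannot recover anything about $G$ while all of them stay positive. The dummy-edge trick of Proposition~\ref{prop:countingmaxmatch} works only because the leaves sit on one side of the bipartition. In your setting that would require every vertical segment to have no free tile at all (its crossings in consecutive rows), which is exactly the tight packing you concede is unavailable in general. So the step that actually produces a polyomino whose number of maximum rook placements encodes a \#P-hard quantity is missing.

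Separately, your claim that a joining path multiplies the count by a fixed factor is unjustified. It is also unnecessary: you can treat components separately and multiply, as the paper does.
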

	\begin{proof}
		Counting the maximum matchings in a planar bipartite graph is \#P-complete by Proposition~\ref{prop:countingmaxmatch}. Planar bipartite graphs are a subset of grid intersection graphs by Lemma~\ref{lem:planar_gridintersection}. This immediately implies that counting maximum matchings in grid intersection graphs is \#P-complete. Since counting the maximum matchings in disconnected graphs can be simply done by counting them in all (polynomially many) connected components and multiplying the result, this means that it is also \#P-complete for connected grid intersection graphs and we have completed the proof.
	\end{proof}
	
	Thus, while finding one maximum independent set is much easier for rooks than queens (unless P $=$ NP), counting all solutions is equally hard for both problems.

	\addcontentsline{toc}{section}{Acknowledgements}
	
	\section*{Acknowledgments}
	
	The authors would like to thank Érika Roldán for their support and for their collaboration on our initial work~\cite{LRMR25} on the independent queen set on polycubes, which strongly inspired this paper.
	
	ALR's research is funded by a postdoctoral research scholarship of the Fonds de Recherche du Québec -- Nature et Technologie [grant number 326641] and was funded by the Deutsche Forschungsgemeinschaft (DFG, German Research Foundation) under Germany's Excellence Strategy -- GZ 2047/2, Projekt-ID 390685813.
	
	\printbibliography
	
	\appendix
	\section{Examples of the translation}\label{sec:ex}
	
	We begin by giving, in Figure~\ref{table:3clausequeenplacement}, the 7 solutions of the 3-clause $x_a\lor x_b \lor x_c$.	
	
	\begin{figure}[h]
		\centering
		\begin{tabular}{cccc}
			3-clause $\begin{smallmatrix}
			x_b&x_c\\x_a&-
			\end{smallmatrix}$ & \texttt{TTT} & \texttt{TTF} & \texttt{TFT}\\
			\begin{tikzpicture}[scale=.28,transform shape]
			\foreach \x/\y in {
				4/14, 10/14,
				3/13, 4/13, 10/13, 11/13, 
				2/12, 4/12, 5/12, 9/12, 10/12, 12/12,
				2/11, 3/11, 11/11, 12/11,
				3/10, 11/10,
				3/9, 4/9, 6/9, 7/9, 8/9, 10/9, 11/9,
				4/8, 5/8, 6/8, 8/8, 9/8, 10/8,
				3/7, 4/7, 5/7, 9/7, 10/7,
				3/6, 10/6, 11/6, 12/6,
				2/5, 3/5, 11/5,
				2/4, 4/4, 5/4, 11/4, 12/4,
				3/3, 4/3, 9/3, 10/3, 12/3,
				4/2, 10/2, 11/2, 12/2, 13/2,
				10/1
			} {
				\path [draw=brown!, fill=brown!30] (.5+\x-0.45, .5+\y-1.45)
				-- ++(0,.9)
				-- ++(.9,0)
				-- ++(0,-.9)
				--cycle;
			}
			
			\foreach \x/\y in {
				1/3,1/13,13/13
			} {
				\path [draw=gray!, fill=gray!30] (.5+\x-0.45, .5+\y-1.45)
				-- ++(0,.9)
				-- ++(.9,0)
				-- ++(0,-.9)
				--cycle;
				\node[anchor=center, scale=1.5] at (.5+\x, \y-.5) {\textbf{\textsf{T}}};
			}
			
			\foreach \x/\y in {
				2/3,2/13,12/13
			} {
				\path [draw=sangria!, fill=sangria!30] (.5+\x-0.45, .5+\y-1.45)
				-- ++(0,.9)
				-- ++(.9,0)
				-- ++(0,-.9)
				--cycle;
				\node[anchor=center, scale=1.5] at (.5+\x, \y-.5) {\textbf{\textsf{F}}};
			}
			
			\foreach \x/\y in {4/14, 10/14, 5/12, 9/12, 12/6, 5/4, 11/4, 9/3, 4/2, 12/2, 10/1} {
				\node[anchor=center, scale=1.8] at (.5+\x, \y-.5) {\queen};
			}
			
			\draw[line width=1.6pt, black] (1,2) -- (4,2) -- (4,1) -- (5,1) -- (5,3) -- (6,3) -- (6,4) -- (4,4) -- (4,3) -- (1,3) -- cycle; 
			\draw[line width=1.6pt, black] (1,12) -- (4,12) -- (4,11) -- (6,11) -- (6,12) -- (5,12) -- (5,14) -- (4,14) -- (4,13) -- (1,13) -- cycle; 
			
			\draw[line width=1.6pt, black] (10,0) -- (11,0) -- (11,1) -- (14,1) -- (14,2) -- (13,2) -- (13,4) -- (12,4) -- (12,5) -- (13,5) -- (13,6) -- (11,6) -- (11,7) -- (10,7) -- (10,5) -- (11,5) -- (11,3) -- (12,3) -- (12,2) -- (11,2) -- (11,3) -- (9,3) -- (9,2) -- (10,2) -- (10,0) -- cycle;
			
			\draw[line width=1.6pt, black] (14,12) -- (11,12) -- (11,11) -- (9,11) -- (9,12) -- (10,12) -- (10,14) -- (11,14) -- (11,13) -- (14,13) -- cycle; 
			\draw[line width=1.6pt, black] (2,3) -- (3,3) -- (3,4) -- (4,4) -- (4,5) -- (2,5) -- cycle; 
			\draw[line width=1.6pt, black] (3,5) -- (4,5) -- (4,6) -- (5,6) -- (5,7) -- (3,7) -- cycle; 
			\draw[line width=1.6pt, black] (2,10) -- (4,10) -- (4,11) -- (3,11) -- (3,12) -- (2,12) -- cycle; 
			\draw[line width=1.6pt, black] (3,8) -- (5,8) -- (5,9) -- (4,9) -- (4,10) -- (3,10) -- cycle; 
			\draw[line width=1.6pt, black] (13,10) -- (11,10) -- (11,11) -- (12,11) -- (12,12) -- (13,12) -- cycle; 
			\draw[line width=1.6pt, black] (12,8) -- (10,8) -- (10,9) -- (11,9) -- (11,10) -- (12,10) -- cycle; 
			
			\draw[line width=1.6pt, black] (4,7) -- (5,7) -- (5,6) -- (6,6) -- (6,7) -- (7,7) -- (7,8) -- (8,8) -- (8,7) -- (9,7) -- (9,6) -- (10,6) -- (10,7) -- (11,7) -- (11,8) -- (10,8) -- (9,8) -- (9,9) -- (6,9) -- (6,8) -- (4,8) -- cycle;
			\end{tikzpicture}
			&
			\begin{tikzpicture}[scale=.28,transform shape]
			\foreach \x/\y in {
				4/14, 10/14,
				3/13, 4/13, 10/13, 11/13, 
				2/12, 4/12, 5/12, 9/12, 10/12, 12/12,
				2/11, 3/11, 11/11, 12/11,
				3/10, 11/10,
				3/9, 4/9, 6/9, 7/9, 8/9, 10/9, 11/9,
				4/8, 5/8, 6/8, 8/8, 9/8, 10/8,
				3/7, 4/7, 5/7, 9/7, 10/7,
				3/6, 10/6, 11/6, 12/6,
				2/5, 3/5, 11/5,
				2/4, 4/4, 5/4, 11/4, 12/4,
				3/3, 4/3, 9/3, 10/3, 12/3,
				4/2, 10/2, 11/2, 12/2, 13/2,
				10/1
			} {
				\path [draw=brown!, fill=brown!30] (.5+\x-0.45, .5+\y-1.45)
				-- ++(0,.9)
				-- ++(.9,0)
				-- ++(0,-.9)
				--cycle;
			}
			
			\foreach \x/\y in {
				1/3,1/13,13/13
			} {
				\path [draw=gray!, fill=gray!30] (.5+\x-0.45, .5+\y-1.45)
				-- ++(0,.9)
				-- ++(.9,0)
				-- ++(0,-.9)
				--cycle;
			}
			
			\foreach \x/\y in {
				2/3,2/13,12/13
			} {
				\path [draw=sangria!, fill=sangria!30] (.5+\x-0.45, .5+\y-1.45)
				-- ++(0,.9)
				-- ++(.9,0)
				-- ++(0,-.9)
				--cycle;
			}
			
			\foreach \x/\y in {4/14, 10/14, 5/12, 9/12, 12/6, 5/4, 11/4, 9/3, 4/2, 12/2, 10/1,
				1/3,2/5,3/7,3/9,2/11,1/13,
				6/8,8/8,10/7,11/9,12/11,13/13
			} {
				\node[anchor=center, scale=1.8] at (.5+\x, \y-.5) {\queen};
			}
			
			\draw[line width=1.6pt, black] (1,2) -- (4,2) -- (4,1) -- (5,1) -- (5,3) -- (6,3) -- (6,4) -- (4,4) -- (4,3) -- (1,3) -- cycle; 
			\draw[line width=1.6pt, black] (1,12) -- (4,12) -- (4,11) -- (6,11) -- (6,12) -- (5,12) -- (5,14) -- (4,14) -- (4,13) -- (1,13) -- cycle; 
			
			\draw[line width=1.6pt, black] (10,0) -- (11,0) -- (11,1) -- (14,1) -- (14,2) -- (13,2) -- (13,4) -- (12,4) -- (12,5) -- (13,5) -- (13,6) -- (11,6) -- (11,7) -- (10,7) -- (10,5) -- (11,5) -- (11,3) -- (12,3) -- (12,2) -- (11,2) -- (11,3) -- (9,3) -- (9,2) -- (10,2) -- (10,0) -- cycle;
			
			\draw[line width=1.6pt, black] (14,12) -- (11,12) -- (11,11) -- (9,11) -- (9,12) -- (10,12) -- (10,14) -- (11,14) -- (11,13) -- (14,13) -- cycle; 
			\draw[line width=1.6pt, black] (2,3) -- (3,3) -- (3,4) -- (4,4) -- (4,5) -- (2,5) -- cycle; 
			\draw[line width=1.6pt, black] (3,5) -- (4,5) -- (4,6) -- (5,6) -- (5,7) -- (3,7) -- cycle; 
			\draw[line width=1.6pt, black] (2,10) -- (4,10) -- (4,11) -- (3,11) -- (3,12) -- (2,12) -- cycle; 
			\draw[line width=1.6pt, black] (3,8) -- (5,8) -- (5,9) -- (4,9) -- (4,10) -- (3,10) -- cycle; 
			\draw[line width=1.6pt, black] (13,10) -- (11,10) -- (11,11) -- (12,11) -- (12,12) -- (13,12) -- cycle; 
			\draw[line width=1.6pt, black] (12,8) -- (10,8) -- (10,9) -- (11,9) -- (11,10) -- (12,10) -- cycle; 
			
			\draw[line width=1.6pt, black] (4,7) -- (5,7) -- (5,6) -- (6,6) -- (6,7) -- (7,7) -- (7,8) -- (8,8) -- (8,7) -- (9,7) -- (9,6) -- (10,6) -- (10,7) -- (11,7) -- (11,8) -- (10,8) -- (9,8) -- (9,9) -- (6,9) -- (6,8) -- (4,8) -- cycle;
			\end{tikzpicture}
			&
			\begin{tikzpicture}[scale=.28,transform shape]
			\foreach \x/\y in {
				4/14, 10/14,
				3/13, 4/13, 10/13, 11/13, 
				2/12, 4/12, 5/12, 9/12, 10/12, 12/12,
				2/11, 3/11, 11/11, 12/11,
				3/10, 11/10,
				3/9, 4/9, 6/9, 7/9, 8/9, 10/9, 11/9,
				4/8, 5/8, 6/8, 8/8, 9/8, 10/8,
				3/7, 4/7, 5/7, 9/7, 10/7,
				3/6, 10/6, 11/6, 12/6,
				2/5, 3/5, 11/5,
				2/4, 4/4, 5/4, 11/4, 12/4,
				3/3, 4/3, 9/3, 10/3, 12/3,
				4/2, 10/2, 11/2, 12/2, 13/2,
				10/1
			} {
				\path [draw=brown!, fill=brown!30] (.5+\x-0.45, .5+\y-1.45)
				-- ++(0,.9)
				-- ++(.9,0)
				-- ++(0,-.9)
				--cycle;
			}
			
			\foreach \x/\y in {
				1/3,1/13,13/13
			} {
				\path [draw=gray!, fill=gray!30] (.5+\x-0.45, .5+\y-1.45)
				-- ++(0,.9)
				-- ++(.9,0)
				-- ++(0,-.9)
				--cycle;
			}
			
			\foreach \x/\y in {
				2/3,2/13,12/13
			} {
				\path [draw=sangria!, fill=sangria!30] (.5+\x-0.45, .5+\y-1.45)
				-- ++(0,.9)
				-- ++(.9,0)
				-- ++(0,-.9)
				--cycle;
			}
			
			\foreach \x/\y in {4/14, 10/14, 5/12, 9/12, 12/6, 5/4, 11/4, 9/3, 4/2, 12/2, 10/1,
				1/3,2/5,3/7,3/9,2/11,1/13,
				6/8,8/9,9/7,10/9,11/11,12/13
			} {
				\node[anchor=center, scale=1.8] at (.5+\x, \y-.5) {\queen};
			}
			
			\draw[line width=1.6pt, black] (1,2) -- (4,2) -- (4,1) -- (5,1) -- (5,3) -- (6,3) -- (6,4) -- (4,4) -- (4,3) -- (1,3) -- cycle; 
			\draw[line width=1.6pt, black] (1,12) -- (4,12) -- (4,11) -- (6,11) -- (6,12) -- (5,12) -- (5,14) -- (4,14) -- (4,13) -- (1,13) -- cycle; 
			
			\draw[line width=1.6pt, black] (10,0) -- (11,0) -- (11,1) -- (14,1) -- (14,2) -- (13,2) -- (13,4) -- (12,4) -- (12,5) -- (13,5) -- (13,6) -- (11,6) -- (11,7) -- (10,7) -- (10,5) -- (11,5) -- (11,3) -- (12,3) -- (12,2) -- (11,2) -- (11,3) -- (9,3) -- (9,2) -- (10,2) -- (10,0) -- cycle;
			
			\draw[line width=1.6pt, black] (14,12) -- (11,12) -- (11,11) -- (9,11) -- (9,12) -- (10,12) -- (10,14) -- (11,14) -- (11,13) -- (14,13) -- cycle; 
			\draw[line width=1.6pt, black] (2,3) -- (3,3) -- (3,4) -- (4,4) -- (4,5) -- (2,5) -- cycle; 
			\draw[line width=1.6pt, black] (3,5) -- (4,5) -- (4,6) -- (5,6) -- (5,7) -- (3,7) -- cycle; 
			\draw[line width=1.6pt, black] (2,10) -- (4,10) -- (4,11) -- (3,11) -- (3,12) -- (2,12) -- cycle; 
			\draw[line width=1.6pt, black] (3,8) -- (5,8) -- (5,9) -- (4,9) -- (4,10) -- (3,10) -- cycle; 
			\draw[line width=1.6pt, black] (13,10) -- (11,10) -- (11,11) -- (12,11) -- (12,12) -- (13,12) -- cycle; 
			\draw[line width=1.6pt, black] (12,8) -- (10,8) -- (10,9) -- (11,9) -- (11,10) -- (12,10) -- cycle; 
			
			\draw[line width=1.6pt, black] (4,7) -- (5,7) -- (5,6) -- (6,6) -- (6,7) -- (7,7) -- (7,8) -- (8,8) -- (8,7) -- (9,7) -- (9,6) -- (10,6) -- (10,7) -- (11,7) -- (11,8) -- (10,8) -- (9,8) -- (9,9) -- (6,9) -- (6,8) -- (4,8) -- cycle;
			\end{tikzpicture}
			&
			\begin{tikzpicture}[scale=.28,transform shape]
			\foreach \x/\y in {
				4/14, 10/14,
				3/13, 4/13, 10/13, 11/13, 
				2/12, 4/12, 5/12, 9/12, 10/12, 12/12,
				2/11, 3/11, 11/11, 12/11,
				3/10, 11/10,
				3/9, 4/9, 6/9, 7/9, 8/9, 10/9, 11/9,
				4/8, 5/8, 6/8, 8/8, 9/8, 10/8,
				3/7, 4/7, 5/7, 9/7, 10/7,
				3/6, 10/6, 11/6, 12/6,
				2/5, 3/5, 11/5,
				2/4, 4/4, 5/4, 11/4, 12/4,
				3/3, 4/3, 9/3, 10/3, 12/3,
				4/2, 10/2, 11/2, 12/2, 13/2,
				10/1
			} {
				\path [draw=brown!, fill=brown!30] (.5+\x-0.45, .5+\y-1.45)
				-- ++(0,.9)
				-- ++(.9,0)
				-- ++(0,-.9)
				--cycle;
			}
			
			\foreach \x/\y in {
				1/3,1/13,13/13
			} {
				\path [draw=gray!, fill=gray!30] (.5+\x-0.45, .5+\y-1.45)
				-- ++(0,.9)
				-- ++(.9,0)
				-- ++(0,-.9)
				--cycle;
			}
			
			\foreach \x/\y in {
				2/3,2/13,12/13
			} {
				\path [draw=sangria!, fill=sangria!30] (.5+\x-0.45, .5+\y-1.45)
				-- ++(0,.9)
				-- ++(.9,0)
				-- ++(0,-.9)
				--cycle;
			}
			
			\foreach \x/\y in {4/14, 10/14, 5/12, 9/12, 12/6, 5/4, 11/4, 9/3, 4/2, 12/2, 10/1,
				1/3,2/5,3/7,4/9,3/11,2/13,
				6/8,8/8,10/7,11/9,12/11,13/13
			} {
				\node[anchor=center, scale=1.8] at (.5+\x, \y-.5) {\queen};
			}
			
			\draw[line width=1.6pt, black] (1,2) -- (4,2) -- (4,1) -- (5,1) -- (5,3) -- (6,3) -- (6,4) -- (4,4) -- (4,3) -- (1,3) -- cycle; 
			\draw[line width=1.6pt, black] (1,12) -- (4,12) -- (4,11) -- (6,11) -- (6,12) -- (5,12) -- (5,14) -- (4,14) -- (4,13) -- (1,13) -- cycle; 
			
			\draw[line width=1.6pt, black] (10,0) -- (11,0) -- (11,1) -- (14,1) -- (14,2) -- (13,2) -- (13,4) -- (12,4) -- (12,5) -- (13,5) -- (13,6) -- (11,6) -- (11,7) -- (10,7) -- (10,5) -- (11,5) -- (11,3) -- (12,3) -- (12,2) -- (11,2) -- (11,3) -- (9,3) -- (9,2) -- (10,2) -- (10,0) -- cycle;
			
			\draw[line width=1.6pt, black] (14,12) -- (11,12) -- (11,11) -- (9,11) -- (9,12) -- (10,12) -- (10,14) -- (11,14) -- (11,13) -- (14,13) -- cycle; 
			\draw[line width=1.6pt, black] (2,3) -- (3,3) -- (3,4) -- (4,4) -- (4,5) -- (2,5) -- cycle; 
			\draw[line width=1.6pt, black] (3,5) -- (4,5) -- (4,6) -- (5,6) -- (5,7) -- (3,7) -- cycle; 
			\draw[line width=1.6pt, black] (2,10) -- (4,10) -- (4,11) -- (3,11) -- (3,12) -- (2,12) -- cycle; 
			\draw[line width=1.6pt, black] (3,8) -- (5,8) -- (5,9) -- (4,9) -- (4,10) -- (3,10) -- cycle; 
			\draw[line width=1.6pt, black] (13,10) -- (11,10) -- (11,11) -- (12,11) -- (12,12) -- (13,12) -- cycle; 
			\draw[line width=1.6pt, black] (12,8) -- (10,8) -- (10,9) -- (11,9) -- (11,10) -- (12,10) -- cycle; 
			
			\draw[line width=1.6pt, black] (4,7) -- (5,7) -- (5,6) -- (6,6) -- (6,7) -- (7,7) -- (7,8) -- (8,8) -- (8,7) -- (9,7) -- (9,6) -- (10,6) -- (10,7) -- (11,7) -- (11,8) -- (10,8) -- (9,8) -- (9,9) -- (6,9) -- (6,8) -- (4,8) -- cycle;
			\end{tikzpicture}
			\\
			\begin{tikzpicture}[scale=.28,transform shape]
			\foreach \x/\y in {
				4/14, 10/14,
				3/13, 4/13, 10/13, 11/13, 
				2/12, 4/12, 5/12, 9/12, 10/12, 12/12,
				2/11, 3/11, 11/11, 12/11,
				3/10, 11/10,
				3/9, 4/9, 6/9, 7/9, 8/9, 10/9, 11/9,
				4/8, 5/8, 6/8, 8/8, 9/8, 10/8,
				3/7, 4/7, 5/7, 9/7, 10/7,
				3/6, 10/6, 11/6, 12/6,
				2/5, 3/5, 11/5,
				2/4, 4/4, 5/4, 11/4, 12/4,
				3/3, 4/3, 9/3, 10/3, 12/3,
				4/2, 10/2, 11/2, 12/2, 13/2,
				10/1
			} {
				\path [draw=brown!, fill=brown!30] (.5+\x-0.45, .5+\y-1.45)
				-- ++(0,.9)
				-- ++(.9,0)
				-- ++(0,-.9)
				--cycle;
			}
			
			\foreach \x/\y in {
				1/3,1/13,13/13
			} {
				\path [draw=gray!, fill=gray!30] (.5+\x-0.45, .5+\y-1.45)
				-- ++(0,.9)
				-- ++(.9,0)
				-- ++(0,-.9)
				--cycle;
			}
			
			\foreach \x/\y in {
				2/3,2/13,12/13
			} {
				\path [draw=sangria!, fill=sangria!30] (.5+\x-0.45, .5+\y-1.45)
				-- ++(0,.9)
				-- ++(.9,0)
				-- ++(0,-.9)
				--cycle;
			}
			
			\foreach \x/\y in {4/14, 10/14, 5/12, 9/12, 12/6, 5/4, 11/4, 9/3, 4/2, 12/2, 10/1,
				1/3,2/5,3/7,4/9,3/11,2/13,
				6/8,8/9,9/7,10/9,11/11,12/13
			} {
				\node[anchor=center, scale=1.8] at (.5+\x, \y-.5) {\queen};
			}
			
			\draw[line width=1.6pt, black] (1,2) -- (4,2) -- (4,1) -- (5,1) -- (5,3) -- (6,3) -- (6,4) -- (4,4) -- (4,3) -- (1,3) -- cycle; 
			\draw[line width=1.6pt, black] (1,12) -- (4,12) -- (4,11) -- (6,11) -- (6,12) -- (5,12) -- (5,14) -- (4,14) -- (4,13) -- (1,13) -- cycle; 
			
			\draw[line width=1.6pt, black] (10,0) -- (11,0) -- (11,1) -- (14,1) -- (14,2) -- (13,2) -- (13,4) -- (12,4) -- (12,5) -- (13,5) -- (13,6) -- (11,6) -- (11,7) -- (10,7) -- (10,5) -- (11,5) -- (11,3) -- (12,3) -- (12,2) -- (11,2) -- (11,3) -- (9,3) -- (9,2) -- (10,2) -- (10,0) -- cycle;
			
			\draw[line width=1.6pt, black] (14,12) -- (11,12) -- (11,11) -- (9,11) -- (9,12) -- (10,12) -- (10,14) -- (11,14) -- (11,13) -- (14,13) -- cycle; 
			\draw[line width=1.6pt, black] (2,3) -- (3,3) -- (3,4) -- (4,4) -- (4,5) -- (2,5) -- cycle; 
			\draw[line width=1.6pt, black] (3,5) -- (4,5) -- (4,6) -- (5,6) -- (5,7) -- (3,7) -- cycle; 
			\draw[line width=1.6pt, black] (2,10) -- (4,10) -- (4,11) -- (3,11) -- (3,12) -- (2,12) -- cycle; 
			\draw[line width=1.6pt, black] (3,8) -- (5,8) -- (5,9) -- (4,9) -- (4,10) -- (3,10) -- cycle; 
			\draw[line width=1.6pt, black] (13,10) -- (11,10) -- (11,11) -- (12,11) -- (12,12) -- (13,12) -- cycle; 
			\draw[line width=1.6pt, black] (12,8) -- (10,8) -- (10,9) -- (11,9) -- (11,10) -- (12,10) -- cycle; 
			
			\draw[line width=1.6pt, black] (4,7) -- (5,7) -- (5,6) -- (6,6) -- (6,7) -- (7,7) -- (7,8) -- (8,8) -- (8,7) -- (9,7) -- (9,6) -- (10,6) -- (10,7) -- (11,7) -- (11,8) -- (10,8) -- (9,8) -- (9,9) -- (6,9) -- (6,8) -- (4,8) -- cycle;
			\end{tikzpicture}
			&
			\begin{tikzpicture}[scale=.28,transform shape]
			\foreach \x/\y in {
				4/14, 10/14,
				3/13, 4/13, 10/13, 11/13, 
				2/12, 4/12, 5/12, 9/12, 10/12, 12/12,
				2/11, 3/11, 11/11, 12/11,
				3/10, 11/10,
				3/9, 4/9, 6/9, 7/9, 8/9, 10/9, 11/9,
				4/8, 5/8, 6/8, 8/8, 9/8, 10/8,
				3/7, 4/7, 5/7, 9/7, 10/7,
				3/6, 10/6, 11/6, 12/6,
				2/5, 3/5, 11/5,
				2/4, 4/4, 5/4, 11/4, 12/4,
				3/3, 4/3, 9/3, 10/3, 12/3,
				4/2, 10/2, 11/2, 12/2, 13/2,
				10/1
			} {
				\path [draw=brown!, fill=brown!30] (.5+\x-0.45, .5+\y-1.45)
				-- ++(0,.9)
				-- ++(.9,0)
				-- ++(0,-.9)
				--cycle;
			}
			
			\foreach \x/\y in {
				1/3,1/13,13/13
			} {
				\path [draw=gray!, fill=gray!30] (.5+\x-0.45, .5+\y-1.45)
				-- ++(0,.9)
				-- ++(.9,0)
				-- ++(0,-.9)
				--cycle;
			}
			
			\foreach \x/\y in {
				2/3,2/13,12/13
			} {
				\path [draw=sangria!, fill=sangria!30] (.5+\x-0.45, .5+\y-1.45)
				-- ++(0,.9)
				-- ++(.9,0)
				-- ++(0,-.9)
				--cycle;
			}
			
			\foreach \x/\y in {4/14, 10/14, 5/12, 9/12, 12/6, 5/4, 11/4, 9/3, 4/2, 12/2, 10/1,
				2/3,3/5,4/7,3/9,2/11,1/13,
				6/8,8/8,10/7,11/9,12/11,13/13
			} {
				\node[anchor=center, scale=1.8] at (.5+\x, \y-.5) {\queen};
			}
			
			\draw[line width=1.6pt, black] (1,2) -- (4,2) -- (4,1) -- (5,1) -- (5,3) -- (6,3) -- (6,4) -- (4,4) -- (4,3) -- (1,3) -- cycle; 
			\draw[line width=1.6pt, black] (1,12) -- (4,12) -- (4,11) -- (6,11) -- (6,12) -- (5,12) -- (5,14) -- (4,14) -- (4,13) -- (1,13) -- cycle; 
			
			\draw[line width=1.6pt, black] (10,0) -- (11,0) -- (11,1) -- (14,1) -- (14,2) -- (13,2) -- (13,4) -- (12,4) -- (12,5) -- (13,5) -- (13,6) -- (11,6) -- (11,7) -- (10,7) -- (10,5) -- (11,5) -- (11,3) -- (12,3) -- (12,2) -- (11,2) -- (11,3) -- (9,3) -- (9,2) -- (10,2) -- (10,0) -- cycle;
			
			\draw[line width=1.6pt, black] (14,12) -- (11,12) -- (11,11) -- (9,11) -- (9,12) -- (10,12) -- (10,14) -- (11,14) -- (11,13) -- (14,13) -- cycle; 
			\draw[line width=1.6pt, black] (2,3) -- (3,3) -- (3,4) -- (4,4) -- (4,5) -- (2,5) -- cycle; 
			\draw[line width=1.6pt, black] (3,5) -- (4,5) -- (4,6) -- (5,6) -- (5,7) -- (3,7) -- cycle; 
			\draw[line width=1.6pt, black] (2,10) -- (4,10) -- (4,11) -- (3,11) -- (3,12) -- (2,12) -- cycle; 
			\draw[line width=1.6pt, black] (3,8) -- (5,8) -- (5,9) -- (4,9) -- (4,10) -- (3,10) -- cycle; 
			\draw[line width=1.6pt, black] (13,10) -- (11,10) -- (11,11) -- (12,11) -- (12,12) -- (13,12) -- cycle; 
			\draw[line width=1.6pt, black] (12,8) -- (10,8) -- (10,9) -- (11,9) -- (11,10) -- (12,10) -- cycle; 
			
			\draw[line width=1.6pt, black] (4,7) -- (5,7) -- (5,6) -- (6,6) -- (6,7) -- (7,7) -- (7,8) -- (8,8) -- (8,7) -- (9,7) -- (9,6) -- (10,6) -- (10,7) -- (11,7) -- (11,8) -- (10,8) -- (9,8) -- (9,9) -- (6,9) -- (6,8) -- (4,8) -- cycle;
			\end{tikzpicture}
			&
			\begin{tikzpicture}[scale=.28,transform shape]
			\foreach \x/\y in {
				4/14, 10/14,
				3/13, 4/13, 10/13, 11/13, 
				2/12, 4/12, 5/12, 9/12, 10/12, 12/12,
				2/11, 3/11, 11/11, 12/11,
				3/10, 11/10,
				3/9, 4/9, 6/9, 7/9, 8/9, 10/9, 11/9,
				4/8, 5/8, 6/8, 8/8, 9/8, 10/8,
				3/7, 4/7, 5/7, 9/7, 10/7,
				3/6, 10/6, 11/6, 12/6,
				2/5, 3/5, 11/5,
				2/4, 4/4, 5/4, 11/4, 12/4,
				3/3, 4/3, 9/3, 10/3, 12/3,
				4/2, 10/2, 11/2, 12/2, 13/2,
				10/1
			} {
				\path [draw=brown!, fill=brown!30] (.5+\x-0.45, .5+\y-1.45)
				-- ++(0,.9)
				-- ++(.9,0)
				-- ++(0,-.9)
				--cycle;
			}
			
			\foreach \x/\y in {
				1/3,1/13,13/13
			} {
				\path [draw=gray!, fill=gray!30] (.5+\x-0.45, .5+\y-1.45)
				-- ++(0,.9)
				-- ++(.9,0)
				-- ++(0,-.9)
				--cycle;
			}
			
			\foreach \x/\y in {
				2/3,2/13,12/13
			} {
				\path [draw=sangria!, fill=sangria!30] (.5+\x-0.45, .5+\y-1.45)
				-- ++(0,.9)
				-- ++(.9,0)
				-- ++(0,-.9)
				--cycle;
			}
			
			\foreach \x/\y in {4/14, 10/14, 5/12, 9/12, 12/6, 5/4, 11/4, 9/3, 4/2, 12/2, 10/1,
				2/3,3/5,4/7,3/9,2/11,1/13,
				6/8,8/9,9/7,10/9,11/11,12/13
			} {
				\node[anchor=center, scale=1.8] at (.5+\x, \y-.5) {\queen};
			}
			
			\draw[line width=1.6pt, black] (1,2) -- (4,2) -- (4,1) -- (5,1) -- (5,3) -- (6,3) -- (6,4) -- (4,4) -- (4,3) -- (1,3) -- cycle; 
			\draw[line width=1.6pt, black] (1,12) -- (4,12) -- (4,11) -- (6,11) -- (6,12) -- (5,12) -- (5,14) -- (4,14) -- (4,13) -- (1,13) -- cycle; 
			
			\draw[line width=1.6pt, black] (10,0) -- (11,0) -- (11,1) -- (14,1) -- (14,2) -- (13,2) -- (13,4) -- (12,4) -- (12,5) -- (13,5) -- (13,6) -- (11,6) -- (11,7) -- (10,7) -- (10,5) -- (11,5) -- (11,3) -- (12,3) -- (12,2) -- (11,2) -- (11,3) -- (9,3) -- (9,2) -- (10,2) -- (10,0) -- cycle;
			
			\draw[line width=1.6pt, black] (14,12) -- (11,12) -- (11,11) -- (9,11) -- (9,12) -- (10,12) -- (10,14) -- (11,14) -- (11,13) -- (14,13) -- cycle; 
			\draw[line width=1.6pt, black] (2,3) -- (3,3) -- (3,4) -- (4,4) -- (4,5) -- (2,5) -- cycle; 
			\draw[line width=1.6pt, black] (3,5) -- (4,5) -- (4,6) -- (5,6) -- (5,7) -- (3,7) -- cycle; 
			\draw[line width=1.6pt, black] (2,10) -- (4,10) -- (4,11) -- (3,11) -- (3,12) -- (2,12) -- cycle; 
			\draw[line width=1.6pt, black] (3,8) -- (5,8) -- (5,9) -- (4,9) -- (4,10) -- (3,10) -- cycle; 
			\draw[line width=1.6pt, black] (13,10) -- (11,10) -- (11,11) -- (12,11) -- (12,12) -- (13,12) -- cycle; 
			\draw[line width=1.6pt, black] (12,8) -- (10,8) -- (10,9) -- (11,9) -- (11,10) -- (12,10) -- cycle; 
			
			\draw[line width=1.6pt, black] (4,7) -- (5,7) -- (5,6) -- (6,6) -- (6,7) -- (7,7) -- (7,8) -- (8,8) -- (8,7) -- (9,7) -- (9,6) -- (10,6) -- (10,7) -- (11,7) -- (11,8) -- (10,8) -- (9,8) -- (9,9) -- (6,9) -- (6,8) -- (4,8) -- cycle;
			\end{tikzpicture}
			&
			\begin{tikzpicture}[scale=.28,transform shape]
			\foreach \x/\y in {
				4/14, 10/14,
				3/13, 4/13, 10/13, 11/13, 
				2/12, 4/12, 5/12, 9/12, 10/12, 12/12,
				2/11, 3/11, 11/11, 12/11,
				3/10, 11/10,
				3/9, 4/9, 6/9, 7/9, 8/9, 10/9, 11/9,
				4/8, 5/8, 6/8, 8/8, 9/8, 10/8,
				3/7, 4/7, 5/7, 9/7, 10/7,
				3/6, 10/6, 11/6, 12/6,
				2/5, 3/5, 11/5,
				2/4, 4/4, 5/4, 11/4, 12/4,
				3/3, 4/3, 9/3, 10/3, 12/3,
				4/2, 10/2, 11/2, 12/2, 13/2,
				10/1
			} {
				\path [draw=brown!, fill=brown!30] (.5+\x-0.45, .5+\y-1.45)
				-- ++(0,.9)
				-- ++(.9,0)
				-- ++(0,-.9)
				--cycle;
			}
			
			\foreach \x/\y in {
				1/3,1/13,13/13
			} {
				\path [draw=gray!, fill=gray!30] (.5+\x-0.45, .5+\y-1.45)
				-- ++(0,.9)
				-- ++(.9,0)
				-- ++(0,-.9)
				--cycle;
			}
			
			\foreach \x/\y in {
				2/3,2/13,12/13
			} {
				\path [draw=sangria!, fill=sangria!30] (.5+\x-0.45, .5+\y-1.45)
				-- ++(0,.9)
				-- ++(.9,0)
				-- ++(0,-.9)
				--cycle;
			}
			
			\foreach \x/\y in {4/14, 10/14, 5/12, 9/12, 12/6, 5/4, 11/4, 9/3, 4/2, 12/2, 10/1,
				2/3,3/5,5/7,4/9,3/11,2/13,
				6/9,8/8,10/7,11/9,12/11,13/13
			} {
				\node[anchor=center, scale=1.8] at (.5+\x, \y-.5) {\queen};
			}
			
			\draw[line width=1.6pt, black] (1,2) -- (4,2) -- (4,1) -- (5,1) -- (5,3) -- (6,3) -- (6,4) -- (4,4) -- (4,3) -- (1,3) -- cycle; 
			\draw[line width=1.6pt, black] (1,12) -- (4,12) -- (4,11) -- (6,11) -- (6,12) -- (5,12) -- (5,14) -- (4,14) -- (4,13) -- (1,13) -- cycle; 
			
			\draw[line width=1.6pt, black] (10,0) -- (11,0) -- (11,1) -- (14,1) -- (14,2) -- (13,2) -- (13,4) -- (12,4) -- (12,5) -- (13,5) -- (13,6) -- (11,6) -- (11,7) -- (10,7) -- (10,5) -- (11,5) -- (11,3) -- (12,3) -- (12,2) -- (11,2) -- (11,3) -- (9,3) -- (9,2) -- (10,2) -- (10,0) -- cycle;
			
			\draw[line width=1.6pt, black] (14,12) -- (11,12) -- (11,11) -- (9,11) -- (9,12) -- (10,12) -- (10,14) -- (11,14) -- (11,13) -- (14,13) -- cycle; 
			\draw[line width=1.6pt, black] (2,3) -- (3,3) -- (3,4) -- (4,4) -- (4,5) -- (2,5) -- cycle; 
			\draw[line width=1.6pt, black] (3,5) -- (4,5) -- (4,6) -- (5,6) -- (5,7) -- (3,7) -- cycle; 
			\draw[line width=1.6pt, black] (2,10) -- (4,10) -- (4,11) -- (3,11) -- (3,12) -- (2,12) -- cycle; 
			\draw[line width=1.6pt, black] (3,8) -- (5,8) -- (5,9) -- (4,9) -- (4,10) -- (3,10) -- cycle; 
			\draw[line width=1.6pt, black] (13,10) -- (11,10) -- (11,11) -- (12,11) -- (12,12) -- (13,12) -- cycle; 
			\draw[line width=1.6pt, black] (12,8) -- (10,8) -- (10,9) -- (11,9) -- (11,10) -- (12,10) -- cycle; 
			
			\draw[line width=1.6pt, black] (4,7) -- (5,7) -- (5,6) -- (6,6) -- (6,7) -- (7,7) -- (7,8) -- (8,8) -- (8,7) -- (9,7) -- (9,6) -- (10,6) -- (10,7) -- (11,7) -- (11,8) -- (10,8) -- (9,8) -- (9,9) -- (6,9) -- (6,8) -- (4,8) -- cycle;
			\end{tikzpicture}
			\\
			\texttt{TFF} & \texttt{FTT} & \texttt{FTF} & \texttt{FFT}
		\end{tabular}
		\caption{The 7 maximum queen placements for the 3-clauses $x_a\lor x_b \lor x_c$(convention for reading the clause: $x_a$ is placed bottom left, $x_b$  upper left, and $x_c$ upper right).}\label{table:3clausequeenplacement}
	\end{figure}

	We now give a complete example in Figure~\ref{fig:extranspoly} of the reduction discussed in Section~\ref{sec:reduction} by exhibiting a polyomino for the simple instance $C$ of \psat{} of Figure~\ref{fig:ex_psat}. For easy comparison, we chose the same instance as the examples in~\cite{LRMR25}.

	Note that, in order to keep a reasonably sized example, the polyomino of Figure~\ref{fig:extranspoly} is \emph{not} exactly the construction we describe in Proposition~\ref{prop:InstanceSizePoly}. In particular, we only used the shift gadgets in the dotted box. This enabled us to reduce the size of the polyomino as we did not enlarge the boxes of the grid path version of instance $C$, as in Figure~\ref{fig:path_graph-psat}.
	
	Indeed, the construction imposes some boxes to enable for placing the signal at a convenient spot for joining it with another. Here we reduced the size of the bounding boxes so that it fits the page, and instead used some small \textit{ad hoc} optimisations. The example is also available on the website~\cite{LRM26Software} to see the different maximum queen placements. There are 3 maximum queen placements of 258 queens on the 798-tile polyomino. They correspond to the 3 solutions (\texttt{T,F,T}), (\texttt{T,F,F}), (\texttt{F,T,F}) of the instance $C$ of Figure~\ref{fig:ex_psat}.
	
	\newcommand{\figscale}{0.19}

	

	%

	\begin{figure}[p]
		\centering
		\begin{tikzpicture}[scale=\figscale]
		\path (0,3) pic{var} node[above=3pt]{$x_1$};
		\path (1,48) pic[rotate=180,xscale=-1]{var} node[above]{$x_3$};
		\path (5,41) pic[rotate=0,yscale=-1]{splitter};
		\path (12,22) pic {threeclause} node[above left]{$c_1$};
		\path (4,10) pic {splittersmall};
		\path (11,20) pic {ext};
		\path (12,38) pic[rotate=270, xscale=-1]{splitter} ;
		\path (13,49) pic[xscale=-1] {turner};
		\foreach \x/\y in {24/33,25/35} 
		{
		}
		\path (36+10-8+5,36+8+3) pic[xscale=-1] {splitter};
		\path (41+10-8+2,30+8+6) pic[xscale=-1]{var} node[below]{$x_2$};
		\path (35-11,25+14) pic[xscale=-1,rotate = 180] {turner};
		\path (36,36+14) pic[xscale=-1,rotate=270] {splitter};
		\path (35-7,39+5) pic[rotate = 90, xscale=-1] {splitter};
		\path (14,57) pic{inverter}; 
		\path (15,85) pic{splitter};
		\foreach \x/\y in {27/84,29/83,31/82,33/81,35/80,37/79} 
		{
			\path (\x-1,\y+2) pic[rotate=270]{ext};
		}, 11/4
		\path (36,90-9) pic[rotate=270]{splitter}; 
		\path (15,65) pic{turner};
		\path (13,75) pic[xscale=-1]{turner};
		\path (38,53) pic[xscale=-1]{inverter}; 
		\path (37,61) pic[xscale=-1] {twoclause} node[above right]{$c_3$};
		\path (33,43) pic[rotate=180,xscale=-1] {inverter}; 
		\path (35,24) pic{twoclause} node[above left] {$c_2$};
		\foreach \x/\y in {
		} 
		{
			\path (\x,\y+2) pic[rotate=270]{ext};
		}
		\path (7,15) pic[rotate=270]{inverter};
		\path (14,16) pic [rotate=270] {forwardshift};
		\path (37-1+2,31-2) pic[rotate=180] {turner};
		\path (34+2+2,11) pic[rotate=270,xscale=-1] {splitterverysmall};
		\path (36,10) pic {rightshift};
		\path (28,8) pic [rotate=270]{turner};
		\path (21,93) pic {splittersmall};
		\foreach \x/\y in {30/94,32/93,34/92,36/91,38/90,40/89,42/88,44/87,46/86,48/85,
			50/84,
			52/83,54/82,56/81,58/80,60/79,
			62/78,64/77,66/76,68/75,70/74} 
		{
			\path (\x,\y+1) pic[rotate=270]{ext};
		}
		\path (72,74) pic[rotate=270] {splittersmall};
		\foreach \x/\y in {73/64,72/62,71/60,70/58,69/56,68/54,67/52,66/50,65/48
		} 
		{
			\path (\x+1,\y+1) pic[rotate=180]{ext};
		}
		\path (50-2,19-1) pic[xscale=-1,rotate=90] {inverter};
		\path (62-2,22-1) pic[xscale=-1,rotate=90] {splitter};
		\path (62+6,22+5) pic[xscale=-1,rotate=90] {splitter};
		\path (69,38) pic[xscale=-1] {ext};
		\path (68,40) pic[xscale=-1] {ext};
		\path (66+1,42) pic[xscale=-1] {twoclause} node[above right]{$c_4$};
		\draw [dotted, thick, gray] (15,3.5) rectangle (41,14.5);
		\end{tikzpicture}
		\caption{A translation of the instance of \psat{} from Figure~\ref{fig:ex_psat}. There are 3 maximum placements of 258 independent queens that guard this 798-polyomino, corresponding to the 3 solutions ($(x_1,x_2,x_3) \in \{(\texttt{T,F,T}), (\texttt{T,F,F}), (\texttt{F,T,F})\}$).
		}\label{fig:extranspoly}
	\end{figure}

\end{document}